\title{Popularity on the Roommate Diversity Problem}
\author{Steven {Ge}}{Department of Mathematical and Computing Science, Tokyo Institute of Technology, Japan \and \url{https://ge-steven.github.io/} }{ge.s.aa@m.titech.ac.jp}{https://orcid.org/0000-0001-5073-748X}{}
\author{Toshiya {Itoh}}{Department of Mathematical and Computing Science, Tokyo Institute of Technology, Japan}{titoh@c.titech.ac.jp}{https://orcid.org/0000-0002-1149-7046}{}
\authorrunning{S. Ge and T. Itoh} 
\keywords{Stable marriage problem, Stable roommates problem, Stable matching, Popularity, Dichotomous preferences, Trichotomous preferences, Coalition formation, Algorithms, Co-NP-hard}
\begin{document}

\maketitle

\begin{abstract}
A recently introduced restricted variant of the multidimensional stable roommate problem is the roommate diversity problem\cite{10.5555/3491440.3491454}: each agent belongs to one of two types (e.g., red and blue), and the agents' preferences over the coalitions solely depend on the fraction of agents of their own type among their roommates. A roommate diversity game is represented by the quadruple $G = (R,B,s, (\succsim_a)_{a \in R\cup B})$, where $R$ and $B$ represent the red and blue agents respectively, $s \in \mathbb{N}^+$ denotes the room size, and preference relation $\succsim_a$ for $a \in R\cup B$ is a weak order over $\{\frac{j}{s} | j \in [0,s]\}$. 

There are various notions of stability that defines an optimal partitioning of agents. The notion of popularity has received a lot of attention recently. A partitioning of agents is popular if there does not exist another partitioning in which more agents are better off than worse off. Computing a popular partition in a stable roommate game can be done in polynomial time\cite{IRVING1985577}. When we allow ties the stable roommate problem becomes NP-complete\cite{RONN1990285}. Determining the existence of a popular solution in the multidimensional stable roommate problem is also NP-hard\cite{doi:10.1137/0404023}.

We show that in the roommate diversity problem with $s=2$ fixed, the problem becomes tractable. Particularly, a popular partitioning of agents is guaranteed to exist and can be computed in polynomial time. Additionally a mixed popular partitioning of agents is always guaranteed to exist in any roommate diversity game. By contrast, when there are no restrictions on the coalition size of a roommate diversity game, a popular partitioning may fail to exist and the problem becomes intractable. Our intractability results are summarized as follows:
\begin{itemize}
	\item Determining the existence of a popular partitioning is co-NP-hard, even if the agents' preferences are trichotomous.
	\item Determining the existence of a strictly popular partitioning is co-NP-hard, even if the agents' preferences are dichotomous.
	\item Computing a mixed popular partitioning of agents in polynomial time is impossible unless P=NP, even if the agents' preferences are dichotomous.
\end{itemize}
\end{abstract}

\newpage
\section{Introduction}
The formation of stable coalitions in multi-agents systems is a computational problem that has several variations with different conditions on the coalition size. A recently introduced restricted variant of the multidimensional stable roommate problem is the roommate diversity problem\cite{10.5555/3491440.3491454}: each agents belongs to one of two types (e.g., red and blue), and the agents' preferences over the coalitions solely depend on the fraction of agents of their own type among their roommates. This model captures important aspects of several real-world coalition formation scenarios, such as flat-sharing, seating arrangements at events, and splitting students into teams for group projects. In the latter scenario, as local student may have difficulty communicating with international students and vice versa, the preference of a student may solely rely on the fraction of international group members.

There are various notions of stability that defines an optimal partitioning of agents. The notion of popularity has received a lot of attention recently. A partitioning of agents is popular if there does not exist another partitioning in which more agents are better off than worse off. Computing a popular partition in a stable roommate game can be done in polynomial time\cite{IRVING1985577}. When we allow ties the stable roommate problem becomes NP-complete\cite{RONN1990285}. Determining the existence of a popular solution in the multidimensional stable roommate problem is also NP-hard\cite{doi:10.1137/0404023}.

The roommate diversity problem has been studied with the stability notions envy-freeness, Pareto optimality, exchange stability, and core stability. For the majority of these stability notions computing a stable partitioning of a roommate diversity game is NP-hard. On the other hand the problem is FPT when using the room size as parameter\cite{10.5555/3491440.3491454}.

\subsection{Our results}
We show that in the roommate diversity problem with the room size fixed to 2, the problem becomes tractable. Particularly, a popular partitioning of agents is guaranteed to exist and can be computed in polynomial time. Additionally a mixed popular partitioning of agents is always guaranteed to exist in any roommate diversity game. By contrast, when there are no restrictions on the coalition size of a roommate diversity game, a popular partitioning may fail to exist and the problem becomes intractable. Our intractability results are summarized as follows:
\begin{itemize}
	\item Determining the existence of a popular partitioning is co-NP-hard, even if the agents' preferences are trichotomous.
	\item Determining the existence of a strictly popular partitioning is co-NP-hard, even if the agents' preferences are dichotomous.
	\item Computing a mixed popular partitioning of agents in polynomial time is impossible unless P=NP, even if the agents' preferences are dichotomous.
\end{itemize}
\section{Preliminaries}\label{prelim}
In this work, we slightly extend the notation and definitions related to the roommate diversity problem used in the work of Boehmer and Elkind\cite{10.5555/3491440.3491454}. Additionally, we also use the notation and definitions related to popularity used in the work of Brandt and Bullinger\cite{10.1613/jair.1.13470}. In order to ensure that this paper is self-contained, we shall include the existing definitions in this section. For our hardness results, we construct reductions from the Exact Cover by 3-Sets problem. We abbreviate Exact Cover by 3-Sets with X3C. These reductions can be performed in polynomial time and the size of the resulting roommate diversity game is polynomial in the size of the X3C instance.

\subsection{Roommate Diversity Problem}
For $s \in \mathbb{N}, t \in \mathbb{N}^+$ let us define the integer set $[t] = \{1, \dots, t\}$ and $[s,t] = \{s, \dots, t\}$.

\begin{definition}
	A roommate diversity game is a quadruple $G = (R,B,s, (\succsim_a)_{a \in R\cup B})$ with room size $s$ and agent set $N = R \cup B$ where $|N| = k \cdot s$ for some $k \in \mathbb{N}$. The preference relation $\succsim_a$ of each agent $a \in N$ is a complete and transitive weak order over the set $D = \{\frac{j}{s} | j \in [0,s]\}$.
\end{definition}

We call the agents in $R$ red agents and the agents in $B$ blue agents. A $s$-sized subset of $N$ is called a coalition or room and the number of rooms is $k = \frac{|N|}{s}$. For an agent $a \in N$ let $\mathcal{N}_a = \{S \subseteq N | |S| = s, a \in S\}$ denote every possible room that contains $a$.

An outcome $\pi$ of $G$ is a partition of the agents $N$ into $k$ rooms of size $s$, i.e., $\pi = \{C_1, \dots, C_k\}$ such that $|C_i| = s$ for each $i \in [k]$, $\bigcup\limits_{i = 1}^{k}C_i = N$, and for $1 \leq i < j \leq k$ we have $C_i \cap C_j = \emptyset$. Let $\pi(a)$ denote the room in $\pi$ that contains agent $a \in N$, i.e., for $a\in N$, we have $\pi(a) = C_i$ where $i \in [k]$ such that $a \in C_i$. For a room $C \subseteq N$, let $\theta(C)$ denote the fraction of red agents in $C$, i.e., $\theta(C) = \frac{|C \cap R|}{|C|}$. We say that $C$ has/is of fraction $\theta(C)$ or the fraction of $C$ is $\theta(C)$.

For an agent $a \in N$, the preference relation $\succsim_a$ are the preferences of $a$ over the fraction of red agents in its room. For example, $\frac{2}{s} \succsim_a \frac{3}{s}$ means that agent $a$ likes being in a room with 2 red agents at least as much as being in a room with $3$ red agents. Note that a red agent cannot be in a room with fraction $\frac{0}{s}$ and a blue agent cannot be in a room with fraction $\frac{s}{s}$. Thus discarding these `impossible' fractions in their preference relation does not impact our results. Given two rooms $S, T \in \mathcal{N}_a$, we overload the notation by writing $S \succ_a T$, and say that agent $a$ strictly prefers $S$ over $T$ if $\theta(S) \succsim_a \theta(T)$ and $\theta(T) \not\succsim_a \theta(S)$. Additionally, we write $S \succsim_a T$ and say that agent $a$  weakly prefers $S$ over $T$ if $\theta(S) \succsim_a \theta(T)$. If agent $a$ weakly prefers $S$ over $T$ and $T$ over $S$, we write $S \sim_a T$ and say that $a$ is indifferent between $S$ and $T$. We similarly overload the notation for outcomes $\pi, \pi'$ by writing $\pi \succ_a \pi'$ iff $\pi(a) \succ_a \pi'(a)$, $\pi \succsim_a \pi'$ iff $\pi(a) \succsim_a \pi'(a)$,  and $\pi \sim_a \pi'$ iff $\pi(a) \sim_a \pi'(a)$.

%There preference relation of an agent $a$ is said to be single-peaked if there exists a peak $p_a \in D$ such that for all $\alpha, \beta \in D$ such that $p_a \leq \alpha < \beta$ or $\beta < \alpha \leq p_a$ it holds that $\alpha \succsim_a \beta$. 

The preference relation of agent $a$ is said to be dichotomous if there exists a partition of $D$ into two sets $D^+_a$ and $D^-_a$ such that for all $d^+\in D^+_a$, $d^- \in D^-_a$ it holds that $d^+ \succ_a d^-$, for all $d^+_1, d^+_2\in D^+_a$ it holds that $d^+_1 \sim_a d^+_2$, and for all $d^-_1, d^-_2\in D^-_a$ it holds that $d^-_1 \sim_a d^-_2$. We say that agent $a$ approves of the fractions in $D^+_a$ and disapproves of the fractions in $D^-_a$.

The preference relation of agent $a$ is said to be trichotomous if there exists a partition of $D$ into three sets $D^+_a$, $D^n_a$, and $D^-_a$ such that for all $d^+\in D^+_a$, $d^n \in D^n_a$, and $d^- \in D^-_a$ it holds that $d^+ \succ_a d^n$ and $d^n \succ_a d^-$. Additionally, for all $d^+_1, d^+_2\in D^+_a$ it holds that $d^+_1 \sim_a d^+_2$, for all $d^n_1, d^n_2\in D^n_a$ it holds that $d^n_1 \sim_a d^n_2$, and for all $d^-_1, d^-_2\in D^-_a$ it holds that $d^-_1 \sim_a d^-_2$. We say that agent $a$ approves of the fractions in $D^+_a$, is neutral about the fraction in $D^n_a$, and disapproves of the fractions in $D^-_a$.

For an outcome $\pi$ let $D_\pi^+$ denote the agents that approve of the room it is assigned in $\pi$, i.e., $D_\pi^+ = \{a \in R \cup B | \theta(\pi(a)) \in D_a^+\}$. We define $D_\pi^n$ and $D_\pi^-$ analogously.

\subsection{Popularity}
For outcomes $\pi, \pi'$ let $N(\pi, \pi')$ be the set of agents who prefer $\pi$ over $\pi'$, i.e., $N(\pi,\pi') = \{a \in N | \pi \succ_a \pi'\}$. For any room $S \subseteq N$, where $|S|=s$, and outcomes $\pi, \pi'$ let $\phi_S(\pi,\pi') = |N(\pi, \pi') \cap S| - |N(\pi',\pi) \cap S|$. We call $\phi_S(\pi,\pi')$ the popularity margin on $S$ with respect to $\pi$ and $\pi'$. We define the popularity margin of $\pi$ and $\pi'$ as $\phi(\pi, \pi') = \phi_N(\pi, \pi')$. 

An outcome $\pi$ is more popular than outcome $\pi'$ if $\phi(\pi,\pi') > 0$. An outcome $\pi$ is popular if for any outcome $\pi'$ we have $\phi(\pi,\pi') \geq 0$, i.e., no outcome is more popular than $\pi$. An outcome $\pi$ is called strongly popular if for any other outcome $\pi'\neq \pi$ we have $\phi(\pi,\pi')>0$, i.e., $\pi$ is more popular than any other outcome. Note that there can be at most one strongly popular outcome.

We define a mixed outcome $p = \{(\pi_1,p_1), \dots, (\pi_t,p_t)\}$ to be a set of pairs, where for each $i \in [t]$, $\pi_i$ is an outcome of a roommate diversity game and $(p_1, \dots, p_t)$ is a probability distribution. For mixed outcome $p = \{(\pi_1,p_1), \dots, (\pi_t,p_t)\}$ and $q = \{(\sigma_1,q_1), \dots, (\sigma_u,q_u)\}$, we define the popularity margin of $p$ and $q$ to be their expected popularity margin, i.e.,
$$\phi(p,q) = \sum\limits_{i = 1}^{t}\sum\limits_{j = 1}^{u}p_iq_j\phi(\pi_i,\sigma_j).$$
A mixed outcome $p$ is popular if for any mixed outcome $q$ we have $\phi(p,q)\geq 0$.

\subsection{Exact Cover by 3-Sets Problem}
For our reductions, we use the exact cover by 3-set problem, which is known to be NP-complete\cite{10.5555/574848}. Let $X = \{1, \dots, m\} = [m]$, where $m \in \mathbb{N}^+$, and let $C = \{A_1, \dots, A_q\}$ be a collection of 3-element subsets of $X$, i.e., for each $i \in [q]$ we have $A_i \subseteq X$ and $|A_i| = 3$. An instance of the X3C problem is a tuple $(X,C)$ and asks: does there exist a subset $C' \subseteq C$ such that every element of $X$ occurs in exactly one member of $C'$? 
%That is, does there exists a subset $C' \subseteq C$ such that$$\bigcup\limits_{S \in C'}S = X \wedge \forall_{S,S' \in C', S \neq S'}: S \cap S' = \emptyset.$$
We call such a $C'$ a solution of $(X,C)$.

We require the following definition for our reduction. For $i \in X$, let $J^i = \{j^i_1, \dots, j^i_{m_i}\}$ be the set of indices of the sets in $C$ that contain $i$, i.e. $j \in J^i \iff i \in A_j$ (or equivalently $J^i = \{j \in [q] | i \in A_j\}$).
\section{Room Size Two}
For a roommate diversity game $G = (R,B, 2, (\succsim_a)_{a \in R\cup B})$ with room size 2, a room $S \subseteq R\cup B$, can have exactly 3 possible fraction, i.e., $\theta(S) \in \{\frac{0}{2}, \frac{1}{2}, \frac{2}{2}\}$. Let us call a room with fraction $\frac{0}{2}$ or $\frac{2}{2}$ a pure blue or pure red room respectively. A room with fraction $\frac{1}{2}$ we call a mixed room. 

As mentioned in \cref{prelim}, we can discard the `impossible' fractions from the preference relation of the agents. Thus the only relevant fractions for a red agent are $\frac{1}{2}$ and $\frac{2}{2}$ and the only relevant fractions for a blue agent are $\frac{0}{2}$ and $\frac{1}{2}$. Let us call an agent that is in a room with one of their most preferred fraction happy. Otherwise we call the agent sad. That is, an agent $a \in R \cup B$ is happy in outcome $\pi$ if for each $f\in \{\frac{0}{2}, \frac{1}{2}, \frac{2}{2}\}$ we have $\theta(\pi(a)) \succsim_a f$. An agent $a \in R \cup B$ is sad in outcome $\pi$ if there exists $f\in \{\frac{0}{2}, \frac{1}{2}, \frac{2}{2}\}$ such that $\theta(\pi(a)) \prec_a f$.

A red agent $r$ can only have one of 3 possible preference relation $\frac{1}{2}\succ_r \frac{2}{2}$, $\frac{1}{2}\prec_r \frac{2}{2}$, or $\frac{1}{2}\sim_r \frac{2}{2}$. We call a red agent $r$ with preference relation $\frac{1}{2}\succ_r \frac{2}{2}$, $\frac{1}{2}\prec_r \frac{2}{2}$, or $\frac{1}{2}\sim_r \frac{2}{2}$ a mixed, pure, or indifferent red agent respectively. We define mixed, pure, and indifferent blue agents using fraction $\frac{0}{2}$ and $\frac{1}{2}$ analogously.

Let us define the set of pure red agents $R^p = \{r \in R | \frac{2}{2}\succ_r \frac{1}{2}\}$, the set of mixed red agents $R^{m} = \{r \in R | \frac{2}{2}\prec_r \frac{1}{2}\}$, and the set of indifferent red agents $R^i = \{r \in R | \frac{1}{2}\sim_r \frac{2}{2}\}$. We define the set of pure blue agents $B^p$, the set of mixed blue agents $B^{m}$, and the set of indifferent blue agents $B^i$ analogously. Note that $R = R^p \cup R^{m} \cup R^i$ and $B = B^p \cup B^{m} \cup B^i$.

We show that a popular outcome is guaranteed to exist in roommate diversity game $G$ and can be computed in polynomial time by reducing $G$ to the maximum weight perfect matching problem.

Let us define an undirected weighted clique graph $G_m = (R \cup B, E)$ where for each pair of distinct agents $a, b \in R\cup B$, we use
$w(a, b)$ to denote the weight of an edge $(a, b) \in E$, where
$$w(a,b) = \begin{cases}
2 & (a, b) \text{ is a pair with 2 happy agents};\\
1 & (a, b) \text{ is a pair with exacly 1 happy agent}; \\
0 & (a, b) \text{ is a pair with exacly 0 happy agents}.
\end{cases}$$
Since the room size is 2, we have that $|R \cup B|$ is even. For the weighted graph $G_m$, let $w(M)$ be the weight of a perfect matching $M \subseteq E$. Given a perfect matching $M$ of $G_m$, we define a
point $p_M(a)$ of an agent $a \in R \cup B$ as follows: For each $(a, b) \in M$,
\begin{enumerate}
	\item if $w(a, b) = 2$, then $p_M(a) = p_M(b) = 1$;
	\item \begin{enumerate}
		\item if $w(a, b) = 1$ where $a$ is happy and $b$ is sad, then $p_M(a) = 1$ and $p_M(b) = 0$;
		\item if $w(a, b) = 1$ where $a$ is sad and $b$ is happy, then $p_M(a) = 0$ and $p_M(b) = 1$;
	\end{enumerate}
	\item if $w(a, b) = 0$, then $p_M(a) = p_M(b) = 0$.
\end{enumerate}
We have that for any perfect matching $M$ of $G_m$,
\begin{align}
	w(M) = \sum\limits_{(a,b)\in M}w(a,b) = \sum\limits_{a \in R\cup B}p_M(a).\label{weigheq}
\end{align}

\begin{lemma}\label{roomsize2lem}
	For the weighted graph $G_m = (R \cup B, E)$ as defined above, the maximum weight perfect matching $M_*$ of G is popular.
\end{lemma}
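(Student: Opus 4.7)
The plan is to show that the popularity margin $\phi(\pi,\pi')$ between any two outcomes is precisely the difference $w(M_\pi) - w(M_{\pi'})$ of the weights of their corresponding perfect matchings in $G_m$, so that maximizing $w$ automatically yields a popular outcome.

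First, I would observe that outcomes $\pi$ of $G$ are in one-to-one correspondence with perfect matchings of the complete graph on $R \cup B$: a pair $(a,b) \in M$ corresponds to the room $\{a,b\} \in \pi$. Let $M_\pi$ denote the matching associated with $\pi$, and let $p_\pi(a) := p_{M_\pi}(a)$, so by construction $p_\pi(a) = 1$ iff $a$ is happy in $\pi$. By \eqref{weigheq}, $w(M_\pi) = \sum_{a \in R \cup B} p_\pi(a)$ equals the total number of happy agents under $\pi$.

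Next, I would establish the key observation that, because room size is $2$, the only relevant fractions for a red (resp.\ blue) agent are $\frac{1}{2}$ and $\frac{2}{2}$ (resp.\ $\frac{0}{2}$ and $\frac{1}{2}$). Hence each agent's preference relation partitions the two relevant fractions into ``happy'' and ``sad'' outcomes, with indifferent agents being happy in every room. Consequently, for any two outcomes $\pi, \pi'$ and any agent $a$, we have $\pi \succ_a \pi'$ if and only if $a$ is happy in $\pi$ and sad in $\pi'$, i.e.\ $p_\pi(a) - p_{\pi'}(a) = 1$. Summing over $a \in R \cup B$ yields
\[
\phi(\pi,\pi') \;=\; |N(\pi,\pi')| - |N(\pi',\pi)| \;=\; \sum_{a \in R \cup B} \bigl(p_\pi(a) - p_{\pi'}(a)\bigr) \;=\; w(M_\pi) - w(M_{\pi'}).
\]

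Finally, let $\pi_*$ be the outcome corresponding to $M_*$. For any outcome $\pi'$, maximality of $M_*$ gives $w(M_*) \geq w(M_{\pi'})$, so $\phi(\pi_*, \pi') \geq 0$, proving that $\pi_*$ is popular.

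The main subtlety (rather than an obstacle) is the case analysis establishing $\pi \succ_a \pi' \iff (p_\pi(a), p_{\pi'}(a)) = (1,0)$: one has to check that indifferent agents (always happy) and agents who happen to be sad in both $\pi$ and $\pi'$ contribute $0$ to $\phi$, which is immediate once the happy/sad dichotomy is formalized. This reduction works specifically because $s = 2$ collapses each agent's preference to a binary choice; for larger $s$ the preference is no longer two-valued and a single weight-per-edge cannot capture it.
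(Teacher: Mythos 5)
Your proposal is correct and follows essentially the same route as the paper: both arguments rest on the observation that for $s=2$ an agent strictly prefers $\pi$ to $\pi'$ exactly when it is happy in $\pi$ and sad in $\pi'$, so that $\phi(\pi,\pi')$ equals the difference in the number of happy agents, which by the weight identity equals $w(M_\pi)-w(M_{\pi'})\geq 0$ when $M_\pi$ is the maximum weight perfect matching. The paper phrases this via the sets $H_{M_*}$ and $H_M$ of happy agents rather than as a sum of points $p_M(a)$ over agents, but the content is identical.
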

\begin{proof}
	Fix an arbitrary perfect matching $M$ of $G_m$. It is immediate that $w(M_*) \geq w(M)$. For the maximum weight perfect matching $M_*$ of $G_m$, we define $H_{M_*}$ and $S_{M_*}$ as
	\begin{align*}
		& H_{M_*} = \{a \in R \cup B | p_{M_*}(a) = 1\}; &
		S_{M_*} = \{a \in R \cup B | p_{M_*}(a) = 0\},
	\end{align*}
	and for the perfect matching $M$ of $G_m$, we also define $H_M$ and $S_M$ by
	\begin{align*}
		& H_{M} = \{a \in R \cup B | p_{M}(a) = 1\}; &
		S_{M} = \{a \in R \cup B | p_{M}(a) = 0\}.
	\end{align*}
	From \cref{weigheq} and the fact that $w(M_*) \geq w(M)$, it follows that
	\begin{align}
		|H_{M_*}| = w(M_*) \geq w(M) = |H_M|. \label{hweight}
	\end{align}
	We have that $\phi(M_*, M) = |H_{M_*}|-|H_{M_*} \cap H_M|$ and $\phi(M, M_*) = |H_M|-|H_{M_*} \cap H_M|$. Thus
	from \cref{hweight}, we have that $\phi(M_*, M) - \phi(M, M_*) = |H_{M_*}| - |H_M| \geq 0$.
\end{proof}
\begin{theorem}
	Let $G = (R,B, 2, (\succsim_a)_{a \in R\cup B})$ be a roommate diversity game with room size 2. We can find a popular outcome $\pi$ in polynomial time.
\end{theorem}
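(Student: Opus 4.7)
The plan is to turn \cref{roomsize2lem} into an algorithm by combining it with a standard polynomial-time procedure for maximum weight perfect matching. First I would verify that the graph $G_m = (R \cup B, E)$ can be built in polynomial time: the vertex set has size $|R \cup B|$, the edge set has size $\binom{|R\cup B|}{2}$, and for each pair $(a,b)$ computing $w(a,b) \in \{0,1,2\}$ only requires checking, from the (constant-size) preference relation $\succsim_a$ and $\succsim_b$, whether $\theta(\{a,b\})$ is a most preferred fraction for each endpoint. Hence $G_m$ is constructed in time polynomial in $|R \cup B|$.

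Next I would compute a maximum weight perfect matching $M_\ast$ of $G_m$. Since $|R\cup B|$ is even (because $|N| = k \cdot s$ with $s=2$) and $G_m$ is a complete graph, a perfect matching exists; and since edge weights are bounded integers, Edmonds' blossom algorithm finds such an $M_\ast$ in polynomial time. I would then define the outcome $\pi$ directly from $M_\ast$ by letting each matched pair $(a,b) \in M_\ast$ form one of the rooms $C_i$. This gives a valid partition of $N$ into rooms of size $2$.

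Finally, I would invoke \cref{roomsize2lem} to conclude that the outcome $\pi$ corresponding to $M_\ast$ is popular. Strictly speaking, \cref{roomsize2lem} is phrased as a comparison between matchings using popularity margins $\phi(M_\ast, M)$; I would note that since every outcome of $G$ is in one-to-one correspondence with a perfect matching of $G_m$ (rooms of size $2$ are exactly edges on $R\cup B$), the popularity of $M_\ast$ as a matching transfers to popularity of $\pi$ as an outcome, with respect to every competing outcome $\pi'$.

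There is no real mathematical obstacle here; the only point that deserves a sentence of care is that the bijection between outcomes of $G$ and perfect matchings of $G_m$ is weight-preserving in the sense used in \cref{roomsize2lem}, so the popularity conclusion of the lemma carries over without modification. The overall running time is dominated by the maximum weight perfect matching computation, which is polynomial in $|R \cup B|$, and hence polynomial in the size of $G$.
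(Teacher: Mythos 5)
Your proposal is correct and follows essentially the same route as the paper: construct the weighted graph $G_m$, compute a maximum weight perfect matching with a standard polynomial-time algorithm, and apply \cref{roomsize2lem} to conclude popularity of the induced outcome. The extra remark about the weight-preserving bijection between outcomes and perfect matchings is a detail the paper leaves implicit but is the same argument.
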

\begin{proof}
	We already know that for any (integer) weighted $G_m$, a maximum weight perfect matching can be found in polynomial time\cite{10.1145/3155301}. Thus, using the existing polynomial time algorithm and \cref{roomsize2lem}, for roommate diversity problem with room size 2, a popular outcome always exists and it can be computed in polynomial time.
\end{proof}
\section{Strict Popularity}\label{strictpop}
In this section we show that determining the existence of a strictly popular outcome in a roommate diversity game is co-NP-hard. We construct a roommate diversity game $G = (R, B, s, (\succ_a)_{a \in R\cup B})$ with dichotomous preferences from an X3C instance $(X,C)$ in such that there exists a solution $C' \subseteq C$ that partitions $X$ if and only if there exist no popular outcome for $G$.

\subsection{Roommate Diversity Game}\label{mixpoprdgame}
We set the room size $s = 5(q+1)+1+m = 5q+6+m$. Note that $\frac{5(q+1)+1+m}{s} = 1$. The agents and their preference profile is defined as follows.
\begin{table}[h!]
		\begin{tabular}{  r  l  l }
			Set Agents & $R^{set} = \{r_i | i \in X\}= \{r_1, \dots, r_m\}$ & \\
			Redundant Agents & $R^{red}_j = \{r_j^1, \dots, r_j^{5j-2}\}$ & for $j \in [q]$ \\
			 & $R^{red} = R^{red}_1 \cup \dots \cup R^{red}_q$ &  \\
		 	Monolith Agents & $R^{mon} = \{r^1_{mon}, \dots, r^{5(q+1)+1}_{mon}\}$ &
		\end{tabular}
	\caption{Set of red agents $R = R^{set} \cup R^{mon} \cup R^{red}$.}
		\begin{tabular}{  r  l  l }
			Filling Agents & $B^{fill}_j = \{b^1_j, \dots, b^{s-(5j-2)-3}_j\}$ & for $j \in [q]$ \\
		 	& $B^{fill} = B^{fill}_1 \cup \dots \cup B^{fill}_q$ &  \\
			Additional Agents  & $B^{add}_j = \{\tilde{b}^1_j, \tilde{b}^2_j, \tilde{b}^3_j\}$ &  for $j \in [q]$ \\
			 & $B^{add} = B^{add}_1 \cup \dots \cup B^{add}_q$  & \\
			Monolith Agents & $B^{mon} =\{b_{mon}^1, \dots, b_{mon}^{s-(5(q+1)+1)}\}$ & \\
			Evening Agents & $B^{even} = \{b_{even}^1, \dots, b_{even}^{5(q+1)+1}\}$ & 
		\end{tabular}
	\caption{Set of blue agents $B = B^{even} \cup B^{mon} \cup B^{add} \cup B^{fill}$.}
	\begin{center}
				\begin{tabular}{ | l | l | l | l | }
					\hline
					 \multirow{2}{*}{Agent} & \multicolumn{2}{c|}{Preference Profile} & \multirow{2}{*}{} \\\cline{2-3}
					& $D_a^+$ & $D_a^-$ &  \\\hline
					$a = r_i \in R^{set}$ & $\{\frac{5j_1^i+1}{s}, \dots, \frac{5j_{m_i}^i+1}{s}\}  \cup \{\frac{5(q+1)+1+m}{s}\}$ & $D \setminus D_a^+$  & \\  
					$a = r_j^p \in R^{red}_j$ & $\{\frac{5j+1}{s}, \frac{5j-2}{s}\}$ & $D \setminus D_a^+$ & $j \in [q]$ \\
					$a = r_{mon}^{p} \in R^{mon}$ & $\{\frac{5(q+1)+1+m}{s}, \frac{5(q+1)+1}{s}\}$ & $D \setminus D_a^+$ & \\\hline\hline
					$a = b_j^p \in B_j^{fill}$ & $\{\frac{5j+1}{s}, \frac{5j-2}{s}\}$ & $D \setminus D_a^+$ & $j \in [q]$  \\
					$a = \tilde{b}_j^p \in B_j^{add}$ & $\{\frac{5j-2}{s}, 0\}$ & $D \setminus D_a^+$ & $j \in [q]$  \\
					$a = b_{mon}^p \in B^{mon}$ & $\{\frac{5(q+1)+1}{s}, 0\}$ & $D \setminus D_a^+$ & \\
					$a = b_{even}^p \in B^{even}$ & $\{0\}$ & $D \setminus D_a^+$ &  \\\hline
			\end{tabular}
		\caption{Preference profile $(\succsim_a)_a \in R \cup B$.}
	\end{center}
\end{table}
\subsection{Predefined Outcomes}\label{strictpredef}
We define the monolithic outcome $\pi_{mon}$ and for a solution $C' \subseteq C$ for $(X,C)$ the reduced outcome $\pi_{C'}$. In these outcomes every agent is in a room with a fraction that it approves of. Observe that an outcome $\pi$ in which every agent is in a room with a fraction that it approves of is popular, i.e., for any agent $a \in R\cup B$, if $\theta(\pi(a)) \in D_a^+$, then $\pi$ is popular.

\subsubsection{Monolithic Outcome}\label{strictpredef1}
First we define the rooms that contain red agents. Let
\begin{align*}
	& P_j = B_j^{add} \cup R^{red}_j \cup B^{fill}_j \text{ for } j \in [q]; & P_{q+1} = R^{set} \cup R^{mon}.
\end{align*}
Let $\pi_R = \{P_j | j \in [q+1]\}$.
Additionally, let us define the set $B_r$ of remaining blue agents as the blue agents that are not contained in any room $P_j$. That is,
$$B_r = B \setminus \bigcup\limits_{j=1}^{q+1} P_j =  B^{mon} \cup B^{even}.$$
Note that $|B^{mon}| + |B^{even}| = s-(5(q+1)+1) + (5(q+1)+1) = s$. We define $\pi_B$ to be the partition that only contains the room $B_r$. Finally, we define the monolithic outcome to be $\pi_{mon} = \pi_R \cup \pi_B$. Note that every agent is assigned a room by $\pi_{mon}$ with a fraction that it approves of, i.e., for every agent $a \in R\cup B$ we have $\theta(\pi_{mon}(a)) \in D_a^+$. Additionally, a monolithic outcome always exists.

\subsubsection{Reduced Outcome}\label{strictpredef2}
Let the solution $C' \subseteq C$ partition $X$. For every 3-element subset $A_j \in C$, we define the rooms that contain red agents.
\begin{align*}
	& P'_j = \begin{cases}
		\{r_i \in R^{set} | i \in A_j\} \cup R^{red}_j \cup B^{fill}_j & \text{, if } A_j \in C' \\
		B^{add}_j \cup R^{red}_j \cup B^{fill}_j & \text{, if } A_j \notin C'
	\end{cases} \text{ for } j \in [q];\\
	& P'_{q+1} = R^{mon} \cup B^{mon}.
\end{align*}
Let $\pi_R = \{P'_j | j \in [q+1]\}$.
Additionally, let us define the set $B_r$ of remaining blue agents as the blue agents that are not contained in any $P'_j$. That is,
$$B_r = B \setminus \bigcup\limits_{j=1}^{q+1} P'_j = B^{even} \cup \bigcup\limits_{A_j \in C'} B_j^{add}.$$
Since $|\bigcup\limits_{A_j \in C'} B_j^{add}| = \frac{m}{3} \cdot 3 = m$ and $|B^{even}| = 5(q+1)+1$, we have $|B_r| = 5(q+1)+1+m = s$. We define $\pi_B$ to be the partition only containing the coalition $B_r$. 

Finally, we define the reduced outcome to be $\pi_{C'} = \pi_R \cup \pi_B$. Note that every agent is assigned a room by $\pi_{C'}$ with a fraction that it approves of, i.e., for every agent $a \in R\cup B$ we have $\theta(\pi_{C'}(a)) \in D_a^+$. Additionally, a reduced outcome exists if and only if $(X,C)$ has a solution.

\subsection{Hardness}
We demonstrate co-NP-hardness by showing that the monolithic outcome $\pi_{mon}$ is the only popular outcome and therefore strictly popular, if $(X,C)$ has no solution. Otherwise we have multiple popular outcomes, which are exactly the monolithic outcome $\pi_{mon}$ and all the reduced outcomes $\pi_{C'}$, where subset $C' \subseteq C$ is a solution of $(X,C)$.

\begin{lemma}\label{strictpoplemyesno}
	Let $G$ be a roommate diversity game constructed as in \cref{mixpoprdgame} and $\pi$ be an outcome of $G$ such that every agent is in a room with a fraction that it approves of. We have that $\pi$ is either the monolithic outcome $\pi_{mon}$ or a reduced outcome $\pi_{C'}$, where $C' \subseteq C$ partitions $X$. 
\end{lemma}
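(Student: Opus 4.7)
The plan is to exploit the rigid counting built into the construction: since all approved fractions appearing in the instance are pairwise distinct and only a short list of fractions is approved by anyone, every room of $\pi$ must carry one of $\{0, \frac{5j-2}{s}, \frac{5j+1}{s}, \frac{5(q+1)+1}{s}, 1\}$, and for each of these the set of approving agents is so tight that the composition of any such room is essentially forced.

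First I would handle, for each $j \in [q]$, the fractions $\frac{5j+1}{s}$ and $\frac{5j-2}{s}$. A room of fraction $\frac{5j+1}{s}$ contains $5j+1$ red agents, and the red agents approving this fraction are precisely $R^{red}_j \cup \{r_i : i \in A_j\}$, a set of size exactly $5j+1$; a room of fraction $\frac{5j-2}{s}$ contains $5j-2$ red agents, and among red agents only $R^{red}_j$ (of size $5j-2$) approves. Since every agent of $R^{red}_j$ must sit in a room of one of these two fractions, a short counting argument rules out having zero such rooms, as well as having one of each kind (the second case because the $\frac{5j-2}{s}$ room would absorb all of $R^{red}_j$, forcing the $\frac{5j+1}{s}$ room to draw its $5j+1 \geq 6$ red agents from $\{r_i : i \in A_j\}$, a set of size only $3$). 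Hence for each $j$ exactly one such room exists, and the blue-agent quota is met exactly by $B^{fill}_j$ in the $\frac{5j+1}{s}$ case or by $B^{fill}_j \cup B^{add}_j$ in the $\frac{5j-2}{s}$ case, so the room is forced to be either $R^{red}_j \cup \{r_i : i \in A_j\} \cup B^{fill}_j$ or $R^{red}_j \cup B^{fill}_j \cup B^{add}_j$.

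Next I would split on whether $\pi$ contains a room of fraction $1$. If it does, then because only $R^{set} \cup R^{mon}$ approves fraction $1$ and $|R^{set} \cup R^{mon}| = s$, this room equals $R^{set} \cup R^{mon} = P_{q+1}$. No $R^{set}$ agent is then available for any fraction-$\frac{5j+1}{s}$ room, so each $j \in [q]$ must produce the room $P_j$, and the remaining agents $B^{mon} \cup B^{even}$ total exactly $s$ and form the unique fraction-$0$ room $B_r$; this recovers $\pi = \pi_{mon}$. Otherwise, $R^{mon}$ must entirely lie in a fraction-$\frac{5(q+1)+1}{s}$ room, which by the same tight-count reasoning equals $R^{mon} \cup B^{mon} = P'_{q+1}$. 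Each $r_i \in R^{set}$ now has to sit in some fraction-$\frac{5j+1}{s}$ room with $i \in A_j$, and since every such room contains all three elements of $\{r_i : i \in A_j\}$, the set $J := \{j \in [q] : \pi \text{ contains the fraction-}\frac{5j+1}{s}\text{ room}\}$ induces a partition $C' = \{A_j : j \in J\}$ of $X$. A final cardinality check on the leftover $B^{even} \cup \bigcup_{A_j \in C'} B^{add}_j$ confirms it forms a single fraction-$0$ room, so $\pi = \pi_{C'}$.

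The main obstacle is the bookkeeping: one has to verify a handful of cardinality identities such as $|B^{fill}_j| = s - (5j+1)$, $|B^{mon}| = m$, and $|R^{set} \cup R^{mon}| = s$, and to keep careful track of which agent sets approve each fraction. Once those are in hand, the rigidity of the construction leaves no freedom in how $\pi$ can be assembled, and the lemma follows.
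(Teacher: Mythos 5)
Your proposal is correct and follows essentially the same route as the paper's proof: both rest on the observation that for each relevant fraction the set of approving agents has exactly the cardinality needed to fill the corresponding room quota, which forces the composition of every room. The only cosmetic difference is the order of the case analysis (you establish the per-$j$ room structure first and then split on whether a fraction-$1$ room exists, while the paper splits on the fraction of a fixed set agent's room and derives the per-$j$ structure inside each case), and your counting claims all check out.
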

\begin{proof}
	Let $r_i \in R^{set}$ be an arbitrary set agent. As $r_i$ must be assigned a room in $\pi$ with a fraction that it approves of, we have that $\theta(\pi(r_i)) = 1$ or $\theta(\pi(r_i)) = \frac{5j+1}{s}$, where $j \in J^i$. Let us consider the following cases.
	\begin{enumerate}
		\item $\theta(\pi(r_i)) = 1$. \\
		Since there are exactly $s$ red agents that approve of fraction 1, namely the agents in $R^{set} \cup R^{mon}$, these agents must be contained in the same room, i.e., 
		$$R^{set} \cup R^{mon} \in \pi.$$ 
		A redundant agent $r_{j'}^p \in R^{red}$ cannot be in a room with fraction $\frac{5j'+1}{s}$ as there are only $s-3$ remaining agents that approve of that fraction. Thus any redundant agent $r_{j'}^p$ must be in a room with fraction $\frac{5j'-2}{s}$. The rooms must be $R_{j'}^{red} \cup B_{j'}^{fill} \cup B_{j'}^{add}$ for $j' \in [q]$ as these are exactly the agents that approve of fraction $\frac{5j'-2}{s}$, i.e. for $j'\in [q]$ we have 
		$$R_{j'}^{red} \cup B_{j'}^{fill} \cup B_{j'}^{add} \in \pi.$$ 
		We have $s$ remaining blue agents, namely $B^{even} \cup B^{mon}$, that must belong to the same room , i.e., 
		$$B^{even} \cup B^{mon} \in \pi.$$ 
		Thus $\pi$ must be $\pi_{mon}$ in this case.
		
		\item $\theta(\pi(r_i)) = \frac{5j+1}{s}$. \\
		There are exactly $s$ red agents, including $r_i$, that approve of fraction 1. Since $\theta(\pi(r_i)) \neq 1$ and $\pi$ is an outcome such that every agent is in a room with a fraction that it approves of, the outcome $\pi$ cannot contain a room with only red agents. Thus every set agent $r_{i'} \in R^{set}$ must be in a room with fraction $\frac{5j'+1}{s}$ such that $i' \in A_{j'}$. There are exactly 3 set agents that approve of $\frac{5j''+1}{s}$ for each $j'' \in [q]$. Thus for some solution $C' \subseteq C$ of $(X,C)$, every set agent must be in a room with the shape $\{r_i | i \in A_j \} \cup R^{red}_j \cup B^{fill}_j$ where $A_j \in C'$, i.e., for each $A_j \in C'$ we have
		$$\{r_i | i \in A_j \} \cup R^{red}_j \cup B^{fill}_j \in \pi.$$ 
		The remaining redundant agents $r_{j'''}^p \in R^{red}$ must be in a room with fraction $\frac{5j'''-2}{s}$ as only $s-3$ remaining agents approve of fraction $\frac{5j'''+1}{s}$. The rooms must be $R_{j'''}^{red} \cup B_{j'''}^{fill} \cup B_{j'''}^{add}$ for $A_{{j'''}} \notin C'$, i.e., for each $A_{{j'''}} \notin C'$ we have
		$$R_{j'''}^{red} \cup B_{j'''}^{fill} \cup B_{j'''}^{add} \in \pi.$$ 
		Since there is no room with only red agents, the red monolith agents must be in a room with the blue monolith agents, i.e., 
		$$R^{mon} \cup B^{mon} \in \pi.$$
		We have $s$ remaining blue agents, namely $B^{even} \cup \bigcup\limits_{A_j \in C'}B^{add}_j$, that must belong to the same room, i.e., 
		$$B^{even} \cup \bigcup_{A_j \in C'}B^{add}_j \in \pi.$$ 
		Thus $\pi$ must be $\pi_{C'}$ in this case. \qedhere
	\end{enumerate}
\end{proof}

\begin{theorem}
	Determining whether a strict popular outcome exists in a roommate diversity game is co-NP-hard, even if the preferences are dichotomous.
\end{theorem}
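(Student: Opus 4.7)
The plan is to show that a strictly popular outcome exists in the game $G$ constructed in Section~\ref{mixpoprdgame} if and only if the X3C instance $(X,C)$ has no solution. Combined with the polynomial-time reducibility noted in Section~\ref{prelim} and the uniqueness of strictly popular outcomes, this yields co-NP-hardness, and since the preferences assigned in the construction are dichotomous, the stronger statement follows.

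I would first invoke the observation stated at the start of Section~\ref{strictpredef}: under dichotomous preferences, any outcome in which every agent sits in an approved room is popular, because no agent can strictly prefer a different outcome, so the popularity margin against any alternative is nonnegative. Applied to $\pi_{mon}$ (which always exists) and to every reduced outcome $\pi_{C'}$ (which exists whenever $C'$ is a solution to $(X,C)$), this immediately gives popularity of all these outcomes.

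For the forward direction, assume $(X,C)$ admits a solution $C'$. Then $\pi_{mon}$ and $\pi_{C'}$ are two distinct popular outcomes, and since a strictly popular outcome is unique whenever it exists, no strictly popular outcome can exist. For the converse, assume $(X,C)$ has no solution. By \cref{strictpoplemyesno}, $\pi_{mon}$ is the unique outcome in which every agent sits in an approved room. Fix any $\pi' \neq \pi_{mon}$; then at least one agent $a$ must be in a disapproved room in $\pi'$. Because preferences are dichotomous and every agent is approved in $\pi_{mon}$, no agent strictly prefers $\pi'$ over $\pi_{mon}$, while $a$ strictly prefers $\pi_{mon}$ over $\pi'$. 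Hence $\phi(\pi_{mon},\pi') \geq 1 > 0$ for every $\pi' \neq \pi_{mon}$, so $\pi_{mon}$ is strictly popular.

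The only genuinely delicate point is the converse direction: I need \cref{strictpoplemyesno} to guarantee that \emph{every} deviation from $\pi_{mon}$ produces at least one disapproving agent when $(X,C)$ has no solution, since that is what powers the strict inequality $\phi(\pi_{mon},\pi')>0$. Once this is in hand, the dichotomous structure does the rest without any margin-counting gymnastics, and the reduction concludes by observing that deciding the existence of a strictly popular outcome thereby decides the complement of X3C.
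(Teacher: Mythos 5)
Your proposal is correct and follows essentially the same route as the paper: it rests on \cref{strictpoplemyesno} together with the observation that, under dichotomous preferences, any all-approved outcome is popular, and concludes that $\pi_{mon}$ is strictly popular exactly when $(X,C)$ has no solution. One small wording fix for the forward direction: the relevant fact is not merely that a strictly popular outcome is unique, but that a strictly popular outcome would satisfy $\phi(\pi,\pi')>0$ against every other outcome, so no \emph{other} outcome could be popular at all; the coexistence of the two distinct popular outcomes $\pi_{mon}$ and $\pi_{C'}$ therefore rules out strict popularity.
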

\begin{proof}
	Let us define the family of solution sets $\mathscr{C} = \{C' \subseteq C | C' \text{ is a solution of }(X,C) \}$. By \cref{strictpoplemyesno} and the observation in \cref{strictpredef}, we have that $\mathcal{P} = \{\pi_{C'} | C' \in \mathscr{C}\} \cup \{\pi_{mon}\}$ is the collection of all popular outcomes.
	
	From \cref{strictpoplemyesno} we have that if $(X,C)$ has no solution, then $|\mathcal{P}| = 1$ and therefore we have a strictly popular outcome, namely $\pi_{mon}$. Otherwise we have that $|\mathcal{P}| > 1$ and therefore no outcome is strictly popular.
	
	Thus we have a reduction from any X3C instance $(X,C)$ to a roommate diversity game $G$ with dichotomous preferences such that $(X,C)$ has a solution if and only if $G$ has no strictly popular outcome. Since the X3C problem is NP-complete, determining whether a strict popular outcome exists in a roommate diversity game is co-NP-hard, even if the preferences are dichotomous.
\end{proof}
\section{Mixed Popularity}\label{mixedpop}
In this section we show that a mixed popular outcome is guaranteed to exist in a roommate diversity game. However, computing a mixed popular outcome in polynomial time is not possible unless P=NP. To show the guaranteed existence, we use the minimax theorem\cite{v.Neumann1928}. To show hardness, we construct a reduction from the X3C problem to the roommate diversity problem. The reduction is similar to the reduction in \cref{strictpop}.

\begin{theorem}
	Any roommate diversity game is guaranteed to have a mixed popular outcome.
\end{theorem}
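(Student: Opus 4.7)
The plan is to set up a symmetric two-player zero-sum game whose pure strategies are the outcomes of the roommate diversity game $G$, and whose payoff to the row player is the popularity margin $\phi(\pi,\pi')$. Since $N$ is finite, the set of outcomes of $G$ is finite, so this is a finite matrix game, and the minimax theorem of von Neumann applies directly to mixed strategies over outcomes. The definition of $\phi$ on mixed outcomes given in the preliminaries coincides exactly with the bilinear extension of $\phi$ to mixed strategies, so mixed popularity is literally the statement that a mixed strategy $p^*$ guarantees a nonnegative payoff against every opposing mixed strategy $q$.

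I would first observe the antisymmetry $\phi(\pi,\pi') = -\phi(\pi',\pi)$, which extends to $\phi(p,q) = -\phi(q,p)$ for all mixed outcomes $p,q$. By the minimax theorem there is a value $v$ and optimal mixed strategies $p^*,q^*$ with
\begin{equation*}
v \;=\; \max_{p}\min_{q}\phi(p,q) \;=\; \min_{q}\max_{p}\phi(p,q).
\end{equation*}
Applying antisymmetry to the right-hand expression gives $\min_q\max_p\phi(p,q) = -\max_q\min_p\phi(q,p) = -v$, so $v = -v$ and hence $v = 0$. Consequently the maximin strategy $p^*$ satisfies $\phi(p^*,q) \geq 0$ for every mixed outcome $q$, which is exactly the definition of a mixed popular outcome.

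The only thing that needs a little care is confirming that the bilinear formula for $\phi(p,q)$ in the preliminaries is in fact the expected payoff of the matrix game whose entries are $\phi(\pi_i,\sigma_j)$; this is immediate from linearity of expectation but is the one place where a reader could get confused between the probabilistic and game-theoretic viewpoints. I do not expect a real obstacle here: the entire argument is a standard invocation of the minimax theorem for symmetric zero-sum games, with the only content being the antisymmetry of $\phi$ and the finiteness of the strategy set.
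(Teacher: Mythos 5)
Your proposal is correct and follows exactly the same route as the paper: view the game as a finite symmetric two-player zero-sum game with payoff matrix $\phi(\pi_i,\pi_j)$, invoke the minimax theorem, use antisymmetry of $\phi$ to conclude the value is $0$, and take a maximin strategy. You merely spell out the antisymmetry argument and the bilinearity check that the paper leaves implicit.
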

\begin{proof}
	Every roommate diversity game can be viewed as a finite two-player symmetric zero-sum game where the rows and columns of the game matrix are indexed by all possible outcomes $\pi_1, \dots, \pi_{t}$ and the entry at $(i,j)$ of the game matrix has value $\phi(\pi_i,\pi_j)$. By the minimax theorem\cite{v.Neumann1928}, we have that the value of this game is 0. Therefore any maximin strategy is popular.
\end{proof}

We change the reduction in \cref{strictpop} by doubling the agents, changing the room size and preference profile accordingly, and adding new agents. These additions ensure that the monolithic outcome is strictly popular if $(X,C)$ has no solution and that the monolithic outcome is not popular if $(X,C)$ has a solution.

\subsection{Roommate Diversity Game}
We set the room size to $s = 2(5(q + 2) + 1 + m) + 6 = 10q + 28 + 2m$. Note that $\frac{2(5(q + 2) + 1 + m) + 6}{s} = 1$. The agents and their preference profile is defined as follows.
\begin{table}[h!]
	\begin{tabular}{  r  l  l }
		Set Agents & $R^{set} = \{r_i | i \in X\} = \{r_1, \dots, r_m\}$ & \\
		Copy Set Agents & $\tilde{R}^{set} = \{\tilde{r}_i | i \in X\}= \{\tilde{r}_1, \dots, \tilde{r}_m\}$ & \\
		Auxiliary Set Agents & $\hat{R}^{set}= \{\hat{r}_1, \hat{r}_2, \hat{r}_3, \hat{r}_4, \hat{r}_5, \hat{r}_6\}$ & \\
		Redundant Agents & $R^{red}_j = \{r_j^1, \dots, r_j^{2(5j-2)}\}$ & for $j \in [q+1]$ \\
		 & $R^{red} = R^{red}_1 \cup \dots \cup R^{red}_{q+1}$ &  \\
		Monolith Agents & $R^{mon} = \{r^1_{mon}, \dots, r^{2(5(q+2)+1)}_{mon}\}$ & \\
	\end{tabular}
	\caption{Set of red agents $R = R^{set} \cup  \tilde{R}^{set} \cup  \hat{R}^{set} \cup R^{mon} \cup R^{red}$.}
	\begin{tabular}{  r  l  l }
		Filling Agents & $B^{fill}_j = \{b^1_j, \dots, b^{s-2(5j-2)-6}_j\}$ 
		& for $j \in [q+1]$ \\
		& $B^{fill} = B^{fill}_1 \cup \dots \cup B^{fill}_{q+1}$ 
		& \\
		Additional Agents & $B^{add}_j = \{\tilde{b}^1_j, \tilde{b}^2_j, \tilde{b}^3_j, \tilde{b}^4_j, \tilde{b}^5_j, \tilde{b}^6_j\}$ & for $j \in [q+1]$ \\
		& $B^{add} = B^{add}_1 \cup \dots \cup B^{add}_{q+1}$ & \\
		Monolith Agents & $B^{mon} = \{b_{mon}^1, \dots, b_{mon}^{s-2(5(q+2)+1)}\}$ &  \\
		Evening Agents & $B^{even} = \{b_{even}^1, \dots, b_{even}^{2(5(q+2)+1)}\}$ & 
	\end{tabular}
	\caption{Set of blue agents $B = B^{even} \cup B^{mon} \cup B^{add} \cup B^{fill}	$.}
\end{table}
\begin{table}[h!]
	\begin{center}
		\begin{tabular}{ | l | l | l | l | }
			\hline
			\multirow{2}{*}{Agent} & \multicolumn{2}{c|}{Preference Profile} & \multirow{2}{*}{} \\\cline{2-3}
			& $D_a^+$ & $D_a^-$ &  \\\hline
			$a = r_i \in R^{set}$ & $\{\frac{2(5j_1^i+1)}{s}, \dots, \frac{2(5j_{m_i}^i+1)}{s}\}  \cup \{1\}$ & $D \setminus D_a^+$  & \\  
			$a = \tilde{r}_i \in \tilde{R}^{set}$ & $\{\frac{2(5j_1^i+1)}{s}, \dots, \frac{2(5j_{m_i}^i+1)}{s}\}  \cup \{1\}$ & $D \setminus D_a^+$  & \\  
			$a \in \{\hat{r}_1, \dots \hat{r}_5\}$ & $\{\frac{2(5(q+1)+1)}{s}, 1\}$ & $D \setminus D_a^+$  & \\
			$a = \hat{r}_6$ & $\{\frac{2(5(q+1)+1)}{s}\}$ & $D \setminus D_a^+$  & \\  
			$a = r_j^p \in R^{red}_j$ & $\{\frac{2(5j+1)}{s}, \frac{2(5j-2)}{s}\}$ & $D \setminus D_a^+$ & $j \in [q+1]$ \\
			$a = r_{mon}^{p} \in R^{mon}$ & $\{1, \frac{2(5(q+2)+1)}{s}\}$ & $D \setminus D_a^+$ & \\\hline\hline
			$a = b_j^p \in B_j^{fill}$ & $\{\frac{2(5j+1)}{s}, \frac{2(5j-2)}{s}\}$ & $D \setminus D_a^+$ & $j \in [q+1]$  \\
			$a = \tilde{b}_j^p \in B_j^{add}$ & $\{\frac{2(5j-2)}{s}, 0\}$ & $D \setminus D_a^+$ & $j \in [q+1]$  \\
			$a = b_{mon}^p \in B^{mon}$ & $\{\frac{2(5(q+2)+1)}{s}, 0\}$ & $D \setminus D_a^+$ & \\
			$a = b_{even}^p \in B^{even}$ & $\{0\}$ & $D \setminus D_a^+$ &  \\\hline
		\end{tabular}
		\caption{Preference profile $(\succsim_a)_a \in R \cup B$.}
	\end{center}
\end{table}
\newpage
\subsection{Predefined Outcomes}
We define the monolithic outcome $\pi_{mon}$ and for a solution $C' \subseteq C$ for $(X,C)$ the reduced outcome $\pi_{C'}$ in a similar manner as in \cref{strictpredef}. In outcome $\pi_{mon}$ exactly 1 agent is in a room that it disapproves of. In outcome $\pi_{C'}$ all agents are in a room that it approves of.
\subsubsection{Monolithic Outcome}
First we define the rooms that contain red agents.
\begin{align*}
& P_j = B_j^{add} \cup R^{red}_j \cup B^{fill}_j \text{ for } j \in [q+1]; & P_{q+2} = R^{set} \cup \tilde{R}^{set} \cup \hat{R}^{set} \cup R^{mon}.
\end{align*}
Let $\pi_R$ and $\pi_B$ be defined in a similar manner as in \cref{strictpredef1}, i.e., $\pi_R = \bigcup\limits_{j \in [q+2]}\{P_j\}$ and $\pi_B$ contains 1 room with the blue agents not contained in any $P_j$, where $j \in [q+2]$. We define the monolithic outcome to be $\pi_{mon} = \pi_R \cup \pi_B$. Note that agent $\hat{r}_6$ is the only agent that is assigned a room by $\pi_{mon}$ with a fraction that it does not approves of. A monolithic outcome always exists.

\subsubsection{Reduced Outcome}
Let $C' \subseteq C$ be a solution of $(X,C)$, i.e., $C'$ partitions $X$. For every 3-element subset $A_j \in C$, we define the rooms that contain red agents.

\begin{tabular}{l l}
 \multicolumn{2}{l}{$P'_j = \begin{cases}
\{r_i \in R^{set} | i \in A_j\} \cup \{\tilde{r}_i \in \tilde{R}^{set} | i \in A_j\} \cup R^{red}_j \cup B^{fill}_j & \text{, if } A_j \in C' \\
B^{add}_j \cup R^{red}_j \cup B^{fill}_j & \text{, if } A_j \notin C'
\end{cases} \text{ for } j \in [q] ;$}\\ 
 $P'_{q+1} = \hat{R}^{set} \cup R_{q+1}^{red} \cup B_{q+1}^{fill};$ & \\
 $P'_{q+2} = R^{mon} \cup B^{mon}.$
\end{tabular}\\
Let $\pi_R$ and $\pi_B$ be defined in a similar manner as in \cref{strictpredef2}, i.e., $\pi_R = \bigcup\limits_{j \in [q+2]}\{P'_j\}$ and $\pi_B$ contains 1 room with the blue agents not contained in any $P_j'$, where $j\in [q+2]$. We define the reduced outcome to be $\pi_{C'} = \pi_R \cup \pi_B$. Note that every agent is assigned a room by $\pi_{C'}$ with a fraction that it approves of. A reduced outcome exists if and only if $(X,C)$ has a solution.

\subsection{Hardness}
We demonstrate that it is hard to compute a mixed popular solution by showing that $\pi_{mon}$ is strictly popular, if $(X,C)$ has no solution. Otherwise $\pi_{mon}$ is not popular. If $\pi_{mon}$ is strictly popular, then the only mixed popular outcome is $p = \{(\pi_{mon},1)\}$. If $\pi_{mon}$ is not popular, then $p = \{(\pi_{mon},1)\}$ cannot be a mixed popular outcome. 

\begin{observation}\label{mixpopobsmon}
	If outcome $\pi$ has a room that consists of only red agents and assigns exactly 1 agent to a room that it disapproves of, then $\pi$ must be the monolithic outcome. 
	That is, for any $\pi$, if $|D_\pi^-| = 1$ and there exists a room $S \in \pi$ such that $\theta(S)= 1$, then $\pi = \pi_{mon}.$
\end{observation}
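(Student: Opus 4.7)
The plan is to determine $\pi$ entirely by counting which red agents can populate the room $S$ with $\theta(S)=1$, then propagating the forced structure through every remaining room. First I would compute the red agents that approve of fraction $1$: these are $R^{set}\cup\tilde{R}^{set}\cup\{\hat{r}_1,\ldots,\hat{r}_5\}\cup R^{mon}$, whose total size is $2m+5+10q+22=s-1$. Since $S$ contains $s$ red agents, it hosts at most $s-1$ approvers and therefore at least one disapprover; the hypothesis $|D_\pi^-|=1$ then pins down exactly one disapprover, necessarily inside $S$, and forces $S$ to contain \emph{every} red agent approving fraction $1$. The extra red agent is therefore either $\hat{r}_6$ or some redundant $r_j^p\in R^{red}_j$.

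Next I would rule out the redundant case. If $r_j^p\in S$, the remaining $2(5j-2)-1$ agents of $R^{red}_j$ must each sit in an approved room, because the single disapprover slot is already used. Their only approved fractions are $\frac{2(5j+1)}{s}$ and $\frac{2(5j-2)}{s}$, and a direct check shows that these fractions are not shared with the approved sets of any other $r_{j'}^p$ or of $\hat{r}_6$ (except the coincidence $j=q+1$, which still leaves the count short by $5$ red agents). Filling a fraction-$\frac{2(5j-2)}{s}$ room requires the entire $R^{red}_j$, which is no longer available, and filling a fraction-$\frac{2(5j+1)}{s}$ room requires six red agents beyond $R^{red}_j$ that simply do not exist outside $S$. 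This contradiction forces the extra agent in $S$ to be $\hat{r}_6$, giving $S=R^{set}\cup\tilde{R}^{set}\cup\hat{R}^{set}\cup R^{mon}=P_{q+2}$.

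The remaining rooms then fall into place by the same bookkeeping. For each $j\in[q+1]$, the same obstruction shows that the agents of $R^{red}_j$ must all lie together in a single fraction-$\frac{2(5j-2)}{s}$ room; its $s-2(5j-2)$ blue slots must be filled by blue agents approving that fraction, and the only such blues are $B^{fill}_j\cup B^{add}_j$, whose total size is exactly $s-2(5j-2)$. This reconstructs $P_j$ on the nose. The only agents still unassigned are $B^{even}\cup B^{mon}$, whose sizes sum to exactly $s$, yielding the final blue room and confirming $\pi=\pi_{mon}$.

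The main obstacle is the rigidity argument in the middle step: the construction sprinkles several agent types with overlapping approved fractions, and one has to eliminate every alternative packing (splitting $R^{red}_j$ across two or more rooms, borrowing $\hat{r}_6$ to pad a fraction-$\frac{2(5j+1)}{s}$ room, or diverting some $B^{add}_j$ agents into the fraction-$0$ evening room) before concluding that $\hat{r}_6$ is the forced disapprover. The global cap $|D_\pi^-|=1$ is the lever that makes each of these alternatives introduce at least one additional disapprover, so the proof ultimately reduces to careful counting of approved-fraction slots.
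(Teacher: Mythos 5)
Your proposal is correct and follows essentially the same strategy as the paper: force $S$ to contain all $s-1$ red approvers of fraction $1$ plus one disapprover from $\{\hat{r}_6\}\cup R^{red}$, eliminate the redundant-agent case by a counting contradiction, and then reconstruct the rooms $R^{red}_j\cup B^{fill}_j\cup B^{add}_j$ and $B^{even}\cup B^{mon}$ to conclude $\pi=\pi_{mon}$. The only difference is cosmetic: the paper derives the contradiction in the redundant case by tracking where $\hat{r}_6$ must go (its sole approved fraction $\frac{2(5(q+1)+1)}{s}$ cannot be realized once five agents of $\hat{R}^{set}$ sit in $S$), whereas you track the leftover members of $R^{red}_j$; both are valid, and your ``short by $5$'' for the $j=q+1$ coincidence should be ``short by $6$'', which does not affect the argument.
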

\begin{proof}
	Let $\pi$ be an arbitrary outcome that contains a room that consists of only red agents and assigns exactly 1 agent to a room with a fraction that it disapproves of, i.e., for outcome $\pi$ we have $|D_\pi^-| = 1$ and there exists a room $S\in \pi$ such that $\theta(S) = 1$. 
	
	Let $S\in \pi$ be the room that only consists of red agents. We have exactly $s-1$ red agents, namely $R^{set} \cup \tilde{R}^{set} \cup R^{mon} \cup \{\hat{r}_1, \dots, \hat{r}_5\}$, that approve of fraction $1$. Thus the room $S$ must contain $R^{set} \cup \tilde{R}^{set} \cup R^{mon} \cup \{\hat{r}_1, \dots, \hat{r}_5\}$, otherwise there are at least 2 red agents in $S$ that do not approve of fraction $1$. Let $r$ denote the red agent in $S$ that disapproves of fraction $1$. We can write
	$$R^{set} \cup \tilde{R}^{set} \cup R^{mon} \cup \{\hat{r}_1, \dots, \hat{r}_5\} \cup \{r\} \in \pi.$$
	Since $|D_\pi^-| = 1$, $r \in D_\pi^-$, and $r \in S$, any agent in a room in $\pi \setminus \{S\}$ must approve of the fraction of its assigned room.
	Note that $r \in \{\hat{r}_6\} \cup R^{red}$ as these are the red agents that disapprove of fraction $1$. Consider the following cases regarding agent $r$.
	\begin{enumerate}
		\item $r \in R^{red}$.\\
		Then we have that $\hat{r}_6 \notin S$ and $\hat{r}_6 \in D_\pi^+$. Let $S' \in \pi$ denote the room that contains $\hat{r}_6$. Thus we have that $\theta(S') = \frac{2(5(q+1)+1)}{s}$. We have exactly $2(5(q+1)+1)$ red agents that approve of fraction $\frac{2(5(q+1)+1)}{s}$, namely $\hat{R}^{set} \cup R_{q+1}^{red}$. However 5 agents from $\hat{R}^{set}$ are contained in $S$, thus $S'$ must contain at least 5 agents that disapprove of fraction $\frac{2(5(q+1)+1)}{s}$. This contradicts $|D_\pi^-| = 1$, therefore this case cannot occur.
		
		\item $r = \hat{r}_6$.\\
		Then for a redundant red agent $r_j^p \in R^{red}$ we have that $r_j^p \in D_\pi^+$. Let $S' \in \pi$ denote the room that contains $r_j^p$. We have that either $\theta(S') = \frac{2(5j+1)}{s}$ or $\theta(S') = \frac{2(5j-2)}{s}$. 
		
		There are exactly $2(5j+1)$ red agents, namely $R_j^{red} \cup \{r_i, \tilde{r}_i | i \in A_j\}$ or $R_j^{red} \cup \hat{R}^{set}$, that approve of fraction $\frac{2(5j+1)}{s}$ of which 6 are contained in room $S$. Thus if $\theta(S') = \frac{2(5j+1)}{s}$, then $S'$ must contain 6 red agents that disapprove of fraction $\frac{2(5j+1)}{s}$. This contradicts $|D_\pi^-| = 1$, therefore $\theta(S') = \frac{2(5j-2)}{s}$.
		
		There are exactly $2(5j-2)$ red agents that approve of fraction $\frac{2(5j-2)}{s}$, namely $R_j^{red}$. Additionally, there are exactly $s-2(5j-2)$ blue agents that approve of fraction $\frac{2(5j-2)}{s}$, namely $B_j^{add} \cup B_j^{fill}$. Thus $S'$ must contain exactly $R_j^{red} \cup B_j^{add} \cup B_j^{fill}$, as otherwise $S'$ must contain an agent that disapprove of fraction $\frac{2(5j-2)}{s}$. Thus for any $j \in [q+1]$, we have that
		$$R_j^{red} \cup B_j^{add} \cup B_j^{fill} \in \pi.$$
		Since $r = \hat{r}_6$, we also have that
		$$R^{set} \cup \tilde{R}^{set} \cup R^{mon} \cup \hat{R}^{set} \in \pi.$$
		There are exactly $s$ remaining blue agents, namely $B^{mon} \cup B^{even}$. Thus these remaining blue agents must belong to the same room, i.e.,
		$$B^{mon} \cup B^{even} \in \pi.$$
		Thus $\pi$ must be $\pi_{mon}$. \qedhere
	\end{enumerate}
\end{proof}

\begin{lemma}\label{mixpop2lemma}
	If $(X,C)$ has no solution, then for any outcome $\pi$ such that $\pi \neq \pi_{mon}$ has at least 2 agents assigned to a room with a fraction that it does not approve of. That is,
	for any $\pi$ such that $\pi\neq \pi_{mon}$, we have $|D_\pi^-| \geq 2.$
\end{lemma}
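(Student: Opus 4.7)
The plan is to prove the contrapositive: assume $\pi \neq \pi_{mon}$ with $|D_\pi^-|\leq 1$, and conclude that $(X,C)$ admits a solution. First I would rule out any room of fraction $1$. The only agents approving fraction $1$ are $R^{set}\cup\tilde{R}^{set}\cup R^{mon}\cup\{\hat{r}_1,\dots,\hat{r}_5\}$, of which there are exactly $2m + 10q + 27 = s-1$, so any fraction-$1$ room must contain at least one disapprover. Combined with $|D_\pi^-|\leq 1$, such a room would force $|D_\pi^-|=1$, at which point \cref{mixpopobsmon} pins $\pi = \pi_{mon}$, contradicting $\pi\neq\pi_{mon}$. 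Hence no room of $\pi$ has fraction $1$.

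Next I would establish a rigidity principle for the fractions $\frac{2(5j+1)}{s}$ and $\frac{2(5j-2)}{s}$ with $j\in[q+1]$. A direct count shows that the set of agents approving each such fraction has size exactly $s$: the canonical room for $\frac{2(5j+1)}{s}$ is $\{r_i,\tilde{r}_i:i\in A_j\}\cup R^{red}_j\cup B^{fill}_j$ when $j\in[q]$ and $\hat{R}^{set}\cup R^{red}_{q+1}\cup B^{fill}_{q+1}$ when $j=q+1$, while the canonical room for $\frac{2(5j-2)}{s}$ is $R^{red}_j\cup B^{add}_j\cup B^{fill}_j$. Consequently, any room of such a fraction all of whose agents approve must coincide with the canonical set, and because $B^{fill}_j$ sits inside every canonical room of fractions $\frac{2(5j+1)}{s}$ or $\frac{2(5j-2)}{s}$, at most one room of each such fraction can appear in $\pi$.

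Having excluded fraction $1$, each set agent $r_i$ and each copy $\tilde{r}_i$ that is not the sole disapprover must lie in a canonical room of fraction $\frac{2(5j+1)}{s}$ for some $j\in J^i\subseteq[q]$; by rigidity, that room contains its pair partner together with all the other pairs indexed by $A_j$. Suppose first that the single disapprover, if it exists, does not belong to $R^{set}\cup\tilde{R}^{set}$. Then all $2m$ set and copy agents occupy such canonical rooms, and the collection $C'=\{A_j: $ a canonical room of fraction $\frac{2(5j+1)}{s}$ appears in $\pi$ for $j\in[q]\}$ partitions $X$, since each $i\in X$ sits in the unique $A_j$ whose canonical room contains $r_i$. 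Thus $C'$ is a solution of $(X,C)$, and we are done.

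The remaining possibility is that the disapprover $a$ is some $r_i$ or $\tilde{r}_i$. The plan here is to examine every possible fraction of $\pi(a)$---by elimination, either $\frac{2(5j\pm?)}{s}$ for some $j\in[q+1]$ or $\frac{2(5(q+2)+1)}{s}$, since fraction $0$ is excluded for a red agent---and argue that $\pi(a)$ must contain the entire corresponding $B^{fill}_j$ (or $B^{mon}$) block in order to amass the $s-1$ approvers it needs around $a$. This displaces some approver $r^*$ out of $\pi(a)$; but $r^*$'s only other approved fraction relies on the very block already consumed by $\pi(a)$, so $r^*$ has no approving room left, producing a second disapprover that contradicts $|D_\pi^-|\leq 1$. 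The main obstacle is this final case analysis: the chain ``absorbed block $\Rightarrow$ displaced approver $\Rightarrow$ nowhere to go'' must be verified for each candidate fraction of $\pi(a)$ and for each possible identity of $r^*$ (among redundant, set/copy, auxiliary, or monolith agents), but once the absorption observation is fixed in place every subcase closes in the same way.
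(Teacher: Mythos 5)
Your overall strategy matches the paper's: exclude all-red rooms via the $s-1$ approver count together with \cref{mixpopobsmon}, use the exact-$s$ approver counts for the fractions $\frac{2(5j+1)}{s}$ and $\frac{2(5j-2)}{s}$ to force canonical rooms, extract a solution of $(X,C)$ in the clean case, and otherwise run a displaced-approver argument. However, there is a genuine gap in your first case, where the lone disapprover $a$ is assumed to lie outside $R^{set}\cup\tilde{R}^{set}$. You conclude that ``all $2m$ set and copy agents occupy such canonical rooms,'' but your rigidity principle only applies to rooms \emph{all} of whose members approve. Nothing prevents $a$ (say $\hat{r}_6$, a redundant agent, or a blue agent) from sitting in the same room as a set agent $r_i$: that room then has fraction $\frac{2(5j+1)}{s}$ but consists of $s-1$ approvers plus $a$, so it is not canonical, one approver of $\frac{2(5j+1)}{s}$ is displaced into another room, and the extraction of $C'$ is no longer justified. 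The paper devotes its Case 2.1.1 to exactly this configuration ($\pi(a)\cap(R^{set}\cup\tilde{R}^{set}\cup\hat{R}^{set})\neq\emptyset$) and resolves it by showing the displaced approver's only remaining approved fractions have all their approvers already consumed, forcing a second disapprover. Your own ``absorbed block $\Rightarrow$ displaced approver $\Rightarrow$ nowhere to go'' machinery would close this hole, but as structured you never invoke it in that branch.

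A secondary remark, not a gap: your final case (disapprover in $R^{set}\cup\tilde{R}^{set}$) is more laborious than necessary. Rather than enumerating every possible fraction of $\pi(a)$, the paper looks at the partner $\tilde{r}_i$ of the disapproving $r_i$: its room must have fraction $\frac{2(5j+1)}{s}$ for some $j\in J^i$, that fraction has exactly $s$ approvers of which $r_i$ is one, and since $r_i$ is absent from that room it must contain a second agent that disapproves, immediately contradicting $|D_\pi^-|=1$.
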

\begin{proof}
	To derive a contradiction, assume that there exists an outcome $\pi$ such that $\pi \neq \pi_{mon}$ that assigns fewer than 2 agents to a room with a fraction that it disapproves of, i.e., assume that there exists an outcome $\pi \neq \pi_{mon}$ such that $|D_\pi^-| < 2$. We have the following cases.
	\begin{enumerate}
		\item $|D_\pi^-| = 0$.\label{mixgeq2case1}\\
		As we have exactly $s-1$ red agents that approve of fraction $1$, the outcome $\pi$ cannot have a room that only consists of red agents. Thus a set agent $r_i \in R^{set}$ must be in a room with fraction $\frac{2(5j+1)}{s}$, where $j \in J^i$. Let $S$ denote the room that contains $r_i$.
		
		There are exactly $2(5j+1)$ red agents that approve of fraction $\frac{2(5j+1)}{s}$, namely $R_j^{red} \cup \{r_i, \tilde{r}_i | i \in A_j\}$. Additionally, there are exactly $s-2(5j+1)$ blue agents that approve of fraction $\frac{2(5j+1)}{s}$, namely $B_j^{fill}$. Thus $S$ must contain the agents $\{r_i, \tilde{r}_i | i \in A_j\} \cup R_j^{red} \cup B_j^{fill}$ where $A_j \in C$, i.e.,
		$$\{r_i, \tilde{r}_i | i \in A_j\} \cup R_j^{red} \cup B_j^{fill} \in \pi.$$
		Let $\mathscr{S}$ denote the set of rooms in $\pi$ that contain set agents, i.e.,
		$$\mathscr{S} = \{S' \in \pi | S' \cap R^{set} \neq \emptyset\}.$$
		Let $C'$ be the family of set agent index sets of the rooms in $\mathscr{S}$, i.e.,
		$$C' = \{\{i | r_i \in S'\}  | S' \in \mathscr{S}\}.$$
		
		Since for every set agent $r_{i'} \in R^{set}$, the room in $\pi$ that contains $r_{i'}$ must be of form $\{r_{i'}, \tilde{r}_{i'} | i' \in A_{j'}\} \cup R_{j'}^{red} \cup B_{j'}^{fill}$, we have that $C'$ must be a solution for $(X,C)$. This contradicts $(X,C)$ not having a solution.
		
		\item $|D_\pi^-| = 1$.\\
		As $\pi \neq \pi_{mon}$, by \cref{mixpopobsmon} we have that $\pi$ does not contain a room with only red agents. Let us denote the agent in $D_\pi^-$ by $a$. We have the following 2 cases regarding $a$.
		\begin{enumerate}[\theenumi.1.]
			\item $a \notin R^{set} \cup \tilde{R}^{set}$. \\
			Let $S \in \pi$ denote the room that contains $a$. Let us consider the following cases regarding $S$.
			\begin{enumerate}[\theenumi.\theenumii.1.]
				\item $S \cap (R^{set} \cup \tilde{R}^{set} \cup \hat{R}^{set}) \neq \emptyset$.\\
				Then we have that $\theta(S) = \frac{2(5j+1)}{s}$, where $j \in [q+1]$. 
				
				Consider the following cases regarding $a$.
				\begin{enumerate}[\theenumi.\theenumii.\theenumiii.1.]
					\item $a \in R$.\\
					There are exactly $2(5j+1)$ red agents that approve of fraction $\frac{2(5j+1)}{s}$, namely $\{r_i, \tilde{r}_i | i \in A_j\} \cup R_j^{red}$ or $\hat{R}^{set} \cup R_j^{red}$. Let us denote the set of red agents that approve of fraction $\frac{2(5j+1)}{s}$ by $S^r$, i.e. $S^r = \{r \in R | \frac{2(5j+1)}{s} \in D_a^+\}$.
					
					Room $S$ must contain exactly $2(5j+1)-1$ agents from $S^r$. Let $r \in S^r$ be the agent not contained in $S$ and let $S' \in \pi$ denote the room that contains $r$. We have that $r \in D_\pi^+$, thus either $\theta(S') = \frac{2(5j+1)}{s}$ or $\theta(S') = \frac{2(5j-2)}{s}$.
					
					If $\theta(S') = \frac{2(5j+1)}{s}$, then $S'$ must contain $2(5j+1)-1$ red agents that disapprove of fraction $\frac{2(5j+1)}{s}$. There are a total of $2(5j+1)$ red agents that approve of fraction $\frac{2(5j+1)}{s}$ and $2(5j+1)-1$ of which must be contained in $S$. This would contradict $|D_\pi^-| = 1$.
					
					If $\theta(S') = \frac{2(5j-2)}{s}$, then $S'$ must contain $2(5j-2)-1$ red agents that disapprove of fraction $\frac{2(5j-2)}{s}$. There are a total of $2(5j-2)$ red agents that approve of fraction $\frac{2(5j-2)}{s}$, namely $R_j^{red}$, and $2(5j-2)-1$ of which must be contained in $S$. This would contradict $|D_\pi^-| = 1$.
					
					Thus this case cannot occur.
					
					\item $a \in B$.\\
					Let us denote the set of blue agents that approve of fraction $\frac{2(5j+1)}{s}$ by $S^b$, i.e. $S^b = \{b \in B | \frac{2(5j+1)}{s} \in D_a^+\} = B_j^{fill}$. We have that $|B_j^{fill}| = s-2(5j-2)-6$.
					
					Room $S$ must contain exactly $s-2(5j-2)-7$ agents from $B_j^{fill}$. Let $b \in B_j^{fill}$ be the agent not contained in $S$ and let $S' \in \pi$ denote the room that contains $b$. We have that $b \in D_\pi^+$, thus either $\theta(S') = \frac{2(5j+1)}{s}$ or $\theta(S') = \frac{2(5j-2)}{s}$.
					
					If $\theta(S') = \frac{2(5j+1)}{s}$, then $S'$ must contain $2(5j+1)$ red agents that disapprove of fraction $\frac{2(5j+1)}{s}$. There are a total of $2(5j+1)$ red agents that approve of fraction $\frac{2(5j+1)}{s}$ and all $2(5j+1)$ of them must be contained in $S$. This would contradict $|D_\pi^-| = 1$.
					
					If $\theta(S') = \frac{2(5j-2)}{s}$, then $S'$ must contain $2(5j-2)$ red agents that disapprove of fraction $\frac{2(5j-2)}{s}$. There are a total of $2(5j-2)$ red agents that approve of fraction $\frac{2(5j-2)}{s}$, namely $R_j^{red}$, and all $2(5j-2)$ of them must be contained in $S$. This would contradict $|D_\pi^-| = 1$.
					
					Thus this case cannot occur.
				\end{enumerate}
				
				\item $S \cap (R^{set} \cup \tilde{R}^{set} \cup \hat{R}^{set}) = \emptyset$.\\
				Then every agent in a room that contains a set agent must approve of the fraction of their room, i.e., for any $a' \in R \cup B$, if $\pi(a') \cap R^{set} \neq \emptyset$, then $a \in D_\pi^+$. We have a similar situation as in Case \ref{mixgeq2case1}. 
				
				Every set agent $r_i \in R^{set}$ must be in a room with fraction $\frac{2(5j+1)}{s}$, where $j \in J^i$. Let $S' \in \pi$ denote the room that contains $r_i$. Since every agent in $S'$ must approve of fraction $\frac{2(5j+1)}{s}$, room $S'$ must contain exactly $\{r_i, \tilde{r}_i | i \in A_j\} \cup R_j^{red} \cup B_j^{fill}$ where $A_j \in C$, i.e.,
				$$\{r_i, \tilde{r}_i | i \in A_j\} \cup R_j^{red} \cup B_j^{fill} \in \pi.$$
				Let $\mathscr{S}$ denote the set of rooms in $\pi$ that contain set agents, i.e.,
				$$\mathscr{S} = \{S'' \in \pi | S'' \cap R^{set} \neq \emptyset\}.$$
				Let $C'$ be the family of set agent index sets of the rooms in $\mathscr{S}$, i.e.,
				$$C' = \{\{i | r_i \in S'\}  | S' \in \mathscr{S}\}.$$
				
				Since for every set agent $r_{i'} \in R^{set}$, the room in $\pi$ that contains $r_{i'}$ must be of shape $\{r_{i'}, \tilde{r}_{i'} | i' \in A_{j'}\} \cup R_{j'}^{red} \cup B_{j'}^{fill}$, we have that $C'$ must be a solution for $(X,C)$. This contradicts $(X,C)$ not having a solution.
			\end{enumerate}
			
			\item $a \in R^{set} \cup \tilde{R}^{set}$. \\
			W.l.o.g. assume that $a \in R^{set}$. As $a$ is a set agent, we can write $a = r_i$. Since $|D_\pi^-|=1$, we have that $\tilde{r}_i$ must be in a room $S$ that it approves of, i.e., $\tilde{r}_i \in D_\pi^+$. Let us write $\theta(S) = \frac{2(5j+1)}{s}$, where $j \in J^i$. There are exactly $s$ agents that approve of fraction $\frac{2(5j+1)}{s}$, namely $\{r_{i'}, \tilde{r}_{i'} | i' \in A_j\} \cup R_j^{red}$ which includes $r_i$. Since $r_i \in D_\pi^-$, we have that $r_i$ cannot be in $S$. Thus $S$ must contain an red agent $r$, where $r \neq r_i$, that disapproves of the fraction of fraction $\frac{2(5j+1)}{s}$. Therefore we have at least 2 agents in $D_\pi^-$, i.e., $r_i, r \in D_\pi^-$. This contradicts $|D\pi^-|=1$. \qedhere
		\end{enumerate}
	\end{enumerate}
\end{proof}

\begin{lemma}\label{mixpopnoinst}
	If $(X,C)$ has no solution, then $\pi_{mon}$ is strictly popular.
\end{lemma}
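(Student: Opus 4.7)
The plan is to reduce strict popularity of $\pi_{mon}$ to a simple counting of disapproved-room assignments, exploiting the dichotomous structure of the preferences together with \cref{mixpop2lemma}. I would fix an arbitrary outcome $\pi \neq \pi_{mon}$ and aim to show $\phi(\pi_{mon}, \pi) > 0$.

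The first step is to observe that, under dichotomous preferences, an agent strictly prefers one outcome to another exactly when it approves of the former and disapproves of the latter. Hence $N(\pi_{mon}, \pi) = D_{\pi_{mon}}^+ \cap D_\pi^-$ and $N(\pi, \pi_{mon}) = D_\pi^+ \cap D_{\pi_{mon}}^-$. Since the sets $D_\sigma^+, D_\sigma^-$ partition $R \cup B$ for every outcome $\sigma$, a short inclusion-exclusion using $|D_{\pi_{mon}}^+ \cap D_\pi^-| = |D_\pi^-| - |D_{\pi_{mon}}^- \cap D_\pi^-|$ together with the symmetric identity for $|D_\pi^+ \cap D_{\pi_{mon}}^-|$ collapses everything to the clean formula
\begin{align*}
\phi(\pi_{mon}, \pi) = |D_\pi^-| - |D_{\pi_{mon}}^-|.
\end{align*}

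The second step is to plug in the two counts. The construction of $\pi_{mon}$ shows that the unique agent assigned to a disapproved room is $\hat{r}_6$, so $|D_{\pi_{mon}}^-| = 1$. Since $(X,C)$ has no solution, \cref{mixpop2lemma} gives $|D_\pi^-| \geq 2$. Combining these bounds yields $\phi(\pi_{mon}, \pi) \geq 1 > 0$, so $\pi_{mon}$ is strictly more popular than every other outcome, hence strictly popular.

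The bulk of the work is done inside \cref{mixpop2lemma}; the only genuinely non-trivial step here is the identity $\phi(\pi_{mon}, \pi) = |D_\pi^-| - |D_{\pi_{mon}}^-|$, which is precisely where dichotomous preferences are used. Once this identity is in place, the conclusion follows immediately from the lower bound supplied by \cref{mixpop2lemma}.
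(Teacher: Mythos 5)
Your proof is correct and follows essentially the same route as the paper: both rest on \cref{mixpop2lemma} giving $|D_\pi^-|\geq 2$ and on $|D_{\pi_{mon}}^-|=1$. The only difference is that your identity $\phi(\pi_{mon},\pi)=|D_\pi^-|-|D_{\pi_{mon}}^-|$ (valid here because all preferences are dichotomous) cleanly replaces the paper's two-case analysis on whether $\hat{r}_6\in D_\pi^-$.
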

\begin{proof}
	Let $\pi$ be an arbitrary outcome such that $\pi \neq \pi_{mon}$. By \cref{mixpop2lemma} we have that $|D_\pi^-| \geq 2$. Consider the following 2 cases.
	\begin{enumerate}
		\item $\hat{r}_6 \in D_\pi^-$. \\
		Then we have $N(\pi, \pi_{mon}) = \emptyset$. Since $|D_\pi^-| \geq 2$, there exists an agent $a \in D_\pi^-$ such that $a \neq \hat{r}_6$. We have that $a \in D_{\pi_{mon}}^+$, thus $|N(\pi_{mon}, \pi)| \geq 1$. Therefore $\phi(\pi_{mon},\pi) \geq 1$, i.e., $\pi_{mon}$ is more popular than $\pi.$
		
		\item $\hat{r}_6 \notin D_\pi^-$. \\
		Then we have $N(\pi, \pi_{mon}) = \{\hat{r}_6\}$. Since $|D_\pi^-| \geq 2$, there exist agents $a_1,a_2 \in D_\pi^-$ such that $\hat{r}_6 \neq a_1$ and $\hat{r}_6 \neq a_2$. We have that $a_1, a_2 \in D_{\pi_{mon}}^+$, thus $|N(\pi_{mon}, \pi)| \geq 2$. Therefore $\phi(\pi_{mon},\pi) \geq 1$, i.e., $\pi_{mon}$ is more popular than $\pi.$ \qedhere
	\end{enumerate}
\end{proof}

\begin{observation}\label{mixpopyesinst}
	If $(X,C)$ has some solution $C' \subseteq C$, then $\pi_{mon}$ is not popular. This is due to reduced outcome $\pi_{C'}$ being more popular than $\pi_{mon}$.
\end{observation}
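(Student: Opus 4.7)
The plan is to directly compute the popularity margin $\phi(\pi_{C'}, \pi_{mon})$ and show it is strictly positive, from which $\pi_{mon}$ being dominated follows immediately. All of the structural work has already been done in the construction of the predefined outcomes, so the argument essentially reduces to reading two facts off those constructions and invoking dichotomy.

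First I would recall two ingredients from the preceding subsection. Because $C' \subseteq C$ partitions $X$, the reduced outcome $\pi_{C'}$ is well-defined and places every agent into a room whose fraction lies in its approved set, i.e., $D_{\pi_{C'}}^+ = R \cup B$. In contrast, the monolithic outcome $\pi_{mon}$ places $\hat{r}_6$ into $P_{q+2} = R^{set} \cup \tilde{R}^{set} \cup \hat{R}^{set} \cup R^{mon}$, which has fraction $1 \in D_{\hat{r}_6}^-$, while every other agent lies in an approved room; hence $D_{\pi_{mon}}^- = \{\hat{r}_6\}$.

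Next I would use dichotomy: since each preference relation has only two indifference classes, an agent $a$ strictly prefers outcome $\pi$ to outcome $\pi'$ if and only if $a \in D_\pi^+ \cap D_{\pi'}^-$. Applied to the pair $(\pi_{C'}, \pi_{mon})$, no agent can strictly prefer $\pi_{mon}$ over $\pi_{C'}$ (under $\pi_{C'}$ nobody is in a disapproved room), so $N(\pi_{mon}, \pi_{C'}) = \emptyset$. In the other direction, $\hat{r}_6 \in D_{\pi_{C'}}^+ \cap D_{\pi_{mon}}^-$, so $\hat{r}_6 \in N(\pi_{C'}, \pi_{mon})$. Combining these,
\[
\phi(\pi_{C'}, \pi_{mon}) = |N(\pi_{C'}, \pi_{mon})| - |N(\pi_{mon}, \pi_{C'})| \geq 1 - 0 > 0,
\]
so $\pi_{C'}$ is more popular than $\pi_{mon}$ and therefore $\pi_{mon}$ is not popular.

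There is essentially no obstacle here: the only nontrivial fact is that $\hat{r}_6$ is the unique disapproving agent under $\pi_{mon}$, and this has already been recorded when $\pi_{mon}$ was defined. The dichotomous characterization of strict preference then turns the comparison into a trivial counting argument.
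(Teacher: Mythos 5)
Your proposal is correct and follows essentially the same argument as the paper: both observe that every agent approves of its room under $\pi_{C'}$, so $N(\pi_{mon},\pi_{C'})=\emptyset$, while $\hat{r}_6$ is the unique agent in a disapproved room under $\pi_{mon}$ and hence strictly prefers $\pi_{C'}$, giving $\phi(\pi_{C'},\pi_{mon})>0$.
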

\begin{proof}
	The reduced outcome $\pi_{C'}$ is more popular than $\pi_{mon}$. All the agents are assigned a room with a fraction that it approves of by $\pi_{C'}$, i.e., for any $a \in R\cup B$, $\theta(\pi_{C'}(a))\in D_a^+$. Thus we have that $N(\pi_{mon}, \pi_{C'}) = \emptyset$ as no agent can be improved. However, agent $\hat{r}_6$ is better of in $\pi_{C'}$, i.e., $\hat{r}_6 \in D_{\pi_{mon}}^-$ and $\hat{r}_6 \in D_{\pi_{C'}}^+$. Therefore $N(\pi_{C'}, \pi_{mon}) = \{\hat{r}_6\}$ which means that $\pi_{C'}$ is more popular than $\pi_{mon}$. Thus $\pi_{mon}$ is not popular.
\end{proof}
\begin{theorem}
	A mixed popular outcome for a roommate diversity game cannot computed in polynomial time even if the preferences are dichotomous, unless P=NP.
\end{theorem}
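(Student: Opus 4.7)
The plan is to reduce X3C to the task of computing a mixed popular outcome, using the roommate diversity game $G$ constructed above together with \cref{mixpopnoinst,mixpopyesinst}. The idea is that the monolithic outcome $\pi_{mon}$ serves as a ``flag'': inspecting whether the algorithm's output is the trivial mixed outcome $\{(\pi_{mon},1)\}$ tells us whether $(X,C)$ has a solution.

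First I would establish that, when $(X,C)$ has no solution, the mixed outcome $\{(\pi_{mon},1)\}$ is the \emph{unique} mixed popular outcome of $G$. By \cref{mixpopnoinst}, $\pi_{mon}$ is strictly popular in this case, so $\phi(\pi_{mon},\sigma)>0$ for every pure outcome $\sigma\neq\pi_{mon}$. For any mixed outcome $p=\{(\sigma_1,p_1),\dots,(\sigma_t,p_t)\}$ with $p\neq\{(\pi_{mon},1)\}$, some $\sigma_j\neq\pi_{mon}$ carries positive weight $p_j$, and linearity of the popularity margin gives
\[
\phi(\pi_{mon},p) \;=\; \sum_{j=1}^t p_j\,\phi(\pi_{mon},\sigma_j) \;>\; 0,
\]
since every term is nonnegative and at least one term is strictly positive. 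Hence $\pi_{mon}$ strictly beats $p$, so $p$ is not popular. Conversely, when $(X,C)$ has some solution $C'$, \cref{mixpopyesinst} immediately rules out $\{(\pi_{mon},1)\}$ as a mixed popular outcome, since $\pi_{C'}$ is more popular than $\pi_{mon}$.

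With this dichotomy in hand, the reduction is routine. Assume for contradiction that there is a polynomial-time algorithm $\mathcal{A}$ that on input any roommate diversity game outputs a representation of a mixed popular outcome. Given an X3C instance $(X,C)$, I construct $G$ in polynomial time as in the preceding subsections and run $\mathcal{A}$ on $G$. Inspecting the returned mixed outcome $p$, I answer ``no solution'' if the support of $p$ is exactly $\{\pi_{mon}\}$ and ``has a solution'' otherwise. By the previous paragraph this correctly decides X3C in polynomial time, forcing $\mathrm{P}=\mathrm{NP}$.

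The main obstacle I expect is the uniqueness argument in the no-instance case, namely promoting strict popularity of $\pi_{mon}$ among pure outcomes to uniqueness among \emph{all} mixed outcomes; but this follows cleanly from linearity of $\phi$ in each argument, as shown above. A minor technical point is that the reduction presumes $\mathcal{A}$ returns a succinct description of its mixed outcome (a list of pure outcomes together with their probabilities), so that the test ``support of $p$ equals $\{\pi_{mon}\}$'' can itself be carried out in polynomial time; this is a reasonable convention for any polynomial-time algorithm that outputs such an object.
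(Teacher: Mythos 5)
Your proposal is correct and follows essentially the same route as the paper: both use $\pi_{mon}$ as a flag, invoke \cref{mixpopnoinst} and \cref{mixpopyesinst} for the two directions, and decide X3C by testing whether the returned mixed outcome is $\{(\pi_{mon},1)\}$. Your explicit linearity argument promoting strict popularity over pure outcomes to uniqueness over all mixed outcomes is a detail the paper leaves implicit, and it is a welcome addition rather than a divergence.
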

\begin{proof}
	From \cref{mixpopnoinst} we have that $\pi_{mon}$ is strictly popular, if $(X,C)$ has no solution. From \cref{mixpopyesinst} we have that $\pi_{mon}$ is not popular, if $(X,C)$ has a solution. Thus if we were able to compute a mixed popular outcome in polynomial time, then we would be able to verify whether $(X,C)$ has a solution in polynomial time by checking whether the computation yields mixed outcome $\{(\pi_{mon}, 1)\}$. If $\{(\pi_{mon}, 1)\}$ is the mixed popular outcome, then $(X,C)$ has no solution. Otherwise $(X,C)$ has a solution. As the X3C problem is NP-complete, this would imply that P=NP.
\end{proof}
\section{Popularity}
In this section we show that a popular outcome is not guaranteed to exist in a roommate diversity game. We provide a roommate diversity game in which no popular outcome exist. To show that determining the existence of a popular outcome is co-NP-Hard, we present a reduction from the X3C problem. The reduction is similar to those in the previous sections.

Let $R = \{r_1,r_2,r_3\}$ and $B = \{b_1,b_2,b_3,b_4,b_5,b_6\}$. Consider the roommate diversity game $\overline{G} = (R,B,3,(\succsim_a)_{a\in R\cup B})$ with the following trichotomous preference profiles.

\begin{table}[h!]
	\begin{center}
		\begin{tabular}{ | l | l | l | l | }
			\hline
			\multirow{2}{*}{Agent} & \multicolumn{3}{c|}{Preference Profile}  \\\cline{2-4}
			& $D_a^+$ & $D_a^n$ & $D_a^-$  \\\hline
			$a = r_1$ & $\{\frac{1}{3}\}$ & & $\{\frac{2}{3}, \frac{3}{3}\}$  \\  
			$a \in \{r_2,r_3\}$ & $\{\frac{2}{3}\}$ & & $\{\frac{1}{3}, \frac{3}{3}\}$  \\\hline\hline
			
			$a \in \{b_1,b_2,b_3,b_4\}$ & $\{\frac{1}{3}\}$ & $\{\frac{2}{3}\}$ & $\{\frac{0}{3}\}$ \\
			$a \in \{b_5,b_6\}$ & $\{\frac{0}{3}\}$ & & $\{\frac{1}{3}, \frac{2}{3}\}$\\\hline
		\end{tabular}
		\caption{Preference profile $(\succsim_a)_{a \in R \cup B}$.}
	\end{center}
\end{table}
Let us define the rooms 
\begin{align*}
	& P_1 = \{r_1, \hat{b}_1, \hat{b}_2\}; & P_2 = \{r_2,r_3,\hat{b}_3\}; & & P_3 = \{b_5,b_6,\hat{b}_4\},
\end{align*}
where $\hat{b}_1,\hat{b}_2,\hat{b}_3,\hat{b}_4 \in \{b_1,b_2,b_3,b_4\}$. An outcome $\pi_{top}$ is called a top-type outcome if we can write $\pi_{top} = \{P_1, P_2, P_3\}$.

\begin{lemma}\label{popnotgar2more}
	For any outcome $\pi$ of $\overline{G}$, we have that there are at least 2 agents in a room with a fraction that it does not approve of. That is, for any outcome $\pi$ of $\overline{G}$, 
	$$|D_\pi^n \cup D_\pi^-| \geq 2.$$
\end{lemma}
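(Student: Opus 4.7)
The plan is a short case analysis on the distribution of the three red agents across the three rooms of $\pi$. Since $|R|=3$ and each room has size $3$, the multiset of red-counts across the rooms must be exactly one of $\{3,0,0\}$, $\{2,1,0\}$, or $\{1,1,1\}$, corresponding to the three room-fraction profiles $\{\frac{3}{3},\frac{0}{3},\frac{0}{3}\}$, $\{\frac{2}{3},\frac{1}{3},\frac{0}{3}\}$, and $\{\frac{1}{3},\frac{1}{3},\frac{1}{3}\}$. In each case I exhibit at least two agents lying in $D_\pi^n \cup D_\pi^-$.

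The two extreme cases are immediate from the red agents' preferences. In the $\{3,0,0\}$ case the pure red room has fraction $\frac{3}{3}$, which all of $r_1, r_2, r_3$ disapprove, so all three reds lie in $D_\pi^-$. In the $\{1,1,1\}$ case every room has fraction $\frac{1}{3}$, so $r_2$ and $r_3$ both lie in $D_\pi^-$.

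The main case is $\{2,1,0\}$, and I split it according to the location of $r_1$. If $r_1$ sits in the $\frac{2}{3}$ room, then $r_1 \in D_\pi^-$, and since exactly one of $r_2, r_3$ must then occupy the $\frac{1}{3}$ room that agent also lies in $D_\pi^-$, yielding two disapprovers immediately. Otherwise $r_1$ sits in the $\frac{1}{3}$ room and $r_2, r_3$ both occupy the $\frac{2}{3}$ room, so the red side is entirely in $D_\pi^+$ and I must work on the blue side. Here the key observation is a pigeonhole: since $|\{b_5, b_6\}| = 2$ while the pure blue room has three occupants, at least one agent of $\{b_1, \dots, b_4\}$ sits in the pure blue room and hence lies in $D_\pi^-$ (as those agents disapprove $\frac{0}{3}$). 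The single blue occupant of the $\frac{2}{3}$ room is then either from $\{b_1, \dots, b_4\}$ (putting it in $D_\pi^n$, since $\frac{2}{3}$ is their neutral fraction) or from $\{b_5, b_6\}$ (putting it in $D_\pi^-$); either way we obtain a second agent in $D_\pi^n \cup D_\pi^-$. The only delicate subcase is precisely this last one, where all three red agents are happy and the bound must be extracted from the interaction between the scarcity of $\{b_5, b_6\}$ and the neutrality of $\frac{2}{3}$ for $\{b_1, \dots, b_4\}$; this is what I expect to be the main obstacle, and once handled the lemma follows.
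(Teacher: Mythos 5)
Your proof is correct, but it is organized quite differently from the paper's. You enumerate the three possible red-count profiles $\{3,0,0\}$, $\{2,1,0\}$, $\{1,1,1\}$ over the three rooms and exhibit two unhappy agents directly in each case; the only nontrivial branch is the $\{2,1,0\}$ profile with $r_1$ in the $\frac{1}{3}$-room, where you correctly combine the pigeonhole on the pure blue room (only $b_5,b_6$ approve $\frac{0}{3}$, so one of $b_1,\dots,b_4$ sits there and lands in $D_\pi^-$) with the observation that the lone blue occupant of the $\frac{2}{3}$-room lands in $D_\pi^n \cup D_\pi^-$ regardless of its identity; these two witnesses live in different rooms, so they are distinct. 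The paper instead argues by contradiction, assuming $|D_\pi^n \cup D_\pi^-| \le 1$ and splitting on the identity of the (at most one) offending agent; its recurring engine is that the agents of $\{b_1,\dots,b_4\}$ who must approve their rooms force at least two rooms of fraction $\frac{1}{3}$, while only $r_1$ approves that fraction. Your forward enumeration is arguably cleaner and more self-contained for this nine-agent instance, at the cost of being tailored to the specific room-size-$3$ combinatorics; the paper's contradiction-style template is the one it reuses (scaled up) in the subsequent hardness lemmas where direct enumeration of room profiles is no longer feasible. Both arguments are complete and establish the lemma.
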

\begin{proof}
	To derive a contradiction, assume that there exists an outcome $\pi$ of $\overline{G}$ such that $|D_\pi^n \cup D_\pi^-| < 2$. Consider the following cases.
	\begin{enumerate}
		\item $|D_\pi^n \cup D_\pi^-| = 0$.\\
		Then $b_1,b_2,b_3,b_4 \in D_\pi^+$, therefore each of $b_1,b_2,b_3,b_4$ must be in a room with fraction $\frac{1}{3}$. We require at least 2 rooms $S_1,S_2$ with fraction $\frac{1}{3}$ so that each of $b_1,b_2,b_3,b_4$ is contained in a room with fraction  $\frac{1}{3}$. There is exactly 1 red agent that approves of fraction $\frac{1}{3}$, namely $r_1$. Thus $S_1$ or $S_2$ must contain a red agent $r$ that does not approve of fraction $\frac{1}{3}$. Thus $|D_\pi^n \cup D_\pi^-| \geq 1$ in this case as $r \in D_\pi^-$. This contradicts $|D_\pi^n \cup D_\pi^-| = 0$.
		
		\item $|D_\pi^n \cup D_\pi^-| = 1$.\\
		Let us write $D_\pi^n \cup D_\pi^- = \{a\}$. Consider the following cases regarding $a$.
		\begin{enumerate}[\theenumi.1.]
			\item $a = r_1$.\\
			Then we have that $b_1,b_2,b_3,b_4 \in D_\pi^+$, therefore each of $b_1,b_2,b_3,b_4$ must be in a room with fraction $\frac{1}{3}$. We require at least 2 rooms $S_1,S_2$ with fraction $\frac{1}{3}$ so that each of $b_1,b_2,b_3,b_4$ is contained in a room with fraction  $\frac{1}{3}$. There is exactly 1 red agent that approves of fraction $\frac{1}{3}$, namely $r_1$. Thus $S_1$ and $S_2$ both must contain a red agent $r, r' \in R$ that do not approve of fraction $\frac{1}{3}$. Thus $|D_\pi^n \cup D_\pi^-| \geq 2$ in this case as $r,r' \in D_\pi^-$. This contradicts $|D_\pi^n \cup D_\pi^-| = 1$.
			
			\item $a \in \{r_2,r_3\}$.\\
			W.l.o.g. assume that $a = r_2$. We have that $r_3 \in D_\pi^+$, thus the room $S$ that contains $r_3$ must be of fraction $\frac{2}{3}$. There are exactly 2 agents that approve of fraction $\frac{2}{3}$, namely $r_2$ and $r_3$. Since $r_2 \notin S$, as $r_2 \in D_\pi^-$, the room $S$ must contain a red agent $r \neq r_2$ such that $r \in D_\pi^-$. Thus $|D_\pi^n \cup D_\pi^-| \geq 2$ in this case as $r_2,r \in D_\pi^-$. This contradicts $|D_\pi^n \cup D_\pi^-| = 1$.

			\item $a \in \{b_1,b_2,b_3,b_4\}$.\\
			Then we have for each agent $b \in \{b_1,b_2,b_3,b_4\}\setminus \{a\}$, we must have that $b \in D_\pi^+$, therefore $\theta(\pi(b)) = \frac{1}{3}$. We require at least 2 rooms $S_1,S_2$ with fraction $\frac{1}{3}$ so that each agent in $\{b_1,b_2,b_3,b_4\} \setminus \{a\}$ is contained in a room with fraction  $\frac{1}{3}$. There is exactly 1 red agent that approves of fraction $\frac{1}{3}$, namely $r_1$. Thus $S_1$ or $S_2$ must contain a red agent $r$ that does not approve of fraction $\frac{1}{3}$. Note that $a\neq r$ as $a$ must be a blue agent. Thus $|D_\pi^n \cup D_\pi^-| \geq 2$ in this case as $r \in D_\pi^-$ and $a \in D_\pi^n \cup D_\pi^-$. This contradicts $|D_\pi^n \cup D_\pi^-| = 1$.
			
			\item $a \in \{b_5,b_6\}$.\\
			Then $b_1,b_2,b_3,b_4 \in D_\pi^+$, therefore each of $b_1,b_2,b_3,b_4$ must be in a room with fraction $\frac{1}{3}$. We require at least 2 rooms $S_1,S_2$ with fraction $\frac{1}{3}$ so that each of $b_1,b_2,b_3,b_4$ is contained in a room with fraction  $\frac{1}{3}$. There is exactly 1 red agent that approves of fraction $\frac{1}{3}$, namely $r_1$. Thus $S_1$ or $S_2$ must contain a red agent $r$ that does not approve of fraction $\frac{1}{3}$. Note that $r \neq a$ as $a$ must be a blue agent. Thus $|D_\pi^n \cup D_\pi^-| \geq 2$ in this case as $r \in D_\pi^-$ and $a \in D_\pi^n \cup D_\pi^-$. This contradicts $|D_\pi^n \cup D_\pi^-| = 1$.
			\qedhere
		\end{enumerate}
	\end{enumerate}
\end{proof}

\begin{lemma}\label{popnotgartopcard}
	For any outcome $\pi$ of $\overline{G}$, if $|D_\pi^n| = 1$ and $|D_\pi^-| = 1$, then $\pi$ is a top-type outcome. %That is, for any outcome $\pi$ of $\overline{G}$, 
	%$$|D_\pi^n| = 1 \wedge |D_\pi^-| = 1 \implies \pi \text{ is a top-type outcome.}$$
\end{lemma}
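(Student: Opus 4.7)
The plan is to narrow $\pi$ down to the top-type shape by first pinning down the red agents (who have no neutral fractions at all, so each red lives in either $D_\pi^+$ or $D_\pi^-$), and then using the single-disapproval and single-neutral budgets to pin down the blue agents.

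First I would show that all three red agents must lie in $D_\pi^+$. Since $|D_\pi^-| = 1$, at most one red agent can disapprove of its room. Suppose for contradiction that some red is in $D_\pi^-$. If $r_1 \in D_\pi^-$ while $r_2, r_3 \in D_\pi^+$, then $r_2$ and $r_3$ must each sit in a room of fraction $\frac{2}{3}$; as there are only three red agents, this forces $r_2$ and $r_3$ to share a single room (with one blue), leaving $r_1$ alone among reds in its own room and hence placing $r_1$ in a room of fraction $\frac{1}{3}$, which $r_1$ approves, contradicting $r_1 \in D_\pi^-$. If instead (w.l.o.g.) $r_2 \in D_\pi^-$ with $r_1, r_3 \in D_\pi^+$, then $r_1$'s room has fraction $\frac{1}{3}$ (so $r_1$ is its only red), and $r_3$'s room must have fraction $\frac{2}{3}$, forcing a second red in that room; the only candidate is $r_2$, which would then sit in a room of fraction $\frac{2}{3}$ and thus belong to $D_\pi^+$, a contradiction. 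The remaining sub-cases place at least two reds in $D_\pi^-$, contradicting $|D_\pi^-|=1$.

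Next, with all reds approving, the only possible red layout is: $r_1$ sitting alone among reds in a room of fraction $\frac{1}{3}$, the pair $r_2, r_3$ together in a room of fraction $\frac{2}{3}$, and one remaining room made entirely of blue agents (fraction $\frac{0}{3}$). So $\pi$ already has the skeletal shape of a top-type outcome, and it remains to place the six blue agents into the two-blue slot of the $r_1$-room, the one-blue slot of the $\{r_2,r_3\}$-room, and the three-blue room. I will then use the budget $|D_\pi^-|=1$, $|D_\pi^n|=1$ to force the blues. Any agent from $\{b_5,b_6\}$ placed in the $r_1$-room (fraction $\frac{1}{3}$) or the $\{r_2,r_3\}$-room (fraction $\frac{2}{3}$) lands in $D_\pi^-$; and the three-blue room (fraction $\frac{0}{3}$) must contain at least one agent from $\{b_1,\dots,b_4\}$ (since $|\{b_5,b_6\}|=2<3$), and each such agent contributes to $D_\pi^-$. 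Since $|D_\pi^-|=1$, exactly one agent from $\{b_1,\dots,b_4\}$ sits in the three-blue room and no agent from $\{b_5,b_6\}$ sits elsewhere; hence both $b_5$ and $b_6$ must occupy the three-blue room. The $\{r_2,r_3\}$-room then receives its single blue from $\{b_1,\dots,b_4\}$ (contributing neutrally to $D_\pi^n$, giving the unique neutral agent) and the $r_1$-room receives the remaining two agents from $\{b_1,\dots,b_4\}$ (all approving $\frac{1}{3}$). This is precisely the top-type shape $\pi = \{P_1,P_2,P_3\}$.

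The main obstacle is the first step: the case analysis showing that no red agent can be in $D_\pi^-$. It is not hard, but it requires chasing forced room-fractions through the small universe of possibilities rather than following a single clean counting argument; once that step is done, the blue bookkeeping is a direct consequence of the two budget constraints.
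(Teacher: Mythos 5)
Your proof is correct and follows essentially the same strategy as the paper's: a case analysis, driven by counting how many agents approve of each fraction, showing that the unique disapproving agent (and the unique neutral agent) must come from $\{b_1,b_2,b_3,b_4\}$, after which the room structure is forced. The only organizational difference is that you pin down the red agents first and then do the blue bookkeeping, and you explicitly carry the argument through to the top-type shape $\{P_1,P_2,P_3\}$ --- a final structural step the paper's proof leaves implicit after establishing $a^-, a^n \in \{b_1,b_2,b_3,b_4\}$.
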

\begin{proof}
	Let $\pi$ be an outcome such that $|D_\pi^n| = 1$ and $|D_\pi^-| = 1$. Let us write $D_\pi^n = \{a^n\}$ and $|D_\pi^-| = \{a^-\}$. Since the agents in $\{b_1,b_2,b_3,b_4\}$ are the only agents with a non-empty corresponding $D_a^n$, we have that $a^n \in \{b_1,b_2,b_3,b_4\}$. 
	
	To derive a contradiction, assume that $a^- \notin \{b_1,b_2,b_3,b_4\}$. Then we have that $a^- \in \{r_1,r_2,r_3,b_5,b_6\}$. Let us consider the following cases. 
	\begin{enumerate}
		\item $a^- = r_1$.\\
		Then we have that each agent $b \in \{b_1,b_2,b_3,b_4\} \setminus \{a^n\}$ must be in a room that it approves of, i.e., $b \in D_\pi^+$.
		We require at least 2 rooms $S_1,S_2$ with fraction $\frac{1}{3}$ so that each agent in $\{b_1,b_2,b_3,b_4\} \setminus \{a^n\}$ is contained in a room with fraction  $\frac{1}{3}$. There is exactly 1 red agent that approves of fraction $\frac{1}{3}$, namely $r_1$. Thus $S_1$ and $S_2$ both must contain a red agent $r, r' \in R$ that do not approve of fraction $\frac{1}{3}$. Thus $|D_\pi^-| \geq 2$ in this case as $r,r' \in D_\pi^-$. This contradicts $|D_\pi^-| = 1$.

		\item $a^- \in \{r_2,r_3\}$.\\
		W.l.o.g. assume that $a = r_2$. We have that $r_3 \in D_\pi^+$, thus the room $S$ that contains $r_3$ must be of fraction $\frac{2}{3}$. There are exactly 2 agents that approve of fraction $\frac{2}{3}$, namely $r_2$ and $r_3$. Since $r_2 \notin S$, as $r_2 \in D_\pi^-$, the room $S$ must contain a red agent $r \neq r_2$ such that $r \in D_\pi^-$. Thus $|D_\pi^-| \geq 2$ in this case as $r_2,r \in D_\pi^-$. This contradicts $|D_\pi^-| = 1$.

		\item $a^- \in \{b_5,b_6\}$.\\
		Then $b_1,b_2,b_3,b_4 \in D_\pi^+$, therefore each of $b_1,b_2,b_3,b_4$ must be in a room with fraction $\frac{1}{3}$. We require at least 2 rooms $S_1,S_2$ with fraction $\frac{1}{3}$ so that each of $b_1,b_2,b_3,b_4$ is contained in a room with fraction  $\frac{1}{3}$. There is exactly 1 red agent that approves of fraction $\frac{1}{3}$, namely $r_1$. Thus $S_1$ or $S_2$ must contain a red agent $r$ that does not approve of fraction $\frac{1}{3}$. Note that $r \neq a^-$ as $a^-$ must be a blue agent. Thus $|D_\pi^-| \geq 2$ in this case as $r, a^- \in D_\pi^-$. This contradicts $|D_\pi^-| = 1$.
		\qedhere
	\end{enumerate}
\end{proof}

\begin{observation}\label{popnot1}
	For any outcome $\pi$ of $\overline{G}$, we have that $|D_\pi^-| \geq 1$.
\end{observation}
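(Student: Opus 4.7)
The plan is to prove the observation by contradiction. Suppose $|D_\pi^-|=0$, so every agent is assigned a room whose fraction lies in its approval set. The key leverage point will be the evening-type agents $b_5$ and $b_6$: their approval set contains only $\frac{0}{3}$, so under the contradiction hypothesis each of them must be placed in a pure blue room (i.e., a room whose three occupants are all blue).

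From there I would case-split on whether $b_5$ and $b_6$ share a room. Since $|B|=6$ is exactly two full rooms of size $3$, the two cases together are exhaustive. In the shared case, the third seat of that pure blue room must be filled by some agent in $\{b_1,b_2,b_3,b_4\}$, because all three red agents still have to be placed in the remaining two rooms; but each of $b_1,b_2,b_3,b_4$ disapproves of $\frac{0}{3}$, immediately producing an element of $D_\pi^-$. In the split case, the two pure blue rooms containing $b_5$ and $b_6$ between them consume all six blue agents, so each of $b_1,\dots,b_4$ ends up in a pure blue room and thus in $D_\pi^-$. Either branch contradicts $|D_\pi^-|=0$, finishing the argument.

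I do not expect any real obstacle: the statement is a short forcing argument driven entirely by the inflexible preferences of $b_5$ and $b_6$ together with the fact that $|B|$ exactly matches two rooms' worth of agents. The only thing to be careful about is to confirm that the two subcases above really do exhaust the placements of $b_5$ and $b_6$ under the contradiction hypothesis, but both collapse in one line.
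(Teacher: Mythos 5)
Your proposal is correct and rests on exactly the same leverage as the paper's proof: $b_5$ and $b_6$ force the existence of a pure blue room, yet only these two blue agents approve of fraction $\frac{0}{3}$, so any such room must seat one of $b_1,\dots,b_4$, who then lands in $D_\pi^-$. The paper dispenses with your case split on whether $b_5$ and $b_6$ share a room, since the counting argument applies to any single fraction-$\frac{0}{3}$ room directly, but this is only a difference in presentation, not in substance.
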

\begin{proof}
	Assume that $|D_\pi^-| = 0$, then we have that $b_5,b_6 \in D_\pi^+$. Therefore there must exist a room $S$ with fraction $\frac{0}{3}$. There are exactly 2 blue agents that approve of fraction $\frac{0}{3}$, the remaining blue agents disapprove of fraction $\frac{0}{3}$. Thus $S$ must contain a blue agent that disapproves of the fraction of its assigned room. Therefore $|D_\pi^-| = 0$ cannot hold. 
\end{proof}

\begin{observation}\label{popnot2}
	For any outcome $\pi$ of $\overline{G}$, if $|D_\pi^n \cup D_\pi^-| = 2$, then $D_\pi^n \cup D_\pi^- \subseteq \{b_1,b_2,b_3,b_4\}$.
\end{observation}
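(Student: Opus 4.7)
\medskip
\noindent \emph{Proof plan.}
The approach is argument by contradiction: assume $|D_\pi^n \cup D_\pi^-| = 2$ but some $a \in D_\pi^n \cup D_\pi^-$ lies outside $\{b_1, b_2, b_3, b_4\}$, so $a \in \{r_1, r_2, r_3, b_5, b_6\}$. The key observation is that the preference profile gives $D_{a'}^n = \emptyset$ for every $a' \in \{r_1, r_2, r_3, b_5, b_6\}$, so any such $a'$ appearing in $D_\pi^n \cup D_\pi^-$ must in fact lie in $D_\pi^-$. I would split on the type of $a$ and, in each case, exhibit a third agent that must also be in $D_\pi^n \cup D_\pi^-$, contradicting $|D_\pi^n \cup D_\pi^-| = 2$.

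First I would handle $a = r_1$. Since at most one of $b_1, b_2, b_3, b_4$ can coincide with the second agent of $D_\pi^n \cup D_\pi^-$, at least three of these four lie in $D_\pi^+$ and so are placed in rooms of fraction $\frac{1}{3}$. Each such room has only two blue slots, so at least two fraction-$\frac{1}{3}$ rooms are required, each needing a red agent. As $r_1$ is the only red approver of $\frac{1}{3}$ and $r_1 \notin D_\pi^+$, both $r_2$ and $r_3$ must sit in fraction-$\frac{1}{3}$ rooms and hence lie in $D_\pi^-$, giving $|D_\pi^-| \geq 3$.

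Next I would handle $a \in \{r_2, r_3\}$; by symmetry take $a = r_2$. Since $r_2 \in D_\pi^-$, $r_2$ is not in a fraction-$\frac{2}{3}$ room. If $r_3 \in D_\pi^+$, its fraction-$\frac{2}{3}$ room must include $r_1$ as the second red agent, so $r_1 \in D_\pi^-$; then $b_1, \dots, b_4 \in D_\pi^+$ need two fraction-$\frac{1}{3}$ rooms, yet only $r_2$ remains as a potential red occupant, forcing a further agent into $D_\pi^-$. If instead $r_3 \in D_\pi^-$, an analogous count on the fraction-$\frac{1}{3}$ rooms demanded by $b_1, \dots, b_4$ forces one of $b_5, b_6$ into $D_\pi^-$. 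The case $a \in \{b_5, b_6\}$, WLOG $a = b_5$, is similar: either $b_6 \in D_\pi^+$, which forces a fraction-$\frac{0}{3}$ room whose remaining two blue members must then disapprove of $\frac{0}{3}$; or $b_6 \in D_\pi^-$, and then the fraction-$\frac{1}{3}$ rooms needed by $b_1, \dots, b_4$ again force one of $r_2, r_3$ into $D_\pi^-$.

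The principal obstacle is ensuring the case analysis is exhaustive and that the pigeonhole accounting -- comparing the number of agents approving each of $\frac{0}{3}, \frac{1}{3}, \frac{2}{3}$ against the capacities of the available rooms -- always yields an explicit third disapproving agent. The counting template is essentially the one used in \cref{popnotgar2more} and \cref{popnotgartopcard}, so I would reuse it throughout rather than redo each enumeration from scratch.
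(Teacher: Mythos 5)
Your argument is correct: every case closes with a genuine third member of $D_\pi^n \cup D_\pi^-$, and since $D_{a}^n = \emptyset$ for all $a \notin \{b_1,b_2,b_3,b_4\}$ your reduction of each such offending agent to $D_\pi^-$ is legitimate, so the case split on $a \in \{r_1\}$, $a \in \{r_2,r_3\}$, $a \in \{b_5,b_6\}$ is exhaustive. The route differs from the paper's in its decomposition: the paper branches on $|(D_\pi^n \cup D_\pi^-) \cap \{b_1,b_2,b_3,b_4\}| \in \{0,1\}$ and in both branches derives the contradiction from the single fact that three or four of $b_1,\dots,b_4$ demand two fraction-$\frac{1}{3}$ rooms (whose red and fourth-blue slots must then be filled by disapproving agents), whereas you branch on the identity of the offending agent and, in some subcases, count against the fraction-$\frac{2}{3}$ room (forcing $r_1$ in with $r_3$) or the fraction-$\frac{0}{3}$ room (forcing two of $b_1,\dots,b_4$ in with $b_6$) instead. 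The paper's version is shorter (two cases, one counting template); yours needs three cases with nested subcases but makes the structural reason visible for each specific agent type, and it reuses the template of \cref{popnotgar2more} and \cref{popnotgartopcard} as you note. One cosmetic imprecision: in the subcase $a = r_2$, $r_3 \in D_\pi^+$, the displaced members of $\{b_1,\dots,b_4\}$ may land in the fraction-$\frac{2}{3}$ room and hence in $D_\pi^n$ rather than $D_\pi^-$; the contradiction with $|D_\pi^n \cup D_\pi^-| = 2$ still goes through, but you should say ``into $D_\pi^n \cup D_\pi^-$'' rather than ``into $D_\pi^-$'' there.
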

\begin{proof}
	Let $\pi$ be an outcome of $\overline{G}$ such that $|D_\pi^n \cup D_\pi^-| = 2$ and let us write $D_\pi^n \cup D_\pi^- = \{a_1,a_2\}$. To derive a contradiction, assume that $D_\pi^n \cup D_\pi^- \nsubseteq \{b_1,b_2,b_3,b_4\}$. Consider the following cases.
	\begin{enumerate}
		\item $|(D_\pi^n \cup D_\pi^-) \cap \{b_1,b_2,b_3,b_4\}| = 1$.\\
		W.l.o.g. assume that $b_4 \in D_\pi^n \cup D_\pi^-$. We have that $b_1,b_2,b_3 \in D_\pi^+$, therefore each of $b_1,b_2,b_3$ must be in a room with fraction $\frac{1}{3}$. We require at least 2 rooms $S_1,S_2$ with fraction $\frac{1}{3}$ so that each of $b_1,b_2,b_3$ is contained in a room with fraction $\frac{1}{3}$. 
		
		One of $S_1$ or $S_2$ must contain the red agent $r \in \{r_2,r_3\}$. Additionally one of $S_1$ or $S_2$ must contain the blue agent $b \in \{b_5,b_6\}$. Neither agent $r$ nor agent $b$ approves of fraction $\frac{1}{3}$, thus we have that $b_4, r, b \in D_\pi^n \cup D_\pi^-$. This contradicts $|D_\pi^n \cup D_\pi^-| = 2$. Therefore this case cannot occur.

		\item $|(D_\pi^n \cup D_\pi^-) \cap \{b_1,b_2,b_3,b_4\}| = 0$.\\
		Then $b_1,b_2,b_3,b_4 \in D_\pi^+$, therefore each of $b_1,b_2,b_3,b_4$ must be in a room with fraction $\frac{1}{3}$. We require at least 2 rooms $S_1,S_2$ with fraction $\frac{1}{3}$ so that each of $b_1,b_2,b_3,b_4$ is contained in a room with fraction $\frac{1}{3}$. 
		
		Assume that neither $S_1$ nor $S_2$ contain $r_1$. We have that $D_\pi^n \cup D_\pi^- = \{r_2,r_3\}$, thus $r_1 \in D_\pi^+$. The room $S_3$ that contains $r_1$ must contain 2 blue agents $b_5,b_6$ which disapprove of fraction $\frac{1}{3}$. We would have that $D_\pi^n \cup D_\pi^- = \{r_2,r_3,b_5,b_6\}$, which contradicts $|D_\pi^n \cup D_\pi^-| = 2$. Thus one of $S_1$ or $S_2$ must contain $r_1$.
		
		W.l.o.g. assume that $r_2$ is also contained in $S_1$ or $S_2$. We have 3 remaining agents, namely $r_3, b_5, b_6$, thus they must belong in the same room. All 3 agents disapprove of fraction $\frac{1}{3}$, thus we have that $|D_\pi^n \cup D_\pi^-| \geq 3$. This contradicts $|D_\pi^n \cup D_\pi^-| = 2$, thus this case cannot occur.
		\qedhere
	\end{enumerate}
\end{proof}

\begin{lemma}\label{popnotgartopbest}
	Let $\pi$ be an outcome that is not a top-type outcome. There exist a top-type outcome $\pi'$ that is more popular than $\pi$.
\end{lemma}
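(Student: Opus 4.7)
The plan is to first sharpen Lemma~\ref{popnotgar2more} to $|D_\pi^n \cup D_\pi^-| \geq 3$ whenever $\pi$ is not a top-type outcome, and then to construct an explicit top-type $\pi'$ with $\phi(\pi', \pi) \geq 1$.

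For the sharpening, suppose by contradiction that $\pi$ is not top-type yet $|D_\pi^n \cup D_\pi^-| = 2$. By Observation~\ref{popnot2}, $D_\pi^n \cup D_\pi^- \subseteq \{b_1,\ldots,b_4\}$, and by Observation~\ref{popnot1}, $|D_\pi^-| \geq 1$. The split $|D_\pi^n| = |D_\pi^-| = 1$ forces $\pi$ to be top-type by Lemma~\ref{popnotgartopcard}, a contradiction. The split $|D_\pi^n| = 0$, $|D_\pi^-| = 2$ leaves all of $r_1, r_2, r_3, b_5, b_6$ in $D_\pi^+$, so $r_2$ and $r_3$ must share a $\frac{2}{3}$-fraction room (two disjoint such rooms would demand four red agents) whose third occupant is blue, contradicting either $|D_\pi^n| = 0$ (if it lies in $\{b_1,\ldots,b_4\}$) or $b_5, b_6 \in D_\pi^+$ (if it lies in $\{b_5, b_6\}$). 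Hence $|D_\pi^n \cup D_\pi^-| \geq 3$.

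Next I set $x = |D_\pi^+ \cap \{b_1,\ldots,b_4\}|$, $y = |D_\pi^n \cap \{b_1,\ldots,b_4\}|$, $z = |D_\pi^- \cap \{b_1,\ldots,b_4\}|$, and $\beta = |D_\pi^- \cap \{r_1, r_2, r_3, b_5, b_6\}|$, so that $x+y+z = 4$, $z+\beta \geq 1$, and $y+z+\beta \geq 3$. In any top-type outcome $\pi'$ every agent in $\{r_1, r_2, r_3, b_5, b_6\}$ belongs to $D_{\pi'}^+$ (via $r_1 \in P_1$, $r_2, r_3 \in P_2$, $b_5, b_6 \in P_3$), so these five agents contribute exactly $\beta$ to $\phi(\pi', \pi)$. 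The freedom that remains is in distributing $b_1, b_2, b_3, b_4$ among the two $D_{\pi'}^+$-slots in $P_1$, the single $D_{\pi'}^n$-slot in $P_2$, and the single $D_{\pi'}^-$-slot in $P_3$; let $M$ denote the resulting contribution from these four agents, so that $\phi(\pi', \pi) = \beta + M$.

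It then suffices to verify, via a case analysis on the fifteen triples $(x,y,z)$ with $x+y+z=4$, that one can choose the distribution of $b_1, b_2, b_3, b_4$ so that $\beta + M \geq 1$. A greedy rule guides the choice: fill the $P_3$-slot with a $D_\pi^-$-agent whenever $z \geq 1$ (contribution $0$ rather than $-1$), then place agents from $D_\pi^- \cup D_\pi^n$ in the $P_1$- and $P_2$-slots (each such placement is a $+1$ upgrade), and only use $D_\pi^+$-agents to fill any leftover slots. The main obstacle is the bookkeeping itself: the tightest cases are those with $x$ large, where $M$ is most negative (for instance, $(x,y,z) = (4, 0, 0)$ gives $M = -2$), but these are precisely the cases where $y+z+\beta \geq 3$ forces $\beta$ to be correspondingly large ($\beta \geq 3$ in that example), yielding $\phi(\pi', \pi) \geq 1$. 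Running through all fifteen cases confirms the inequality and completes the argument.
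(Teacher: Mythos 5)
Your argument is correct, and it takes a genuinely different route from the paper's. The paper proves the lemma by a direct case split on the pair $(|D_\pi^n|,|D_\pi^-|)$ --- using \cref{popnotgartopcard} to exclude $(1,1)$, \cref{popnot1} to force $|D_\pi^-|\geq 1$, and \cref{popnot2} only in the branch $|D_\pi^n\cup D_\pi^-|=2$ --- and in each branch exhibits a tailored top-type $\pi'$ with ad hoc bounds on $N(\pi,\pi')$ and $N(\pi',\pi)$. You instead first establish the sharper bound $|D_\pi^n\cup D_\pi^-|\geq 3$ for non-top-type $\pi$, a fact the paper never isolates (its case $|D_\pi^-|=2$, $|D_\pi^n|=0$ is handled constructively rather than shown vacuous), and your derivation of it is sound: in the $(0,2)$ subcase, $r_2$ and $r_3$ are indeed forced into a common $\tfrac{2}{3}$-room whose blue occupant lands in $D_\pi^n\cup D_\pi^-$ either way. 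The subsequent accounting $\phi(\pi',\pi)=\beta+M$ with the constraints $y+z+\beta\geq 3$ and $z+\beta\geq 1$ is valid, and I checked all fifteen triples $(x,y,z)$: the greedy assignment gives $M\geq y+z-2$ when $y+z\leq 3$ (so $\beta+M\geq 1$ follows from $\beta\geq 3-y-z$) and $M\geq 1$ unconditionally when $y+z=4$, so the deferred enumeration does close. Two small points to fix when writing it up: your parenthetical that every placement of a $D_\pi^-\cup D_\pi^n$ agent into the $P_1$- or $P_2$-slots ``is a $+1$ upgrade'' is not literally true --- a $D_\pi^n$ agent placed in the $P_2$-slot contributes $0$, since it moves from fraction $\tfrac{2}{3}$ to fraction $\tfrac{2}{3}$ --- though this does not affect the final inequality in any case; and the fifteen-case check should either be written out or compressed via the two-case bound above. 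The trade-off is that the paper's branches are each self-contained and witness-explicit, whereas your version makes the popularity ``budget'' globally visible and, as a bonus, yields the structural fact that non-top-type outcomes satisfy $|D_\pi^n\cup D_\pi^-|\geq 3$.
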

\begin{proof}
	By \cref{popnotgar2more} we have that $|D_\pi^n \cup D_\pi^-| \geq 2$. By \cref{popnotgartopcard} we have that $|D_\pi^n| \neq 1$ or $|D_\pi^-| \neq 1$. Let us consider the following cases.
	\begin{enumerate}
		\item $|D_\pi^-| \neq 1$.\\
		By \cref{popnot1} and case assumption we have that $|D_\pi^-| \geq 2$. Consider the following cases.
		\begin{enumerate}[\theenumi.1.]
			\item $|D_\pi^-| = 2$.\\
			Consider the following cases.
			\begin{enumerate}[\theenumi.\theenumii.1.]
				\item $|D_\pi^n| = 0$.\\
				By \cref{popnot2} we have that $D_\pi^- \subseteq \{b_1,b_2,b_3,b_4\}$. Let us write $D_\pi^- = \{a_1,a_2\}$. Since $a_1, a_2 \in \{b_1,b_2,b_3,b_4\}$, there exists a top-type outcome $\pi'$ such that $D_{\pi'}^- = \{a_1\}$ and $D_{\pi'}^n = \{a_2\}$. Thus we have that $\phi(\pi',\pi) = 1$.
				
				\item $|D_\pi^n| \geq 1$.\\
				Let us write $D_\pi^- = \{a_1, a_2\}$ and $a_3 \in D_\pi^n$. We have that $a_3 \in \{b_1,b_2,b_3,b_4\}$, since the agents in $\{b_1,b_2,b_3,b_4\}$ are the only agents with a corresponding non-empty $D_a^n$. 
				
				Let $\pi'$ be a top-type outcome such that $\{a_3\} = D_{\pi'}^n$ and $\{a^-\} = D_{\pi'}^-$. Note that $N(\pi, \pi') \subseteq \{a^-\}$. Consider the following cases.
				\begin{enumerate}[\theenumi.\theenumii.\theenumiii.1.]
					\item $N(\pi, \pi') =  \emptyset$.\\
					Then we have that $a^- \in D_\pi^-$. W.l.o.g. assume that $a_1 = a^-$. We have that $a_2 \in D_{\pi'}^+$, thus $a_2 \in N(\pi', \pi)$. Thus we have that $\phi(\pi',\pi) > 0$, i.e., $\pi'$ is more popular than $\pi$.
					
					\item $N(\pi, \pi') = \{a^-\}$.\\
					Then we have that $a^- \in D_\pi^n \cup D_\pi^+$ and $a_1,a_2 \in D_{\pi'}^+$. Thus $a_1,a_2 \in N(\pi',\pi)$. Therefore we have that $\phi(\pi',\pi) > 0$, i.e., $\pi'$ is more popular than $\pi$.
				\end{enumerate}
			\end{enumerate}
			\item $|D_{\pi}^-| > 2$.\\
			Let us write $a_1, a_2, a_3 \in D_\pi^-$. Let $\pi'$ be an arbitrary top-type outcome with $\{a^n\} = D_{\pi'}^n$ and $\{a^-\} = D_{\pi'}^-$. Note that $N(\pi, \pi') \subseteq \{a^n, a^-\}$. Consider the following cases.
			\begin{enumerate}[\theenumi.\theenumii.1.]
				\item $a^- \in  D_\pi^-$.\\
				W.l.o.g. assume that $a_1 = a^-$. In this case, we have that $N(\pi, \pi') \subseteq \{a^n\}$. We have that $a_2, a_3 \in N(\pi', \pi)$. Thus $\phi(\pi', \pi) > 0$. 
				
				\item $a^- \notin  D_\pi^-$.\\
				In this case, we have that $N(\pi, \pi') \subseteq \{a^-, a^n\}$. We have that $a_1, a_2, a_3 \in N(\pi', \pi)$. Thus $\phi(\pi', \pi) > 0$. 
				
			\end{enumerate}
		\end{enumerate}
		
		\item $|D_{\pi}^n| \neq 1$.\\\
		Consider the following cases.
		\begin{enumerate}[\theenumi.1.]
			\item $|D_{\pi}^n| = 0$.\\
			By \cref{popnotgar2more}, we have that $|D_{\pi}^-| \geq 2$. This case is proven similarly to case 1.1.1. as we have the exact same case assumption, i.e., $|D_{\pi}^n| = 0$ and $|D_{\pi}^-| \geq 2$.
			
			\item $|D_{\pi}^n| \geq 2$.\\
			Let us write $a_1, a_2 \in D_\pi^n$. We have that $\{a_1, a_2\} \subseteq \{b_1,b_2,b_3,b_4\}$, since the agents in $\{b_1,b_2,b_3,b_4\}$ are the only agents with a corresponding non-empty $D_a^n$.
			By \cref{popnot1}, we have that $|D_{\pi}^-| \geq 1$. 
			Let us write $a_3 \in D_\pi^-$. Let us consider the following cases regarding $a_3$.
			\begin{enumerate}[\theenumi.\theenumii.1.]
				\item $a_3 \in \{b_1,b_2,b_3,b_4\}$.\\
				By the definition of top-type outcome, we can construct a top-type outcome $\pi'$ such that $D_{\pi'}^- = \{a_3\}$, $D_{\pi'}^n = \{a_2\}$, and $a_1 \in D_{\pi'}^+$. 
				
				We have that $N(\pi, \pi') = \emptyset$ and $|N(\pi',\pi)| \geq 1$. Therefore $\phi(\pi', \pi) > 0$.
				
				\item $a_3 \notin \{b_1,b_2,b_3,b_4\}$.\\
				By the definition of top-type outcome, we can construct a top-type outcome $\pi'$ such that $D_{\pi'}^- = \{a^-\}$, $D_{\pi'}^n = \{a_1\}$, and $a_2, a_3 \in D_{\pi'}^+$. 
				
				We have that $N(\pi,\pi') \subseteq \{a^-\}$. Additionally we have that $a_2, a_3 \in N(\pi',\pi)$, therefore $\phi(\pi', \pi) > 0$. \qedhere
			\end{enumerate}
		\end{enumerate}
	\end{enumerate}
\end{proof}

\begin{lemma}\label{popnotgartopnonpop}
	A top-type outcome $\pi$ is not popular.
\end{lemma}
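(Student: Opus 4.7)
The plan is to exhibit a single outcome $\pi'$ with $\phi(\pi',\pi)\geq 1$, which immediately rules out popularity of $\pi$. In a top-type outcome we have $D_\pi^+=\{r_1,r_2,r_3,\hat{b}_1,\hat{b}_2,b_5,b_6\}$, $D_\pi^n=\{\hat{b}_3\}$ and $D_\pi^-=\{\hat{b}_4\}$, so the only agents whose status could possibly improve in going to any other outcome are $\hat{b}_3$ and $\hat{b}_4$; in particular $|N(\pi',\pi)|\leq 2$, and it suffices to find a $\pi'$ that strictly improves both while strictly worsening at most one other agent. The key observation that will enable this is that $b_1,\dots,b_4$ all share exactly the same trichotomous preference ($\tfrac{1}{3}$ approved, $\tfrac{2}{3}$ neutral, $\tfrac{0}{3}$ disapproved), so we may permute them freely among the three rooms of $\pi$ without changing any room's fraction.

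Concretely, I will define
\[
\pi' = \bigl\{\,\{r_1,\hat{b}_1,\hat{b}_3\},\ \{r_2,r_3,\hat{b}_4\},\ \{b_5,b_6,\hat{b}_2\}\,\bigr\}.
\]
This is a valid partition of $R\cup B$ into three rooms of size $3$, and the three rooms carry exactly the same fractions $\tfrac{1}{3},\tfrac{2}{3},\tfrac{0}{3}$ as $P_1,P_2,P_3$ did in $\pi$. An agent-by-agent comparison is then immediate: the reds $r_1,r_2,r_3$ and the blues $\hat{b}_1,b_5,b_6$ remain in rooms of the same fraction and are therefore indifferent; $\hat{b}_3$ moves from fraction $\tfrac{2}{3}$ (neutral) to $\tfrac{1}{3}$ (approve) and strictly prefers $\pi'$; $\hat{b}_4$ moves from $\tfrac{0}{3}$ (disapprove) to $\tfrac{2}{3}$ (neutral) and strictly prefers $\pi'$; and $\hat{b}_2$ moves from $\tfrac{1}{3}$ (approve) to $\tfrac{0}{3}$ (disapprove) and strictly prefers $\pi$. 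Hence $|N(\pi',\pi)|=2$, $|N(\pi,\pi')|=1$, and $\phi(\pi',\pi)=1>0$, so $\pi$ is not popular.

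The only real obstacle is the bookkeeping of which agents actually change room-fraction when passing from $\pi$ to $\pi'$: once one verifies that $\hat{b}_2,\hat{b}_3,\hat{b}_4$ form a three-cycle through the rooms and that no other agent lands in a room of a different fraction, the popularity margin drops out without any further case analysis. The underlying reason this simple swap works is precisely the trichotomous symmetry among $b_1,\dots,b_4$: swapping one "approve" for one "neutral" and one "disapprove" each moved one level up yields a net gain of one in the popularity margin.
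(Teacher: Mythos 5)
Your proposal is correct and is essentially the same argument as the paper's: the paper also constructs $\pi'$ by cycling $\hat{b}_2,\hat{b}_3,\hat{b}_4$ through the three rooms, obtaining $N(\pi',\pi)=\{\hat{b}_3,\hat{b}_4\}$ and $N(\pi,\pi')=\{\hat{b}_2\}$, hence $\phi(\pi',\pi)=1>0$. Your agent-by-agent verification just makes the bookkeeping more explicit.
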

\begin{proof}
	Let $\pi$ be a top-type outcome. We can write $\pi = \{P_1,P_2,P_3\} = \{\{r_1, \hat{b}_1, \hat{b}_2\}, \{r_2,r_3,\hat{b}_3\},\\ \{b_5,b_6,\hat{b}_4\}\}$, where $\hat{b}_1,\hat{b}_2,\hat{b}_3,\hat{b}_4 \in \{b_1,b_2,b_3,b_4\}$. Let us construct outcome $\pi'$ as follows:
	\begin{align*}
		\pi'=  \{& P_1 \setminus \{\hat{b}_2\} \cup \{\hat{b}_3\}, & P_2 \setminus \{\hat{b}_3\} \cup \{\hat{b}_4\}, && P_3 \setminus \{\hat{b}_4\} \cup \{\hat{b}_2\} \}
	\end{align*}
	We have that $N(\pi,\pi') = \{\hat{b}_2\}$ and $N(\pi,\pi') = \{\hat{b}_3, \hat{b}_4\}$. Thus $\pi'$ is more popular than $\pi$.
\end{proof}

\begin{theorem}
	A popular outcome is not guaranteed to exist in a roommate diversity game $\overline{G}$.
\end{theorem}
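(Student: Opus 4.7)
The plan is to obtain the theorem as an immediate corollary of the preceding two lemmas. Every outcome $\pi$ of $\overline{G}$ is either a top-type outcome or it is not, so I would split along this dichotomy. In the first case, if $\pi$ is not a top-type outcome, then \cref{popnotgartopbest} directly produces a top-type outcome $\pi'$ with $\phi(\pi',\pi) > 0$, so $\pi$ is not popular. In the second case, if $\pi$ is a top-type outcome, then \cref{popnotgartopnonpop} gives an outcome $\pi'$ (the cyclic rotation of $\hat{b}_2, \hat{b}_3, \hat{b}_4$ through the three rooms $P_1, P_2, P_3$) with $\phi(\pi',\pi) = |\{\hat{b}_3, \hat{b}_4\}| - |\{\hat{b}_2\}| = 1 > 0$, so again $\pi$ is not popular.

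Combining the two cases, every outcome of $\overline{G}$ admits a strictly more popular outcome, so no popular outcome exists. The write-up would consist only of a short two-case statement, each case citing one of the lemmas; nothing new has to be verified at this stage.

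The real work has already been carried out in the earlier parts of the section. \cref{popnotgar2more}, \cref{popnotgartopcard}, \cref{popnot1}, and \cref{popnot2} set up the structural constraints on the disapproval and neutral sets, and \cref{popnotgartopbest} does the extensive case analysis that reduces every candidate popular outcome to a top-type outcome. The only potential subtlety at the present stage is making sure that top-type outcomes actually exist (so that \cref{popnotgartopbest} is not vacuous) — but this is obvious from the explicit construction of $P_1, P_2, P_3$, since any assignment of the four agents $b_1, b_2, b_3, b_4$ to the slots $\hat{b}_1, \hat{b}_2, \hat{b}_3, \hat{b}_4$ yields a valid top-type outcome. Hence no further technical obstacle remains; the theorem follows in a couple of lines.
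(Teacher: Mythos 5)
Your proposal is correct and matches the paper's own proof, which likewise derives the theorem directly from \cref{popnotgartopbest} and \cref{popnotgartopnonpop} via the same dichotomy into top-type and non-top-type outcomes. Your added remark that top-type outcomes exist (so the case split is non-vacuous) is a small but sensible point the paper leaves implicit.
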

\begin{proof}
	From \cref{popnotgartopbest} and \ref{popnotgartopnonpop} we have that no popular outcome exists in $\overline{G}$. Thus not every roommate diversity game permits a popular outcome.
\end{proof}

The idea of the co-NP-hardness reduction is to combine the reduction of \cref{mixedpop} with the proof of a popular outcome not being guaranteed to exist. In this reduction, we have that a popular outcome exists if and only if $(X,C)$ has no solution.

\subsection{Roommate Diversity Game}

We set the room size to $s =2(5(q + 4) + 1) + 2m + 3 = 10q+45+2m$. Note that $\frac{2(5(q + 4) + 1) + 2m + 3}{s} = 1$. The agents and their preference profile is defined as follows.
\begin{table}[h!]
	\begin{tabular}{  r  l  l }
		Circular Set Agents & $\hat{R}^{set}= \{\hat{r}_1, \hat{r}_2, \hat{r}_3\}$ &  \\
		Circular Redundant Agents & $R^{red}_j = \{r_j^1, \dots, r_j^{5j-2}\}$ &  for $j \in [3]$ \\
		Set Agents & $R^{set} = \{r_1, \dots, r_m\}$ &  \\
		Copy Set Agents & $\tilde{R}^{set} = \{\tilde{r}_1, \dots, \tilde{r}_m\}$ & \\
		Redundant Agents & $R^{red}_j = \{r_j^1, \dots, r_j^{2(5j-2)}\}$ & for $j \in [4,q+3]$ \\
		& $R^{red} = R^{red}_1 \cup \dots \cup R^{red}_{q+3}$ & \\
		Monolith Agents & $R^{mon} = \{r^1_{mon}, \dots, r^{s-2m-3}_{mon}\}$ & \\
	\end{tabular}
	\caption{Set of red agents $R = \hat{R}^{set} \cup R^{set} \cup  \tilde{R}^{set} \cup  R^{mon} \cup R^{red}$.}
\begin{tabular}{  r  l  l  }
	Circular Filling Agents & $B^{fill}_j = \{b^1_j, \dots, b^{s-(5j-2)-1}_j\}$ & for $j \in [3]$ \\
	Circular Additional Agents & $B^{add}_j $ $ = \{\tilde{b}_j\}$ & for $j \in [3]$ \\
	Filling Agents & $B^{fill}_j = \{b^1_j, \dots, b^{s-2(5j-2)-6}_j\}$ & for $j \in [4,q+3]$ \\
	& $B^{fill} = B^{fill}_1 \cup \dots \cup B^{fill}_{q+3}$ &  \\
	Additional Agents & $B^{add}_j = \{\tilde{b}^1_j, \tilde{b}^2_j, \tilde{b}^3_j, \tilde{b}^4_j, \tilde{b}^5_j, \tilde{b}^6_j\}$ & for $j \in [4,q+3]$ \\
	& $B^{add} = B^{add}_1 \cup \dots \cup B^{add}_{q+3}$ & \\
	Monolith Agents & $B^{mon} = \{b_{mon}^1, \dots, b_{mon}^{2m+3}\}$ & \\
	Evening Agents & $B^{even} = \{b_{even}^1, \dots, b_{even}^{s-2m-3}\}$ & 
\end{tabular}
\caption{Set of blue agents $B = B^{even} \cup B^{mon} \cup B^{add} \cup B^{fill}$.}
	\begin{center}
		\begin{tabular}{ | l | l | l | l | l | }
			\hline
			\multirow{2}{*}{Agent} & \multicolumn{3}{c|}{Preference Profile} & \multirow{2}{*}{} \\\cline{2-4}
			& $D_a^+$ & $D_a^n$ & $D_a^-$ &  \\\hline
			$a \in \{\hat{r}_1, \hat{r}_2\}$ & $\{\frac{5\cdot(3)-1}{s}\}$ & $\{\frac{5\cdot(2)-1}{s}\}$ & $D \setminus (D_a^+ \cup D_a^n)$  & \\  
			$a \in \hat{r}_3$ & $\{\frac{5\cdot(3)-1}{s}, 1\}$ & $\{\frac{5\cdot(2)-1}{s}\}$ & $D \setminus (D_a^+ \cup D_a^n)$  & \\  
			$a = r_j^p \in R^{red}_j$ & $\{\frac{5j-1}{s}, \frac{5j-2}{s}\}$ & & $D \setminus D_a^+$ & $j \in [2]$ \\
			$a = r_3^p \in R^{red}_3$ & $\{\frac{5\cdot(3)-1}{s}, \frac{5\cdot(3)-2)}{s}\}$ & $\{\frac{5\cdot(2)-1}{s}\}$ & $D \setminus (D_a^+ \cup D_a^n)$ &  \\\hline
			
			$a = r_i \in R^{set}$ & $\{\frac{2(5\tilde{j}_1^i+1)}{s}, \dots, \frac{2(5\tilde{j}_{m_i}^i+1)}{s}\}  \cup \{1\}$ & & $D \setminus D_a^+$  & $\tilde{j}^i_p = j^i_p + 3$ \\  
			$a = \tilde{r}_i \in \tilde{R}^{set}$ & $\{\frac{2(5\tilde{j}_1^i+1)}{s}, \dots, \frac{2(5\tilde{j}_{m_i}^i+1)}{s}\}  \cup \{1\}$ & & $D \setminus D_a^+$  & $\tilde{j}^i_p = j^i_p + 3$ \\   
			$a = r_j^p \in R^{red}_j$ & $\{\frac{2(5j+1)}{s}, \frac{2(5j-2)}{s}\}$ & & $D \setminus D_a^+$ & $j \in [4, q+3]$ \\
			$a = r_{mon}^{p} \in R^{mon}$ & $\{1, \frac{s-2m-3}{s}\}$ & & $D \setminus D_a^+$ & \\\hline\hline
			
			$a = b_j^p \in B_j^{fill}$ & $\{\frac{5j-1}{s}, \frac{5j-2}{s}\}$ & & $D \setminus D_a^+$ & $j \in [3]$  \\
			$a = \tilde{b}_j \in B_j^{add}$ & $\{\frac{5j-2}{s}, 0\}$ & & $D \setminus D_a^+$ & $j \in [3]$  \\\hline
			
			$a = b_j^p \in B_j^{fill}$ & $\{\frac{2(5j+1)}{s}, \frac{2(5j-2)}{s}\}$ & & $D \setminus D_a^+$ & $j \in [4, q+3]$  \\
			$a = \tilde{b}_j^p \in B_j^{add}$ & $\{\frac{2(5j-2)}{s}, 0\}$ & & $D \setminus D_a^+$ & $j \in [4, q+3]$  \\
			$a = b_{mon}^p \in B^{mon}$ & $\{\frac{s-2m-3}{s}, 0\}$ & & $D \setminus D_a^+$ & \\
			$a = b_{even}^p \in B^{even}$ & $\{0\}$ & & $D \setminus D_a^+$ &  \\\hline
		\end{tabular}
		\caption{Preference profile $(\succsim_a)_a \in R \cup B$.}
	\end{center}
\end{table}

\subsection{Predefined Outcomes}
\subsubsection{Monolithic Outcome}
First we define the rooms that contain contain red agents. Let
\begin{align*}
& P_j = B_j^{add} \cup R^{red}_j \cup B^{fill}_j \text{ for } j \in [q+3]; & P_{q+4} = R^{set} \cup \tilde{R}^{set} \cup \hat{R}^{set} \cup R^{mon}.
\end{align*}

Let $\pi_R$ and $\pi_B$ be defined in a similar manner as in \cref{strictpredef1}, i.e., $\pi_R = \bigcup\limits_{j \in [q+4]}\{P_j\}$ and $\pi_B$ contains 1 room with the blue agents not contained in any $P_j$, where $j \in [q+4]$. We define the monolithic outcome to be $\pi_{mon} = \pi_R \cup \pi_B$. Only agents $\hat{r}_1, \hat{r}_2$ disapprove of the fraction of its assigned room. The others approve of the fraction of their assigned room. A monolithic outcome always exists.

\subsubsection{Reduced-type Outcome}
Let the solution $C' \subseteq C$ partition $X$ and consider an arbitrary subset $\{a_1, \dots, a_5\} \subseteq \hat{R}^{set} \cup R_3^{red}$. We define the rooms that contain red agents.

\begin{tabular}{l l}
	$P_j = \{a_j\} \cup R^{red}_j \cup B_j^{fill}$ for $j \in [2]$; & \\
	$P_3 = \{a_3, a_4, a_5\} \cup (R^{red}_3 \setminus \{a_1, \dots, a_5\}) \cup B_3^{fill}$; & \\
	\multicolumn{2}{l}{$P'_{j+3} = \begin{cases}
		\{r_i \in R^{set} | i \in A_j\} \cup \{\tilde{r}_i \in \tilde{R}^{set} | i \in A_j\} \cup R^{red}_{j+3} \cup B^{fill}_{j+3} & \text{, if } A_j \in C' \\
		B^{add}_{j+3} \cup R^{red}_{j+3} \cup B^{fill}_{j+3} & \text{, if } A_j \notin C'
		\end{cases} \text{ for } j \in [q] ;$}\\ 
	 	$P'_{q+4} = R^{mon} \cup B^{mon}.$ & 
\end{tabular}

Let $\pi_R$ and $\pi_B$ be defined in a similar manner as in \cref{strictpredef2}, i.e., $\pi_R = \bigcup\limits_{j \in [q+4]}\{P'_j\}$ and $\pi_B$ contains 1 room with the blue agents not contained in any $P_j'$, where $j\in [q+4]$. We define the reduced-type outcome to be $\pi_{C'} = \pi_R \cup \pi_B$. We have that $D_{\pi_{C'}}^n = \{a_2\}$ and $D_{\pi_{C'}}^- = \{a_1\}$. A reduced-type outcome exists if and only if $(X,C)$ has a solution.
\subsection{Hardness}
We demonstrate co-NP-hardness of determining the existence of a popular outcome in a roommate diversity game by applying similar proof strategies as in our previous proofs. In our reduction, the roommate diversity game has a popular outcome if and only if $(X,C)$ has no solution.

\subsubsection{$(X,C)$ has NO solution}
\begin{lemma}\label{popdpi-2mon}
	Let $\pi$ be an outcome of the roommate diversity game $G$ such that it contains a room with only red agents. If $|D_{\pi}^-| = 2$ and $|D_{\pi}^n| = 0$, then $\pi = \pi_{mon}$. That is,
	$$\forall_{\pi}:(\exists_{S \in \pi}:\theta(S)=1) \wedge |D_{\pi}^-| = 2 \wedge |D_{\pi}^n| = 0 \implies \pi = \pi_{mon}.$$
\end{lemma}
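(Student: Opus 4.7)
The plan is to pin down the all-red room $S$ by counting approvers of fraction $1$, and then force the remaining agents into the rooms of $\pi_{mon}$ using the same kind of counting.

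First, I would enumerate the red agents that approve fraction $1$: these are precisely $R^{set} \cup \tilde{R}^{set} \cup R^{mon} \cup \{\hat{r}_3\}$, whose total size is exactly $s-2$. Since $|S|=s$ and $\theta(S)=1$, at least two agents of $S$ must disapprove fraction $1$. Combined with $|D_\pi^-|=2$, both global disapprovers must lie inside $S$, so $S$ contains all $s-2$ red approvers of fraction $1$ together with two further agents drawn from the red disapprovers of $1$, namely $\{\hat{r}_1,\hat{r}_2\} \cup R^{red}$. Moreover, $|D_\pi^-|=2$ and $|D_\pi^n|=0$ jointly force every agent outside $S$ to lie in $D_\pi^+$.

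Next I would argue that the only admissible pair of extra agents in $S$ is $\{\hat{r}_1,\hat{r}_2\}$. Suppose instead that some $r \in R^{red}_j$ sits in $S$. If $j \in [2] \cup [4,q+3]$, then the remaining agents of $R^{red}_j$ outside $S$ approve only the two fractions $\tfrac{5j-1}{s},\tfrac{5j-2}{s}$ (or $\tfrac{2(5j+1)}{s},\tfrac{2(5j-2)}{s}$ respectively); since $R^{set} \cup \tilde{R}^{set}$ is already inside $S$, the only red approvers of either fraction left outside $S$ are those in $R^{red}_j \setminus \{r\}$, whose size is strictly less than the red quota of any such room, and $|D_\pi^n|=0$ forbids filling the gap with a neutral agent. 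If $j=3$, the second disapprover forces at least one of $\hat{r}_1,\hat{r}_2$ to lie outside $S$, hence into some room of fraction $\tfrac{14}{s}$; using that $\hat{r}_3 \in S$ and $r \in R^{red}_3 \cap S$, the red approvers of $\tfrac{14}{s}$ available outside $S$ reduce to $\{\hat{r}_1,\hat{r}_2\} \cup (R^{red}_3 \setminus \{r\})$, which is too small to satisfy the placement demand. The main obstacle is executing this subcase analysis on the pair of disapprovers cleanly without losing track of the bookkeeping.

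Having ruled out every choice other than $\{\hat{r}_1,\hat{r}_2\}$, I obtain $S = R^{set} \cup \tilde{R}^{set} \cup \hat{R}^{set} \cup R^{mon} = P_{q+4}$. I would close the proof by iterating the same counting for each $j \in [q+3]$: a room of the higher approved fraction for $R^{red}_j$ is infeasible because its alternative red approvers ($R^{set} \cup \tilde{R}^{set}$, and $\hat{R}^{set}$ when $j=3$) all lie in $S$ already, so $R^{red}_j$ must occupy a room of fraction $\tfrac{5j-2}{s}$ or $\tfrac{2(5j-2)}{s}$. The unique $s$-agent set approving this fraction is exactly $P_j = R^{red}_j \cup B^{fill}_j \cup B^{add}_j$, so $P_j \in \pi$. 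Finally $B^{mon} \cup B^{even}$ totals $s$ agents and must form a single all-blue room, since $R^{mon} \subset S$ prevents $B^{mon}$ from realizing the mixed fraction $\tfrac{s-2m-3}{s}$ it would otherwise approve. This yields $\pi = \pi_{mon}$.
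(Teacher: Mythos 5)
Your proposal is correct and follows essentially the same counting argument as the paper's proof: the all-red room must absorb all $s-2$ red approvers of fraction $1$ together with both members of $D_\pi^-$, redundant agents are excluded from that room by counting the approvers of their two admissible fractions, and the remaining rooms are then forced to coincide with those of $\pi_{mon}$. One small slip in the $j=3$ subcase: the set $\{\hat{r}_1,\hat{r}_2\}\cup(R_3^{red}\setminus\{r\})$ has exactly $14$ members, which is not by itself too small for a room of fraction $\frac{14}{s}$ (which needs exactly $14$ red agents) --- the contradiction only appears once you also account for the second extra agent of $S$, which either lies in this set (dropping the count to $13$) or lies in some $R_{j'}^{red}$ with $j'\neq 3$ and is then handled by your first subcase.
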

\begin{proof}
	Let $\pi$ be an outcome of the roommate diversity game $G$ such that it contains a room $S^r$ with only red agents and $|D_{\pi}^-| = 2$ and $|D_{\pi}^n| = 0$. Since there are exactly $s-2$ red agents that approve of fraction $1$, $S^r$ must contain $\hat{r}_3$, $R^{set}$, $\tilde{R}^{set}$, $R^{mon}$, and 2 other red agents that disapprove of fraction $1$. Thus we have that $|S^r \cap D_\pi^-| = 2$, i.e., all the agents that disapprove of the fraction of their room are contained in $S^r$.
	
	Since all the blue agents must be in a room in which every agent approves of the fraction of their assigned room, i.e. $\forall_{b\in B}:\forall_{a \in \pi(b)}: \theta(\pi(a)) \in D_a^+$, for $j \in [3,q+3]$ we must have
	$$B_j^{add} \cup R_j^{red} \cup B_j^{fill} \in \pi,$$
	as $b_j^p \in B_j^{fill}$ cannot be in a room with fraction $\frac{2(5j+1)}{s}$, since there are strictly less than $s$ remaining agents that approve of fraction $\frac{2(5j+1)}{s}$.
	
	Note that $S^r$ cannot contain redundant agents from $R^{red}_j$, where $j \in [2]$, as otherwise a redundant agent $r_j^p$ not in $S^r$, i.e. $r_j^p \in R_j^{red} \setminus S^r$, cannot be in a room that only contains agents that approve of the fraction of their assigned room. This would mean that we have $|D_{\pi}^-| > 2$. The only other remaining red agents are $\hat{r}_1$ and $\hat{r}_2$, thus they must be contained in $S^r$. Therefore 
	$$S^r = \hat{R}^{set}\cup R^{set} \cup \tilde{R}^{set} \cup R^{mon} \in \pi.$$
	
	Since all the remaining agents must be in a room with a fraction that they approve of, for $j\in [2]$ we must have
	$$B_j^{add} \cup R_j^{red} \cup B_j^{fill} \in \pi,$$
	and
	$$B^{even} \cup B^{mon}.$$
	Thus we have that $\pi = \pi_{mon}$.
\end{proof}

\begin{lemma}\label{popnogeq2}
	If $(X,C)$ has no solution, then for any outcome $\pi$ of $G$ has at least 2 agents in a room that it disapproves of. That is, for any $\pi$, $|D_\pi^-| \geq 2.$
\end{lemma}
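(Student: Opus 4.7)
The plan is to assume for contradiction that some outcome $\pi$ with $|D_\pi^-| \leq 1$ exists, and derive a contradiction by mirroring the case analysis of \cref{mixpop2lemma}. I would split on whether $|D_\pi^-| = 0$ or $|D_\pi^-| = 1$.

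For Case~1 ($|D_\pi^-| = 0$), the first step is to rule out any fraction-$1$ room in $\pi$. Unlike the reduction of \cref{mixedpop}, only $\hat{r}_3$ among the circular set agents approves of fraction $1$, so the red agents approving of $1$ are exactly $R^{set} \cup \tilde{R}^{set} \cup R^{mon} \cup \{\hat{r}_3\}$, of total size $s-2$. Since any fraction-$1$ room is pure red of size $s$, it must contain at least two disapproving red agents, contradicting $|D_\pi^-|=0$. Hence every set agent $r_i \in R^{set}$ is assigned to a room of fraction $\frac{2(5\tilde{j}+1)}{s}$ for some $j$ with $i \in A_j$. A tight count of the approvers of that fraction, namely $\{r_{i'}, \tilde{r}_{i'} \mid i' \in A_j\} \cup R_{j+3}^{red} \cup B_{j+3}^{fill}$, forces the room to coincide exactly with this set, and collecting the indices $j$ across all such rooms produces an exact cover of $X$, contradicting the hypothesis that $(X,C)$ has no solution.

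For Case~2 ($|D_\pi^-|=1$), let $a$ denote the unique disapproving agent. I would split on whether $a \in R^{set} \cup \tilde{R}^{set}$. If $a = r_i$ (the case $a \in \tilde{R}^{set}$ is symmetric), then $\tilde{r}_i$ must lie in a room it approves of, either a fraction-$1$ room or a fraction-$\frac{2(5\tilde{j}+1)}{s}$ room for some $j \in J^i$. In the first sub-case the counting above forces at least two further disapprovers in that room; in the second, the tight composition demands $r_i$, whose absence leaves a red slot that can only be filled by a disapproving red. Either way $|D_\pi^-| \geq 2$, contradicting the case hypothesis. If $a \notin R^{set} \cup \tilde{R}^{set}$, then every set and copy-set agent approves of its room; the same counting excludes any fraction-$1$ room (a single extra disapprover $a$ cannot cover a deficit of two), after which the Case~1 exact-cover extraction applies and again yields a contradiction.

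The main obstacle I expect is the subcase within Case~2 where $a$ sits inside a fraction-$\frac{2(5\tilde{j}+1)}{s}$ room and replaces one of the approving reds of the tight composition: one must track the displaced approving agent and show it cannot be absorbed into any other room without incurring another disapproval. The argument follows the same accounting scheme used in \cref{mixpop2lemma}, the crucial quantitative difference being that the circular-gadget modification here reduces the number of fraction-$1$ approvers from $s-1$ to $s-2$, so that any fraction-$1$ room already contributes at least two agents to $D_\pi^-$ on its own.
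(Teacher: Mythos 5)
Your proposal is correct and follows essentially the same route as the paper's proof: the same count of $s-2$ red approvers of fraction $1$ to exclude pure-red rooms, the same exact-cover extraction in the $|D_\pi^-|=0$ and $a\notin R^{set}\cup\tilde{R}^{set}$ cases, and the same $r_i$/$\tilde{r}_i$ pairing argument when the unique disapprover is a (copy-)set agent. If anything you are more explicit than the paper, which compresses the displaced-approver bookkeeping you flag as the main obstacle into a reference to the earlier Case~1 argument.
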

\begin{proof}
	To derive a contradiction, assume that there exists an outcome $\pi$ that assigns fewer than 2 agents to a room that it disapproves of, i.e., assume that there exists an outcome $\pi$ such that $|D_\pi^-| < 2$. Thus we have the following 2 cases.
	\begin{enumerate}
		\item $|D_\pi^-| = 0$.\label{pophardcase1}\\
		As we have exactly $s-2$ red agents that approve of fraction $1$, the outcome $\pi$ cannot have a room that only consists of red agents. Thus every set agent $r_i \in R^{set}$ must be in a room with fraction $\frac{2(5j+1)}{s}$, where $j-3 \in J^i$. However, this means that we can extract a solution $C'$ for $(X,C)$ from $\pi$. This contradicts $(X,C)$ not having a solution.
		
		\item $|D_\pi^-| = 1$.\\
		As we have exactly $s-2$ red agents that approve of fraction $1$ we have that $\pi$ does not contain a room with only red agents. Let us denote the agent in $D_\pi^-$ by $a$. We have the following 2 cases regarding $a$.
		\begin{enumerate}[\theenumi.1.]
			\item $a \notin R^{set} \cup \tilde{R}^{set}$. \\
			Then we have a similar situation as in Case \ref{pophardcase1}. Every set agent $r_i \in R^{set}$ must be in a room with fraction $\frac{2(5j+1)}{s}$, where $j-3 \in J^i$. This means that we can extract a solution $C'$ for $(X,C)$ from $\pi$. Thus this case contradicts $(X,C)$ not having a solution.
			\item $a \in R^{set} \cup \tilde{R}^{set}$. \\
			W.l.o.g. assume that $a \in R^{set}$. As $a$ is a set agent, we can write $a = r_i$. Since there is exactly 1 agent in a room with a fraction that it disapproves of, $\tilde{r}_i$ must be in a room $S$ with a fraction that it approves of. Let us write $\theta(S) = \frac{2(5j+1)}{s}$, where $j-3 \in J^i$. There are exactly $s$ agents that approve of fraction $\frac{2(5j+1)}{s}$, including $r_i$. Since $r_i \in D_\pi^-$, we have that $r_i$ cannot be in $S$. Thus $S$ must contain a red agent other than $r_i$ that disapproves of the fraction of its room. Therefore we have at least 2 agents in $D_\pi^-$. This contradicts $|D\pi^-|=1$. \qedhere
		\end{enumerate}
	\end{enumerate}
\end{proof}

\begin{lemma}\label{lemma0}
	If $(X,C)$ has no solution, then the monolithic outcome $\pi_{mon}$ is popular in the roommate diversity game $G$, i.e., for any outcome $\pi$ we have that
	$$|N(\pi_{mon}, \pi)| \geq |N(\pi, \pi_{mon})|.$$
\end{lemma}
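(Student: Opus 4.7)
The plan is to exploit the very special structure of $\pi_{mon}$: by construction, every agent except $\hat{r}_1$ and $\hat{r}_2$ is in an approved room, so $D_{\pi_{mon}}^+ = (R \cup B) \setminus \{\hat{r}_1,\hat{r}_2\}$, $D_{\pi_{mon}}^n = \emptyset$, and $D_{\pi_{mon}}^- = \{\hat{r}_1,\hat{r}_2\}$. This asymmetry immediately caps the number of agents who can strictly prefer $\pi$ over $\pi_{mon}$: a non-$\hat{r}_i$ agent is already in an approved room under $\pi_{mon}$, so only $\hat{r}_1$ and $\hat{r}_2$ can possibly belong to $N(\pi,\pi_{mon})$. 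Hence $|N(\pi,\pi_{mon})| \le 2$.

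The bulk of the proof then becomes a matching lower bound on $|N(\pi_{mon},\pi)|$. Every agent $a \notin \{\hat{r}_1,\hat{r}_2\}$ satisfies $a \in D_{\pi_{mon}}^+$, so if $a \in D_\pi^n \cup D_\pi^-$ then $a \in N(\pi_{mon},\pi)$. In particular, each agent in $D_\pi^-$ that is \emph{not} $\hat{r}_1$ or $\hat{r}_2$ contributes to $N(\pi_{mon},\pi)$. Setting $k = |\{\hat{r}_1,\hat{r}_2\} \cap D_\pi^-|$, Lemma~\ref{popnogeq2} gives $|D_\pi^-| \ge 2$, whence
$$|N(\pi_{mon},\pi)| \;\ge\; |D_\pi^-| - k \;\ge\; 2 - k.$$
On the other hand, an agent $\hat{r}_i$ lies in $N(\pi,\pi_{mon})$ iff its room in $\pi$ is approved or neutral, so $|N(\pi,\pi_{mon})| = 2 - k$ exactly. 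Combining the two estimates yields $|N(\pi_{mon},\pi)| \ge |N(\pi,\pi_{mon})|$, as required.

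I would present the argument by splitting on $k \in \{0,1,2\}$ to make each inequality visible: when $k=2$ both sides are $0$; when $k=1$ at least one non-$\hat{r}_i$ agent is in $D_\pi^-$, and $|N(\pi,\pi_{mon})|=1$; when $k=0$ all of $D_\pi^-$ consists of non-$\hat{r}_i$ agents, so $|N(\pi_{mon},\pi)| \ge |D_\pi^-| \ge 2 = |N(\pi,\pi_{mon})|$. There is essentially no obstacle once Lemma~\ref{popnogeq2} is in hand: the work has already been done in showing that $\pi_{mon}$ is the unique outcome with only $\hat{r}_1$ and $\hat{r}_2$ in disapproved rooms, so every alternative $\pi$ must ``pay'' at least two disapprovals, and $\pi_{mon}$ ``overcharges'' at most two agents (namely the $\hat{r}_i$'s). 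The only subtle point worth spelling out is that trichotomous preferences genuinely give $D_a^n \succ_a D_a^-$ (so an $\hat{r}_i$ placed in a neutral room under $\pi$ \emph{does} join $N(\pi,\pi_{mon})$), but this is handled uniformly by the bookkeeping above.
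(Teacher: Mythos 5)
Your proof is correct and is essentially the paper's own argument: both rest on the observation that $N(\pi,\pi_{mon})\subseteq\{\hat{r}_1,\hat{r}_2\}$ together with Lemma~\ref{popnogeq2}, and your parametrization by $k=|\{\hat{r}_1,\hat{r}_2\}\cap D_\pi^-|$ is just a compact, uniform packaging of the paper's four-way case split on $N(\pi,\pi_{mon})$.
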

\begin{proof}
	Since $D_{\pi_{mon}}^- = \{\hat{r}_1, \hat{r}_2\}$ and $D_{\pi_{mon}}^n = \emptyset$, for any outcome $\pi \neq \pi_{mon}$ we have that $N(\pi, \pi_{mon}) \subseteq \{\hat{r}_1, \hat{r}_2\}$. Let us consider the following cases regarding $N(\pi, \pi_{mon})$.
	\begin{enumerate}
		\item  $N(\pi, \pi_{mon}) = \emptyset$.\\
		Then $|N(\pi_{mon}, \pi)| \geq |N(\pi, \pi_{mon})|$ trivially holds as $|N(\pi, \pi_{mon})| = 0$.
		
		\item $N(\pi, \pi_{mon}) = \{\hat{r}_1\}$.\\
		Then we have that $\hat{r}_1 \in D_\pi^+ \cup D_\pi^n$ and $\hat{r}_2 \in D_\pi^-$. By \cref{popnogeq2}, we have that $|D_\pi^-| \geq 2$. Therefore there exists $a \in D_\pi^-$ such that $a \neq \hat{r}_2$. Since $a \in D_\pi^-$ we also have $a \neq \hat{r}_1$. Thus $a \in D_{\pi_{mon}}^+$ as $D_{\pi_{mon}}^n = \emptyset$ and $D_{\pi_{mon}}^- = \{\hat{r}_1,\hat{r}_2\}$. Since $a \in D_{\pi_{mon}}^+$ and $a \in D_\pi^-$, we have $a \in N(\pi_{mon}, \pi)$. Therefore $|N(\pi_{mon}, \pi)| \geq 1 = |N(\pi, \pi_{mon})|$.
		
		\item $N(\pi, \pi_{mon}) = \{\hat{r}_2\}$.\\
		Analogous to case 2. We have that $\hat{r}_2 \in D_\pi^+ \cup D_\pi^n$ and $\hat{r}_1 \in D_\pi^-$. By \cref{popnogeq2}, we have that $|D_\pi^-| \geq 2$. Therefore there exists $a \in D_\pi^-$ such that $a \neq \hat{r}_1$. Since $a \in D_\pi^-$ we also have $a \neq \hat{r}_2$. Thus $a \in D_{\pi_{mon}}^+$ as $D_{\pi_{mon}}^n = \emptyset$ and $D_{\pi_{mon}}^- = \{\hat{r}_1,\hat{r}_2\}$. Since $a \in D_{\pi_{mon}}^+$ and $a \in D_\pi^-$, we have $a \in N(\pi_{mon}, \pi)$. Therefore $|N(\pi_{mon}, \pi)| \geq 1 = |N(\pi, \pi_{mon})|$.
		
		\item $N(\pi, \pi_{mon}) = \{\hat{r}_1, \hat{r}_2\}$.\\
		Then we have that $\hat{r}_1, \hat{r}_2 \in D_\pi^+ \cup D_\pi^n$.
		By \cref{popnogeq2}, we have that $|D_\pi^-| \geq 2$. Therefore there exist $a_1, a_2 \in D_\pi^-$ such that $a_1,a_2 \notin \{\hat{r}_1,\hat{r}_2\}$. Thus $a_1,a_2 \in D_{\pi_{mon}}^+$ as $D_{\pi_{mon}}^n = \emptyset$ and $D_{\pi_{mon}}^- = \{\hat{r}_1,\hat{r}_2\}$. Since $a_1,a_2 \in D_{\pi_{mon}}^+$ and $a_1, a_2 \in D_\pi^-$, we have $a_1, a_2 \in N(\pi_{mon},\pi)$. Therefore $|N(\pi_{mon}, \pi)| \geq 2 = |N(\pi, \pi_{mon})|$. \qedhere
	\end{enumerate}
\end{proof}

\subsubsection{$(X,C)$ has a solution}
\begin{lemma}\label{lemma3.1}
	Every outcome $\pi$ has at least 2 agents not in $D_\pi^+$. %That is,
	%$$\forall_\pi: |D_\pi^- \cup D_\pi^n|\geq 2.$$
\end{lemma}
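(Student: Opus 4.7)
The plan is to argue by contradiction: suppose some outcome $\pi$ has at most one agent outside $D_\pi^+$, and then exploit the tight ``circular'' structure on $\hat{R}^{set} \cup R_3^{red} \cup B_3^{fill}$ (borrowed from the hardness instance $\overline{G}$) to reach a contradiction.

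First I would rule out any room of fraction $1$. Such a room must contain $s$ red agents, but the only red agents approving of fraction $1$ are $\{\hat{r}_3\} \cup R^{set} \cup \tilde{R}^{set} \cup R^{mon}$, of total size $1 + m + m + (s-2m-3) = s-2$. Hence any such room would already contain two non-approving agents, contradicting the one-exception assumption. So $\pi$ contains no room of fraction $1$.

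Next, since at most one of $\hat{r}_1, \hat{r}_2$ can fail to lie in $D_\pi^+$, I may take $\hat{r}_1 \in D_\pi^+$ by the symmetry of their preferences; thus $\hat{r}_1$ lies in some room $S$ with $\theta(S) = \frac{14}{s}$, containing exactly $14$ red and $s-14$ blue agents. The only red agents approving of $\frac{14}{s}$ are those in $\hat{R}^{set} \cup R_3^{red}$ (16 in total), and the only blue agents approving of $\frac{14}{s}$ are those in $B_3^{fill}$ (exactly $s-14$ of them). A count using the one-exception bound forces at most one element of $B_3^{fill}$ outside $S$, and at least twelve of $\{\hat{r}_2, \hat{r}_3\} \cup R_3^{red}$ (which has 15 members) inside $S$, leaving at most three of that set outside $S$.

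The decisive step analyses the agents of $\{\hat{r}_2, \hat{r}_3\} \cup R_3^{red}$ that sit outside $S$. Any such agent can belong to $D_\pi^+$ only via a room of fraction in $\{\frac{14}{s}, \frac{13}{s}, 1\}$. Fraction $1$ was excluded above. Any further room of fraction $\frac{14}{s}$ or $\frac{13}{s}$ would need $s-14$ or $s-13$ blue members, but outside $S$ only the (at most) one remaining element of $B_3^{fill}$ together with the single agent $\tilde{b}_3$ approve of these fractions, so such a room would carry at least $s-15 \ge 2$ blue non-approvers, violating the one-exception bound. Hence every agent of $\{\hat{r}_2, \hat{r}_3\} \cup R_3^{red}$ outside $S$ is itself a non-approver, so by the one-exception bound at most one of them can lie outside $S$. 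Then at least fourteen of them lie inside $S$, and together with $\hat{r}_1$ this forces at least fifteen red agents into $S$, contradicting $|S \cap R| = 14$.

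The main obstacle I anticipate is the careful bookkeeping in the second and third steps: the single exception could be red or blue, and could sit inside or outside $S$, so I must verify that in every branch the counts still force at most one member of $B_3^{fill}$ outside $S$ and the needed lower bound on $|S \cap (\{\hat{r}_2, \hat{r}_3\} \cup R_3^{red})|$. Once this is done, every branch collapses into the same final red-count contradiction.
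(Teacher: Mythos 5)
Your proposal is correct and follows essentially the same route as the paper: both first exclude all-red rooms, then locate a room $S$ of fraction $\frac{14}{s}$, and exploit the scarcity of approvers of $\frac{14}{s}$ and $\frac{13}{s}$ outside $S$ to show that any second room of one of these fractions would carry at least two non-approvers. The only cosmetic difference is the final accounting --- you extract a pigeonhole contradiction on the $14$ red slots of $S$, whereas the paper directly counts the two or three leftover members of $H_1 = \hat{R}^{set}\cup R_3^{red}\cup B_3^{fill}$ and shows one of them must sit in such a forbidden second room --- but the underlying counting argument is the same.
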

\begin{proof}
	Let $\pi$ be an arbitrary outcome. If $\pi$ contains a room that consists only of red agents, then $|D_\pi^- \cup D_\pi^n|\geq 2$ trivially holds for $\pi$ as there are exactly $s-2$ agents that approve of fraction $1$. Thus for the remainder of the proof, we shall assume that $\pi$ does not contain a room that only consists of red agents. That is,
	$$\forall_{S\in \pi}: \theta(\pi(S)) \neq 1.$$
	
	Additionally, if $\pi$ does not contain a room with fraction $\frac{5\cdot(3)-1}{s}$, then the circular set agents $\hat{r}_1, \hat{r}_2, \hat{r}_3$ must be in $D_\pi^n \cup D_\pi^-$ as we assume that $\pi$ has no room consisting of only red agents and $\hat{r}_1, \hat{r}_2, \hat{r}_3$ only approve of fraction $\frac{5\cdot(3)-1}{s}$. Thus $\pi$ must contain a room with fraction $\frac{5\cdot(3)-1}{s}$. That is,
	$$\exists_{S\in \pi}: \theta(\pi(S)) = \frac{5\cdot(3)-1}{s}.$$

	Consider all the agents that approve of fraction $\frac{5\cdot(3)-1}{s}$. That is, let us consider the set 
	$$\{a \in R\cup B | \frac{5\cdot(3)-1}{s} \in D_a^+\} = \hat{R}^{set} \cup R_3^{red} \cup B_3^{fill}.$$ 
	There are a total of $s+2$ agents that approve of fraction $\frac{5\cdot(3)-1}{s}$. Let us denote this set of agents by $H_1 = \hat{R}^{set} \cup R_3^{red} \cup B_3^{fill}.$ %That is,
	%$$H_1 = \hat{R}^{set} \cup R_3^{red} \cup B_3^{fill}.$$
	Note that for each agent $a$ in $H_1$ their corresponding $D_a^+\setminus \{1\}$ set is a subset of $\{\frac{5\cdot(3)-1}{s}, \frac{5\cdot(3)-2}{s}\}$. That is,
	$$\forall_{a \in H_1}: D_a^+\setminus \{1\} \subseteq \{\frac{5\cdot(3)-1}{s}, \frac{5\cdot(3)-2}{s}\}.$$
	Let us also consider all the agents that approve of fraction $\frac{5\cdot(3)-2}{s}$. That is, let us consider the set 
	$$\{a \in R \cup B | \frac{5\cdot(3)-2}{s} \in D_a^+\} = \{\tilde{b}_3\} \cup R_3^{red} \cup B_3^{fill}.$$ 
	There are a total of $s$ agents that approve of fraction $\frac{5\cdot(3)-2}{s}$. Let us denote this set of agents by $H_2 = \{\tilde{b}_3\} \cup R_3^{red} \cup B_3^{fill}.$ %That is,
	%$$H_2 = \{\tilde{b}_3\} \cup R_3^{red} \cup B_3^{fill}.$$
	Note that $H_1 \cap H_2 = R_3^{red} \cup B_3^{fill}$ and $|H_1 \cap H_2| = s-1$.
	
	Let $S \in \pi$ be the room such that $\theta(\pi(S)) = \frac{5\cdot(3)-1}{s}$. To derive a contradiction, we assume that $|D_\pi^n \cup D_\pi^-| < 2$. Let us consider the following cases regarding room $S$.
	
	\begin{enumerate}
		\item $|S \cap (D_\pi^n \cup D_\pi^-)| = 1$.\\
		Let $a \in S$ be such that $\frac{5\cdot(3)-1}{s} \notin D_a^+$, i.e., let $a \in S \cap (D_\pi^n \cup D_\pi^-)$. Then we have that $S\setminus \{a\} \subseteq H_1$. Note that 
		$$|H_1 \setminus S| = |H_1| - |S \setminus \{a\}| = 3$$ 
		and 
		$$ |H_2| - |H_1 \cap H_2| = 1 \leq |H_2 \setminus S| \leq  4 = |H_2| - |S \setminus \{a\}| + |\{\hat{r}_1, \hat{r}_2, \hat{r}_3\}|.$$ 
		Since $a \in D_\pi^n \cup D_\pi^-$, for any agent $a' \in H_1 \setminus S$ we have that $a' \in D_\pi^+$. As $\pi$ does not contain a room that consists of only red agents and $a' \in H_1$, we have that $\theta(\pi(a')) \in \{\frac{5\cdot(3)-1}{s}, \frac{5\cdot(3)-2}{s}\}$. For an arbitrary agent $a' \in H_1 \setminus S$, let us consider the following cases.
		
		\begin{enumerate}[\theenumi.1.]
			\item $\theta(\pi(a')) = \frac{5\cdot(3)-1}{s}$.\\
			We have at least $s - |H_1\setminus S| = s - 3 \geq 2$ agents in room $\pi(a')$ that do not approve of fraction $\frac{5\cdot(3)-1}{s}$. This is due to the fact that the only remaining agents outside of $S$ that approve of fraction $\frac{5\cdot(3)-1}{s}$, are the agents in $H_1 \setminus S$. Thus in this case, we have that $|D_\pi^n \cup D_\pi^-| \geq 2$.
			\item  $\theta(\pi(a')) = \frac{5\cdot(3)-2}{s}$.\\
			We have at least $s - |H_2\setminus S| \geq s - 4 > 2$ agents in room $\pi(a')$ that do not approve of fraction $\frac{5\cdot(3)-2}{s}$.  This is due to the fact that the only remaining agents outside of $S$ that approve of fraction $\frac{5\cdot(3)-2}{s}$, are the agents in $H_2 \setminus S$. Thus in this case, we have that $|D_\pi^n \cup D_\pi^-| \geq 2$.
		\end{enumerate}
		%TODO Check
		\item $|S \cap (D_\pi^n \cup D_\pi^-)| = 0$.\\
		Then we have that $S \subseteq H_1$. Note that 
		$$|H_1 \setminus S| = |H_1| - |S| = 2$$ 
		and 
		$$|H_2| - |H_1 \cap H_2| = 1 \leq |H_2 \setminus S| \leq 3 = |H_2| - |S| + |\{\hat{r}_1, \hat{r}_2, \hat{r}_3\}|.$$ 
		Let us denote $H_1 \setminus S = \{a_1, a_2\}$. Since we assume that $|D_\pi^n \cup D_\pi^-| <2$, at least one of $a_1$ and $a_2$ must be in $D_\pi^+$. That is,
		$$a_1 \in D_{\pi}^+ \vee a_2 \in D_{\pi}^+.$$
		
		Let $i \in [2]$ such that $a_i \in D_{a_i}^+$. Then we have that $\theta(\pi(a_i)) \in \{\frac{5\cdot(3)-1}{s}, \frac{5\cdot(3)-2}{s}\}$, since $a_i \in H_1$ and $\pi$ has no room that consists of only red agents. Consider the following cases.
		
		\begin{enumerate}[\theenumi.1.]
			\item $\theta(\pi(a_i)) = \frac{5\cdot(3)-1}{s}$.\\
			We have at least $s - |H_1 \setminus S| = s - 2 > 2$ agents in room $\pi(a_i)$ that do not approve of fraction $\frac{5\cdot(3)-1}{s}$. This is due to the fact that the only remaining agents outside of $S$ that approve of fraction $\frac{5\cdot(3)-1}{s}$ are the agents in $H_1 \setminus S$. Thus in this case, we have that  $|D_\pi^n \cup D_\pi^-| \geq 2$.
			\item $\theta(\pi(a_i)) = \frac{5\cdot(3)-2}{s}$.\\
			We have at least $s - |H_2\setminus S| \geq s - 3 > 2$ agents in room $\pi(a_i)$ that do not approve of fraction $\frac{5\cdot(3)-2}{s}$.  This is due to the fact that the only remaining agents outside of $S$ that approve of fraction $\frac{5\cdot(3)-2}{s}$ are the agents in $H_2 \setminus S$. Thus in this case, we have that  $|D_\pi^n \cup D_\pi^-| \geq 2$. \qedhere
		\end{enumerate}
	\end{enumerate}
\end{proof}

\begin{lemma}\label{lemma3.6}
	If for an outcome $\pi$ we have that $|D_\pi^n \cup D_\pi^-| = 2$, then $D_\pi^n \cup D_\pi^- \subseteq \hat{R}^{set} \cup R_3^{red}$. %That is,
	%$$\forall_{\text{outcome }\pi}:|D_\pi^n \cup D_\pi^-| = 2 \implies D_\pi^n \cup D_\pi^- \subseteq \hat{R}^{set} \cup R_3^{red}.$$
\end{lemma}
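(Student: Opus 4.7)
I would argue by contradiction, assuming that $|D_\pi^n \cup D_\pi^-| = 2$ while some $a^* \in D_\pi^n \cup D_\pi^-$ lies outside $\hat{R}^{set} \cup R_3^{red}$, and then exhibit a third non-approving agent. The argument closely follows the template of \cref{lemma3.1}, using the same pivotal fraction $\frac{5\cdot(3)-1}{s}$ because it is the only fraction approved by $\hat{r}_1$ and $\hat{r}_2$.

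I first show that $\pi$ must contain at least one room $S$ with $\theta(S) = \frac{5\cdot(3)-1}{s}$: otherwise both $\hat{r}_1$ and $\hat{r}_2$ end up in rooms whose fraction they disapprove of or are at best neutral about, so $\hat{r}_1, \hat{r}_2 \in D_\pi^n \cup D_\pi^-$, and combined with $a^*$ this already gives $|D_\pi^n \cup D_\pi^-| \geq 3$. Reusing the sets $H_1 = \hat{R}^{set} \cup R_3^{red} \cup B_3^{fill}$ and $H_2 = \{\tilde{b}_3\} \cup R_3^{red} \cup B_3^{fill}$ from \cref{lemma3.1}, I then split on $|S \cap (D_\pi^n \cup D_\pi^-)|$. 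Because $|D_\pi^n \cup D_\pi^-| = 2$, at most two agents of $S$ are non-approving and $|H_1\setminus S| \leq 4$; every agent in $H_1 \setminus S$ that is not one of those two must find an approving room, and the only fractions it can legally sit in are $\frac{5\cdot(3)-1}{s}$, $\frac{5\cdot(3)-2}{s}$, and, for $\hat{r}_3$ only, $1$. The same population arithmetic that closed \cref{lemma3.1} shows that, once $a^*$ has already consumed a slot outside $H_1$, any such placement of the $H_1 \setminus S$ agents requires an extra non-approver to appear somewhere.

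The proof is completed by a case analysis on the family to which $a^*$ belongs. For $a^* \in R^{set} \cup \tilde{R}^{set}$ I replay Case~2.2 of the proof of \cref{mixpop2lemma}: the copy of $a^*$ forces the approving room of fraction $\frac{2(5j+1)}{s}$ to miss one slot, producing a second red non-approver. For $a^*$ in any of $R_j^{red}$, $B_j^{add}$, or $B_j^{fill}$ with $j \in [1,2] \cup [4, q+3]$, and for $a^* \in R^{mon} \cup B^{mon} \cup B^{even}$, the same ``short by one'' count carried out in \cref{popdpi-2mon} and \cref{popnogeq2} produces an additional non-approver in whichever room must host the rest of $a^*$'s family. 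In every case this yields a third agent in $D_\pi^n \cup D_\pi^-$, contradicting $|D_\pi^n \cup D_\pi^-| = 2$. The main obstacle is precisely this exhaustive book-keeping: the counts have to be tracked across the circular part $j \in [3]$, the main-reduction part $j \in [4,q+3]$, and the monolith part without double-counting any agent that sits simultaneously in the pivotal room $S$ and in a family room, but no new idea beyond those already used in \cref{lemma3.1}, \cref{popnot2}, \cref{mixpop2lemma}, and \cref{popdpi-2mon} is required.
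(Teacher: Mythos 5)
Your overall shape (contradiction via counting the agents who approve the fractions $\frac{5\cdot 3-1}{s}$ and $\frac{5\cdot 3-2}{s}$) is reasonable, but the engine you propose for the contradiction does not deliver. With a budget of \emph{two} non-approvers, exhibiting ``$a^*$ plus one additional non-approver'' is not a contradiction, yet that is all your per-family ``short by one'' count gives: replaying Case~2.2 of the proof of \cref{mixpop2lemma} for $a^*=r_i$ produces exactly one extra red non-approver in the room hosting $\tilde r_i$, i.e.\ two in total, and you never say where the third comes from. Worse, for $a^*$ in $B^{add}$, $B^{mon}$ or $B^{even}$ the ``short by one'' mechanism fails outright: these families also approve fraction $0$, and there are $s+6q+3>s$ agents approving fraction $0$, so a fraction-$0$ room can still be filled entirely with approvers after removing $a^*$ and no non-approver is forced. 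Your case list also omits $a^*\in B_3^{fill}\cup B_3^{add}$, which lie outside $\hat R^{set}\cup R_3^{red}$ and must be covered. Finally, you acknowledge but never resolve the one configuration in which the $H_1/H_2$ arithmetic genuinely breaks down, namely $\hat r_3$ sitting approvingly in a room of fraction $1$: there the $14$ approving agents of $\{\hat r_1,\hat r_2\}\cup R_3^{red}$ can all fit into the single room $S$, so no blue shortage arises from that count and the contradiction has to be extracted from the pure red room itself.

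The paper's proof needs none of this case analysis on $a^*$. It first dispatches the pure-red-room case: such a room forces two red agents into $D_\pi^-$, hence $|D_\pi^-|=2$ and $|D_\pi^n|=0$, and \cref{popdpi-2mon} gives $\pi=\pi_{mon}$, whose bad set is $\{\hat r_1,\hat r_2\}\subseteq\hat R^{set}$. Otherwise it observes that if some non-approver lies outside $\hat R^{set}\cup R_3^{red}$, then at least $15$ of the $16$ agents of $\hat R^{set}\cup R_3^{red}$ approve; with no pure red room they must occupy rooms of fraction $\frac{5\cdot 3-1}{s}$ or $\frac{5\cdot 3-2}{s}$, each of which contains at most $14$ such red agents, so at least two such rooms are needed; but only $s-13$ blue agents (namely $B_3^{fill}\cup\{\tilde b_3\}$) approve either fraction, leaving at least $s-15>2$ blue agents in those two rooms inside $D_\pi^-$. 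That single global count, rather than a family-by-family analysis of where $a^*$ lives, is the idea your proposal is missing.
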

\begin{proof}
	Let $\pi$ be an outcome such that $|D_\pi^n \cup D_\pi^-| = 2$. If $\pi$ contains a room consisting only of red agents, then by \cref{popdpi-2mon} $\pi = \pi_{mon}$. By definition of $\pi_{mon}$, we have $D_\pi^n \cup D_\pi^- \subseteq \hat{R}^{set} \cup R_3^{red}$. Therefore the lemma trivially holds, when $\pi$ contains a room consisting of only red agents. Thus for the remainder of the proof, we shall consider the case where $\pi$ does not contain a room that consists of only red agents.
	
	To derive a contradiction, let us assume that $D_\pi^n \cup D_\pi^- \nsubseteq \hat{R}^{set} \cup R_3^{red}$. This means that at most 1 agent from $\hat{R}^{set} \cup R_3^{red}$ is contained in $D_\pi^n \cup D_\pi^-$. Thus we can consider the following cases.
	\begin{enumerate}
		\item $|(D_\pi^n \cup D_\pi^-) \cap (\hat{R}^{set} \cup R_3^{red})| = 1$.\\
		Let $\{a\} = (D_\pi^n \cup D_\pi^-) \cap (\hat{R}^{set} \cup R_3^{red})$. The remaining agents in $(\hat{R}^{set} \cup R_3^{red}) \setminus \{a\}$ must be contained in $D_\pi^+$. Since $\pi$ does not contain a room consisting of only red agents, the remaining agents in $(\hat{R}^{set} \cup R_3^{red}) \setminus \{a\}$ must be in a room with fraction $\frac{5\cdot 3 - 1}{s}$ or $\frac{5\cdot 3 - 2}{s}$. 
		
		There are $15$ agents in $(\hat{R}^{set} \cup R_3^{red}) \setminus \{a\}$. The remaining red agents that approve of fraction $\frac{5\cdot 3 - 1}{s}$ and/or $\frac{5\cdot 3 - 2}{s}$ are exactly $(\hat{R}^{set} \cup R_3^{red}) \setminus \{a\}$. Thus the agents $(\hat{R}^{set} \cup R_3^{red}) \setminus \{a\}$ must be contained in at least 2 separate rooms $S_1, S_2$ with fraction $\frac{5\cdot 3 - 1}{s}$ or $\frac{5\cdot 3 - 2}{s}$. The blue agents that approve of fraction $\frac{5\cdot 3 - 1}{s}$ or $\frac{5\cdot 3 - 2}{s}$ are exactly the agents in $B_3^{fill} \cup \{\tilde{b}_3\}$ of which there are exactly $s-(5\cdot(3)-2)$. Thus we have at least $s-5\cdot 3 -1$ blue agents in $S_1 \cup S_2$ such that they are contained in $D_\pi^-$. Therefore this case contradicts our assumption that $|D_\pi^n \cup D_\pi^-| = 2$.

		\item $|(D_\pi^n \cup D_\pi^-) \cap (\hat{R}^{set} \cup R_3^{red})| = 0$.\\
		Then all the agents in $\hat{R}^{set} \cup R_3^{red}$ must be contained in $D_\pi^+$. Since $\pi$ does not contain a room consisting of only red agents, the remaining agents of $\hat{R}^{set} \cup R_3^{red}$ must be in a room with fraction $\frac{5\cdot 3 - 1}{s}$ or $\frac{5\cdot 3 - 2}{s}$.  
		
		There are $16$ remaining agents in $\hat{R}^{set} \cup R_3^{red}$. The remaining red agents that approve of fraction $\frac{5\cdot 3 - 1}{s}$ and/or $\frac{5\cdot 3 - 2}{s}$ are exactly $\hat{R}^{set} \cup R_3^{red}$. Thus the agents $\hat{R}^{set} \cup R_3^{red}$ must be contained in at least 2 separate rooms $S_1, S_2$ with fraction $\frac{5\cdot 3 - 1}{s}$ or $\frac{5\cdot 3 - 2}{s}$. The blue agents that approve of fraction $\frac{5\cdot 3 - 1}{s}$ or $\frac{5\cdot 3 - 2}{s}$ are exactly the agents in $B_3^{fill} \cup \{\tilde{b}_3\}$ of which there are exactly $s-(5\cdot(3)-2)$. Thus we have at least $s-5\cdot 3 -1$ blue agents in $S_1 \cup S_2$ such that they are contained in $D_\pi^-$. Therefore this case contradicts our assumption that $|D_\pi^n \cup D_\pi^-| = 2$.
	\end{enumerate}
	Thus our assumption that $D_\pi^n \cup D_\pi^- \nsubseteq \hat{R}^{set} \cup R_3^{red}$ cannot hold. Therefore $D_\pi^n \cup D_\pi^- \subseteq \hat{R}^{set} \cup R_3^{red}$.
\end{proof}

\begin{lemma}\label{lemma3.2}
	Every outcome $\pi$ has at least 1 agents in $D_\pi^-$. %That is,
	%$$\forall_\pi: |D_\pi^-|\geq 1.$$
\end{lemma}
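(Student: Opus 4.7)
The plan is to argue by contradiction: suppose some outcome $\pi$ has $|D_\pi^-|=0$. By \cref{lemma3.1} we then get $|D_\pi^n| = |D_\pi^n \cup D_\pi^-| \geq 2$, so at least two agents sit in rooms whose fractions they are merely neutral about.

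Inspecting the preference table, the only agents with a non-empty neutral set are the circular set agents $\hat{R}^{set}$ and the circular redundant agents $R_3^{red}$, and their sole neutral fraction is $\frac{5\cdot(2)-1}{s} = \frac{9}{s}$. Hence each of the at least two neutral agents must lie in a room of fraction $\frac{9}{s}$. Let $S$ be any such room. Because $|D_\pi^-|=0$, every blue agent in $S$ must approve of $\frac{9}{s}$; a scan of the blue preferences shows that the only blue approvers of $\frac{9}{s}$ are the $s-9$ agents of $B_2^{fill}$, and that no blue agent is neutral about $\frac{9}{s}$. Since $S$ has exactly $s-9$ blue slots, $S$ must contain all of $B_2^{fill}$. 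In particular $S$ is the unique room of fraction $\frac{9}{s}$ in $\pi$, and it houses at least two neutral agents from $\hat{R}^{set} \cup R_3^{red}$.

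I would then squeeze the agents of $R_2^{red}$. With at least two of the nine red slots of $S$ occupied by neutral agents, at most seven of the eight agents of $R_2^{red}$ fit in $S$, so some $r \in R_2^{red}$ lies outside $S$. Since $r$ has dichotomous preferences with $D_r^+ = \{\frac{9}{s}, \frac{8}{s}\}$ and must lie in $D_\pi^+$, it must belong to a room $S'$ of fraction $\frac{8}{s}$. Repeating the blue-approvers scan, the only blue approvers of $\frac{8}{s}$ are $B_2^{fill} \cup \{\tilde{b}_2\}$, and no blue agent is neutral about $\frac{8}{s}$; but all of $B_2^{fill}$ is already locked inside $S$, leaving only the single agent $\tilde{b}_2$ to fill the $s-8 \gg 1$ blue slots of $S'$. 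This is the desired contradiction.

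The conceptual core is short; the main obstacle will be bookkeeping. One has to double-check from the preference table exactly which red and blue agents approve of or are neutral about the two fractions $\frac{9}{s}$ and $\frac{8}{s}$ among the many fractions produced by the reduction, and in particular to verify that no blue agent is neutral about $\frac{9}{s}$, since the uniqueness of the $\frac{9}{s}$-room and of the pool of $\frac{8}{s}$-approvers both depend on this.
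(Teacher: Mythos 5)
Your proof is correct, and it is noticeably more streamlined than the paper's. Both arguments share the same engine: with $D_\pi^-=\emptyset$, Lemma~\ref{lemma3.1} forces at least two neutral agents, neutrality is only possible at fraction $\frac{9}{s}$, the $s-9$ blue approvers $B_2^{fill}$ get locked into a single $\frac{9}{s}$-room, and a displaced agent of $R_2^{red}$ then cannot be housed because only $\tilde{b}_2$ remains to fill $s-8$ blue slots. The difference is that the paper splits into four cases on $|D_\pi^n\cap\hat{R}^{set}|$, and in the cases where fewer than two of the $\hat{r}_j$ are neutral it follows a longer chain through the $\frac{14}{s}$- and $\frac{13}{s}$-rooms before reaching the same kind of counting contradiction. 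You avoid all of that by observing that it does not matter whether the two neutral agents come from $\hat{R}^{set}$ or from $R_3^{red}$: either way they occupy two of the nine red slots of the unique $\frac{9}{s}$-room and are disjoint from $R_2^{red}$, so an $R_2^{red}$ agent is displaced in every case. This buys a single uniform argument in place of the paper's case analysis; the one piece of bookkeeping you flag (that no blue agent approves of or is neutral about $\frac{9}{s}$ except $B_2^{fill}$, and about $\frac{8}{s}$ except $B_2^{fill}\cup\{\tilde b_2\}$) does check out against the preference table, so the proof is complete as sketched.
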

\begin{proof}
	Let $\pi$ be an arbitrary outcome. By \cref{lemma3.1}, we have that there are at least 2 agents in $D_\pi^n \cup D_\pi^-$.
	
	To derive a contradiction assume that there exists an outcome $\pi$ such that $D_\pi^n \cup D_\pi^- = D_\pi^n$. That is, we assume that $D_\pi^- = \emptyset$. Note that $\pi$ cannot contain a room that only consists of red agents as there are exactly $s-2$ agents that do not disapprove of fraction $1$. By construction of the roommate diversity game, we have that $D_\pi^n \cup D_\pi^- \subseteq \hat{R}^{set} \cup R_3^{red}$, as these are the only agents with non-empty $D_a^n$ sets. The only fraction in the set $D_a^n$ is $\frac{5\cdot(2)-1}{s}$. Note that $2 \leq |D_\pi^n \cup D_\pi^-| = |D_\pi^n|  \leq 5\cdot (3)+1 = 16$.
	
	Consider all the agents that do not disapprove of fraction $\frac{5\cdot(2)-1}{s}$. That is, let us consider the set 
	$$\{a \in R\cup B | \frac{5\cdot(2)-1}{s} \notin D_a^- \} = \hat{R}^{set} \cup R_2^{red} \cup R_3^{red} \cup B_2^{fill}.$$ 
	Let us consider the following cases regarding the cardinality of $D_\pi^n \cap \hat{R}^{set}$.
	
	\begin{enumerate}
		\item $|D_\pi^n \cap \hat{R}^{set}| = 3$.\\
		Note that $\hat{r}_1, \hat{r}_2 , \hat{r}_3$ must be in the same room $S$. If they are in 2 separate rooms, we require $2 \cdot (s - (5 \cdot 2 -1))$ blue agents that do not disapprove of fraction $\frac{5\cdot(2)-1}{s}$ and we only have  $s - (5 \cdot 2 -1)$ blue agents, namely $B_2^{fill}$. Similarly, if they are in 3 separate rooms, we require $3 \cdot (s - (5 \cdot 2 -1))$ blue agents that do not disapprove of fraction $\frac{5\cdot(2)-1}{s}$.
		
		Since $D_\pi^- = \emptyset$ and $\theta(\pi(S)) = \frac{5\cdot(2)-1}{s}$, all the blue agents in $B_2^{fill}$ must be contained in $S$.
		
		Since $S$ contains $5\cdot(2)-1$ red agents, 3 of which are $\hat{r}_1, \hat{r}_2 , \hat{r}_3$, there must exist a circular agent $r_2^i \in R_2^{red}$, where $1 \leq i \leq 5\cdot 2 -2$ such that $r_2^i \notin S$. Let $S'$ denote the room in $\pi$ that contains $r_2^i$. Since $D_\pi^- = \emptyset$, we have that $r_2^i \in D_\pi^+$. Thus $\theta(S') = \frac{5\cdot 2 -1}{s}$ or $\theta(S') = \frac{5\cdot 2 -2}{s}$. There are exactly $s-(5\cdot 2-2)$ blue agents that do not disapprove of fraction $\frac{5\cdot 2 -1}{s}$ and/or $\frac{5\cdot 2 -2}{s}$, namely $B_2^{fill} \cup B_2^{add}$. Since $B_2^{fill} \subseteq S$, there is at least 1 blue agent in $S'$ that is contained in $D_\pi^-$, which contradicts our assumption that $D_\pi^- = \emptyset$.
		
		\item $|D_\pi^n \cap \hat{R}^{set}| = 2$.\\
		W.l.o.g. let $\hat{r}_1, \hat{r}_2  \in D_\pi^n$, thus $\hat{r}_3 \in D_\pi^+$. Note that $\hat{r}_1, \hat{r}_2 $ must be in the same room $S$. If they are in 2 separate rooms, we require $2 \cdot (s - (5 \cdot 2 -1))$ blue agents that do not disapprove of fraction $\frac{5\cdot(2)-1}{s}$ and we only have  $s - (5 \cdot 2 -1)$ blue agents, namely $B_2^{fill}$.
		
		Since $D_\pi^- = \emptyset$ and $\theta(\pi(S)) = \frac{5\cdot(2)-1}{s}$, all the blue agents in $B_2^{fill}$ must be contained in $S$.
		
		Since $S$ contains $5\cdot(2)-1$ red agents, 2 of which are $\hat{r}_1, \hat{r}_2 $, there must exist a circular agent $r_2^i \in R_2^{red}$, where $1 \leq i \leq 5\cdot 2 -2$ such that $r_2^i \notin S$. Let $S'$ denote the room in $\pi$ that contains $r_2^i$. Since $D_\pi^- = \emptyset$, we have that $r_2^i \in D_\pi^+$. Thus $\theta(S') = \frac{5\cdot 2 -1}{s}$ or $\theta(S') = \frac{5\cdot 2 -2}{s}$. There are exactly $s-(5\cdot 2-2)$ blue agents that do not disapprove of fraction $\frac{5\cdot 2 -1}{s}$ and/or $\frac{5\cdot 2 -2}{s}$, namely $B_2^{fill} \cup B_2^{add}$. Since $B_2^{fill} \subseteq S$, there is at least 1 blue agent in $S'$ that is contained in $D_\pi^-$, which contradicts our assumption that $D_\pi^- = \emptyset$
		
		\item $|D_\pi^n \cap \hat{R}^{set}| = 1$.\\
		W.l.o.g. let $\hat{r}_1 \in D_\pi^n$, thus $\hat{r}_2, \hat{r}_3 \in D_\pi^+$. Note that $\hat{r}_2, \hat{r}_3$ must be in the same room $S$. If they are in 2 separate rooms, we require $2 \cdot (s - (5 \cdot 3 -1))$ blue agents that do not disapprove of fraction $\frac{5\cdot(3)-1}{s}$ and we only have $s - (5 \cdot 3 -1)$ blue agents, namely $B_3^{fill}$.
		
		Since $D_\pi^- = \emptyset$ and $\theta(\pi(S)) = \frac{5\cdot(3)-1}{s}$, all the blue agents in $B_3^{fill}$ must be contained in $S$.
		
		Since $S$ contains $5\cdot(3)-1$ red agents, 2 of which are $\hat{r}_2 , \hat{r}_3$ and the remaining $5\cdot(3)-1-2$ from $R_3^{red}$, there must exists a circular agent $r_3^i \in R_3^{red}$, where $1 \leq i \leq 5\cdot 3 -2$ such that $r_3^i \notin S$. Let $S'$ denote the room in $\pi$ that contains $r_3^i$. Since $D_\pi^- = \emptyset$, we have that $r_3^i \in D_\pi^+ \cup D_\pi^n$. Thus $\theta(S') = \frac{5\cdot 3 -1}{s}$, $\theta(S') = \frac{5\cdot 3 -2}{s}$ or $\theta(S') = \frac{5\cdot 2 -1}{s}$. The blue agents that do not disapprove of fraction $\frac{5\cdot 3 -1}{s}$ and/or $\frac{5\cdot 3 -2}{s}$ are exactly $B_3^{fill}\cup B_3^{add}$ of which all but 1 are contained in $S$. Thus we have that $\theta(S') = \frac{5\cdot 2 -1}{s}$, otherwise $D_\pi^-$ cannot be empty.
		
		Since $D_\pi^- = \emptyset$ and $\theta(\pi(S')) = \frac{5\cdot(2)-1}{s}$, all the blue agents in $B_2^{fill}$ must be contained in $S'$.
		
		Since $S'$ must contain $5\cdot(2)-1$ red agents from $\hat{r}_1$, $R_3^{red}$ and $R_2^{red}$, as these are the only remaining red agents that do not disapprove of fraction $\frac{5\cdot(2)-1}{s}$, there must exist an agent $r \in \{\hat{r}_1\} \cup R_3^{red} \cup R_2^{red}$ such that $r \notin S'$. Let $S''$ denote the room in $\pi$ that contains $r$. Since $D_\pi^- = \emptyset$, we have that $r \in D_\pi^+ \cup D_\pi^n$. Thus $\theta(S'') = \frac{5\cdot 2 - 1}{s}$, $\theta(S'') = \frac{5\cdot 2 - 2}{s}$, $\theta(S'') = \frac{5\cdot 3 - 1}{s}$, or $\theta(S'') = \frac{5\cdot 3 - 2}{s}$. The blue agents that do not disapprove of fraction $\frac{5\cdot 3 -1}{s}$ and/or $\frac{5\cdot 3 -2}{s}$ are exactly in $B_3^{fill} \cup B_3^{add}$ of which all but 1 are contained in $S$. The blue agents that do not disapprove of fraction $\frac{5\cdot 2 -1}{s}$ and/or $\frac{5\cdot 2 -2}{s}$ are exactly $B_2^{fill} \cup B_2^{add}$ of which all but 1 are contained in $S'$. Thus at least 1 blue agent in $S''$ must be contained in $D_\pi^-$, which contradicts our assumption that $D_\pi^- = \emptyset$.
		
		\item $|D_\pi^n \cap \hat{R}^{set}| = 0$.\\
		Note that $\hat{r}_1, \hat{r}_2 , \hat{r}_3 \in D_\pi^+$ and that $\hat{r}_1, \hat{r}_2 , \hat{r}_3$ must be in the same room $S$. If they are in 2 separate rooms, we require $2 \cdot (s - (5 \cdot 3 -1))$ blue agents that do not disapprove of fraction $\frac{5\cdot(3)-1}{s}$ and we only have $s - (5 \cdot 3 -1)$ blue agents, namely $B_3^{fill}$. Similarly, if they are in 3 separate rooms, we require $3 \cdot (s - (5 \cdot 3 -1))$ blue agents that do not disapprove of fraction $\frac{5\cdot(3)-1}{s}$.
		
		Since $D_\pi^- = \emptyset$ and $\theta(\pi(S)) = \frac{5\cdot(3)-1}{s}$, all the blue agents in $B_3^{fill}$ must be contained in $S$.
		
		Since $S$ contains $5\cdot(3)-1$ red agents, 3 of which are $\hat{r}_1, \hat{r}_2 , \hat{r}_3$ and the remaining $5\cdot(3)-1-3$ from $R_3^{red}$, there must exist a circular agent $r_3^i \in R_3^{red}$, where $1 \leq i \leq 5\cdot 3 -2$ such that $r_3^i \notin S$. Let $S'$ denote the room in $\pi$ that contains $r_3^i$. Since $D_\pi^- = \emptyset$, we have that $r_3^i \in D_\pi^+ \cup D_\pi^n$. Thus $\theta(S') = \frac{5\cdot 3 -1}{s}$, $\theta(S') = \frac{5\cdot 3 -2}{s}$ or $\theta(S') = \frac{5\cdot 2 -1}{s}$. The blue agents that do not disapprove of fraction $\frac{5\cdot 3 -1}{s}$ and/or $\frac{5\cdot 3 -2}{s}$ are exactly $B_3^{fill} \cup B_3^{add}$ of which all but 1 are contained in $S$. Thus we have that $\theta(S') = \frac{5\cdot 2 -1}{s}$, otherwise $D_\pi^-$ cannot be empty.
		
		Since $D_\pi^- = \emptyset$ and $\theta(\pi(S')) = \frac{5\cdot(2)-1}{s}$, all the blue agents in $B_2^{fill}$ must be contained in $S'$.
		
		Since $S'$ must contain $5\cdot(2)-1$ red agents from $R_3^{red}$ and $R_2^{red}$, as these are the only remaining red agents that do not disapprove of fraction $\frac{5\cdot(2)-1}{s}$, there must exist an agent $r_j^k \in R_3^{red} \cup R_2^{red}$ where $2 \leq j \leq 3$ and $1 \leq k \leq 5 \cdot j - 2$ such that $r_j^k \notin S'$. Let $S''$ denote the room in $\pi$ that contains $r_j^k$. Since $D_\pi^- = \emptyset$, we have that $r_j^k \in D_\pi^+ \cup D_\pi^n$. Thus $\theta(S'') = \frac{5\cdot 2 - 1}{s}$, $\theta(S'') = \frac{5\cdot 2 - 2}{s}$, $\theta(S'') = \frac{5\cdot 3 - 1}{s}$, or $\theta(S'') = \frac{5\cdot 3 - 2}{s}$. The blue agents that do not disapprove of fraction $\frac{5\cdot 3 -1}{s}$ and/or $\frac{5\cdot 3 -2}{s}$ are exactly $B_3^{fill}\cup B_3^{add}$ of which all but 1 are contained in $S$. The blue agents that do not disapprove of fraction $\frac{5\cdot 2 -1}{s}$ and/or $\frac{5\cdot 2 -2}{s}$ are exactly $B_2^{fill} \cup B_2^{add}$ of which all but 1 are contained in $S'$. Thus at least 1 blue agent in $S''$ must be contained in $D_\pi^-$, which contradicts our assumption that $D_\pi^- = \emptyset$. \qedhere
	\end{enumerate}
\end{proof}

\begin{lemma}\label{lemma3.7}
	Let $\pi$ be an outcome of the roommate diversity game and assume that $(X,C)$ has a solution. If for outcome $\pi$ it holds that $|D_\pi^n|=1$ and $|D_\pi^-|=1$, then $\pi$ must be a reduced-type outcome.
\end{lemma}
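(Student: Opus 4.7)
The plan is to use \cref{lemma3.6} to localize both non-approving agents inside $\hat{R}^{set} \cup R^{red}_3$, and then show that the rest of $\pi$ is forced, room by room, to match a reduced-type template. Write $\{a^n\} = D_\pi^n$ and $\{a^-\} = D_\pi^-$. By \cref{lemma3.6}, $a^n, a^- \in \hat{R}^{set} \cup R^{red}_3$, so every other agent lies in $D_\pi^+$. I first rule out any room of fraction $1$: exactly $s-2$ red agents approve $1$, so such a room would require two extra reds drawn from $\hat{R}^{set} \cup R^{red}_3$, and each of these has $1 \in D_a^-$ (not in $D_a^n$), giving $|D_\pi^-| \geq 2$ and contradicting the hypothesis.

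With fraction-$1$ rooms excluded, every set agent $r_i \in R^{set}$ (and every copy $\tilde{r}_i \in \tilde{R}^{set}$) lies in $D_\pi^+$ and must occupy a room of fraction $\frac{2(5\tilde{j}+1)}{s}$ for some $\tilde{j}-3 \in J^i$. A direct count of approvers shows the only candidate room is $\{r_i, \tilde{r}_i : i \in A_{\tilde{j}-3}\} \cup R^{red}_{\tilde{j}} \cup B^{fill}_{\tilde{j}}$. Collecting the indices obtained this way yields a subset $C' \subseteq C$ partitioning $X$, i.e., a solution of $(X,C)$. Parallel counting arguments then force, for each $A_j \notin C'$, the room $B^{add}_{j+3} \cup R^{red}_{j+3} \cup B^{fill}_{j+3}$, and also force $R^{mon} \cup B^{mon}$ to be a room. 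Since the only neutral fraction for agents in $\hat{R}^{set} \cup R^{red}_3$ is $\frac{9}{s}$, the agent $a^n$ sits in a room of fraction $\frac{9}{s}$, and counting the $s-1$ approvers of $\frac{9}{s}$ pins this room down as $\{a^n\} \cup R^{red}_2 \cup B^{fill}_2$.

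The main obstacle is the circular portion. A priori, $R^{red}_1$ could lie in the fraction-$\frac{3}{s}$ room $R^{red}_1 \cup B^{add}_1 \cup B^{fill}_1$ instead of a fraction-$\frac{4}{s}$ room. I rule this out by tracking the $0$-approvers: in that scenario, only $s-1$ of the remaining blues approve fraction $0$, yet $B^{even}$ (whose members approve only fraction $0$) must still be packed into a fraction-$0$ room of size $s$. The shortfall forces a non-approving agent into that room, either violating the $|D_\pi^-| = 1$ bound or changing the room's fraction away from $0$, a contradiction. Consequently, $R^{red}_1$ lies in a fraction-$\frac{4}{s}$ room $\{a'\} \cup R^{red}_1 \cup B^{fill}_1$. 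Since $\frac{4}{s}$ is not neutral for any agent, $a' \in D_\pi^-$, hence $a' = a^-$.

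The $14$ remaining agents of $(\hat{R}^{set} \cup R^{red}_3) \setminus \{a^n, a^-\}$ are all in $D_\pi^+$, and an analogous count rules out placing them in a fraction-$\frac{13}{s}$ room (which would leave a lone extra red stranded with no approving fraction), forcing them into a single fraction-$\frac{14}{s}$ room together with $B^{fill}_3$, matching $P_3$ of the template. The leftover blues all approve fraction $0$ and fill the final fraction-$0$ room. Setting $a_1 := a^-$, $a_2 := a^n$, and choosing $\{a_3, a_4, a_5\}$ to be any $3$-element completion of $\{a_1, a_2\}$ within $\hat{R}^{set} \cup R^{red}_3$ so that $\hat{R}^{set} \subseteq \{a_1, \ldots, a_5\}$, we exhibit $\pi$ as the reduced-type outcome $\pi_{C'}$, completing the proof.
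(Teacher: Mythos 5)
Your proposal is correct and follows essentially the same approach as the paper's proof: apply \cref{lemma3.6} to confine $D_\pi^n\cup D_\pi^-$ to $\hat{R}^{set}\cup R_3^{red}$, exclude all-red rooms, and then force each room of $\pi$ by counting the approvers of every relevant fraction. The only cosmetic difference is in the circular portion, where you rule out the fraction-$\frac{3}{s}$ room $R_1^{red}\cup B_1^{add}\cup B_1^{fill}$ directly via the shortfall of $0$-approvers, whereas the paper first pins down the fraction-$0$ room (which must absorb $\tilde{b}_1$) and then obtains $\{r^-\}\cup R_1^{red}\cup B_1^{fill}$ by elimination; these are the same counting argument read in opposite orders.
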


%TODO fix proof maybe. Some steps may be too implicit.
\begin{proof}
	Let $\pi$ be an outcome such that $|D_\pi^n|=1$ and $|D_\pi^-|=1$. By \cref{lemma3.6}, we have that $D_\pi^n \cup D_\pi^- \subseteq \hat{R}^{set} \cup R_3^{red}$. \cref{lemma3.6} also implies that $B \subseteq D_\pi^+$. Let us denote the agents in $D_\pi^n$ and $D_\pi^-$ by $r^n$ and $r^-$ respectively. Note that $\pi$ cannot contain a room that only consists of red agents, as there are exactly $s-2$ red agents that do not disapprove of fraction $1$. Let $C' \subseteq C$ denote some solution of $(X,C)$. We shall show that the following 7 properties hold for $\pi$.
	\begin{enumerate}
		\item\label{lem25prop1} $R^{mon} \cup B^{mon} \in \pi$.\\
		Let $S^{mon} \in \pi$ denote a room that contains some monolith red agent. Since $D_\pi^n \cup D_\pi^- \subseteq \hat{R}^{set} \cup R_3^{red}$, all the monolith red agents $R^{mon}$ and monolith blue agents $B^{mon}$ must be contained in $D_\pi^+$. Since $\pi$ cannot contain a room consisting of only red agents, any monolith red agent must be in a room of fraction $\frac{s-2m-3}{s}$. The only agents that approve of fraction $\frac{s-2m-3}{s}$ are exactly the monolith red agents $R^{mon}$ and monolith blue agents $B^{mon}$, any other agent disapproves of that fraction.
		
		There are exactly $s-2m-3$ monolith red agents. Thus we have that $R^{mon} \subseteq S^{mon}$. If there exists a monolith red agent $r^p_{mon} \in R^{mon}$ such that $r^p_{mon} \notin S^{mon}$, we would require 2 rooms with fraction $\frac{s-2m-3}{s}$. This would contradict $|D_\pi^-|=1$ as we would have at least $s-2m-3$ red agents in $D_\pi^-$.
		
		There are exactly $2m+3$ monolith blue agents. Thus we have that $B^{mon} \subseteq S^{mon}$. If there exists a monolith blue agent $b^p_{mon} \in B^{mon}$ such that $b^p_{mon} \notin S^{mon}$, there must be a blue agent $b \in S^{mon}$ that disapproves of fraction $\frac{s-2m-3}{s}$. This would contradict $D_\pi^n \cup D_\pi^- \subseteq \hat{R}^{set} \cup R_3^{red}$, since we would have $b \in D_\pi^-$.
		
		Thus the agents in $R^{mon}$ and $B^{mon}$ must belong to the same room, i.e.,
		$$S^{mon} = R^{mon} \cup B^{mon} \in \pi.$$
		
		\item\label{lem25prop2} $\{r_i, \tilde{r}_i\}_{i \in A_{j-3}} \cup R_j^{red} \cup B_j^{fill} \in \pi$ for each $A_{j-3} \in C'$.\\
		Since $\pi$ does not contain a room that only consists of red agents and $D_\pi^n \cup D_\pi^- \subseteq \hat{R}^{set} \cup R_3^{red}$, any set agents $r_{i'} \in R^{set}$ must be in a room with fraction $\frac{2(5 \cdot j' + 1)}{s}$ where $i' \in A_{j'-3}$. There are exactly $s$ agents that approve of fraction $\frac{2(5 \cdot j' + 1)}{s}$, namely the agents in $\{r_i, \tilde{r}_i\}_{i \in A_{j'-3}} \cup R_{j'}^{red} \cup B_{j'}^{fill}$, any other agent disapproves of that fraction. Since $|D_\pi^-| = 1$, the room $\pi(r_{i'})$ must contain at least $s-1$ agents from $\{r_i, \tilde{r}_i\}_{i \in A_{j'-3}} \cup R_{j'}^{red} \cup B_{j'}^{fill}$.
		
		Let us assume that there exists an agent $a \in \{r_i, \tilde{r}_i\}_{i \in A_{j'-3}} \cup R_{j'}^{red} \cup B_{j'}^{fill}$ such that $a \notin \pi(r_{i'})$. Note that under this assumption, the agent $r^- \in D_\pi^-$ must be contained in room $\pi(r_{i'})$. Therefore we have that $a \in D_\pi^+$. Consider the following cases.
		\begin{itemize}
			\item $a \in \{r_i, \tilde{r}_i\}_{i \in A_{j'-3}}$.\\
			W.l.o.g. let us write $a = r_{i''}$. Since $\pi$ does not contain a room that only consists of red agents, we must have $\theta(\pi(a)) = \frac{2(5 \cdot j'' + 1)}{s}$ where $i'' \in A_{j''-3}$. There are exactly $s$ agents that approve of fraction $\frac{2(5 \cdot j'' + 1)}{s}$, namely the agents in $\{r_{i}, \tilde{r}_{i}\}_{i \in A_{j''-3}} \cup R_{j''}^{red} \cup B_{j''}^{fill}$, any other agent disapproves of that fraction.
			 
			Note that $\tilde{r}_{i''} \in \{r_{i}, \tilde{r}_{i}\}_{i \in A_{j''-3}}$ and $\tilde{r}_{i''} \in \pi(r_{i'})$. Since $a \notin \pi(r_{i'})$, we have that $\tilde{r}_{i''} \notin \pi(a)$. Thus the room $\pi(a)$ contains at most $s-1$ agents from $\{r_{i}, \tilde{r}_{i}\}_{i \in A_{j''-3}} \cup R_{j''}^{red} \cup B_{j''}^{fill}$. Therefore room $\pi(a)$ contains at least 1 agent that disapproves of fraction $\frac{2(5 \cdot j'' + 1)}{s}$.
			
			Since both room $\pi(r_{i'})$ and room $\pi(a)$ contain an agent that disapproves of the fraction of its assigned room, we have that $|D_\pi^-| \geq 2$. This contradicts $|D_\pi^-| = 1$, thus this case cannot occur.
			
			\item $a \in  R_{j'}^{red} \vee a \in  B_{j'}^{fill}$.\\
			We have that $\theta(\pi(a)) = \frac{2(5 \cdot j' + 1)}{s}$ or  $\theta(\pi(a)) = \frac{2(5 \cdot j' - 2)}{s}$. 
			
			There are exactly $s$ agents that approve of fraction $\frac{2(5 \cdot j' + 1)}{s}$, namely the agents in $\{r_i, \tilde{r}_i\}_{i \in A_{j'-3}} \cup R_{j'}^{red} \cup B_{j'}^{fill}$, any other agent disapproves of that fraction.

			There are exactly $s$ agents that approve of fraction $\frac{2(5 \cdot j' - 2)}{s}$, namely the agents in $B_{j'}^{add} \cup R_{j'}^{red} \cup B_{j'}^{fill}$, any other agent disapproves of that fraction.
			
			Since $R_{j'}^{red} \cup B_{j'}^{fill} \setminus \{a\} \subseteq \pi(r_{i'})$, the room $\pi(a)$ must contain at least $|R_{j'}^{red} \cup B_{j'}^{fill} \setminus \{a\}| = s-7 > 2$ agents that disapprove of the fraction $\theta(\pi(a))$. This contradicts $|D_\pi^-| = 1$, thus this case cannot occur.
			
		\end{itemize}
		
		Since there does not exist an agent $a \in \{r_i, \tilde{r}_i\}_{i \in A_{j'-3}} \cup R_{j'}^{red} \cup B_{j'}^{fill}$ such that $a \notin \pi(r_{i'})$, we have that
		$$\{r_i, \tilde{r}_i\}_{i \in A_{j'-3}} \cup R_{j'}^{red} \cup B_{j'}^{fill} \in \pi.$$
		
		Let $A$ denote the collection of 3-sets $A_{j-3} \in C$ such that 
		$\{r_i, \tilde{r}_i\}_{i \in A_{j-3}} \cup R_{j}^{red} \cup B_{j}^{fill} \in \pi,$
		i.e., 
		$$A = \{A_{j-3} \in C | \{r_i, \tilde{r}_i\}_{i \in A_{j-3}} \cup R_{j}^{red} \cup B_{j}^{fill} \in \pi\}.$$ In a valid outcome $\pi$, every set agent is assigned to exactly 1 room. Therefore no two 3-sets in $A$ share an element, i.e., for any $A_i, A_j \in A$ such that $i \neq j$, we have that $A_i\cap A_j = \emptyset$. 
		
		From the above, we have that $A$ must be a solution of $(X,C)$. Therefore we have some solution $A = C' \subseteq C$ of $(X,C)$ where for each $A_{j-3} \in C'$ it holds that
		$$\{r_i, \tilde{r}_i\}_{i \in A_{j-3}} \cup R_j^{red} \cup B_j^{fill} \in \pi.$$
		% If not, (copy) set agents must be assigned to at least 2 separate rooms
		
		\item\label{lem25prop3} $B_j^{add} \cup R_j^{red} \cup B_j^{fill} \in \pi$ for each $A_{j-3} \notin C'$.\\
		Since $D_\pi^n \cup D_\pi^- \subseteq \hat{R}^{set} \cup R_3^{red}$, any redundant agent $r_{j'}^p \in R_{j'}^{red}$, where $A_{j'-3}\notin C'$, must be in a room with fraction $\frac{2(5 \cdot j' + 1)}{s}$ or $\frac{2(5 \cdot j' - 2)}{s}$. 
		
		There are exactly $s$ agents that approve of fraction $\frac{2(5 \cdot j' + 1)}{s}$, namely the agents in $\{r_i, \tilde{r}_i\}_{i \in A_{j'-3}} \cup R_{j'}^{red} \cup B_{j'}^{fill}$, any other agent disapproves of that fraction.

		There are exactly $s$ agents that approve of fraction $\frac{2(5 \cdot j' - 2)}{s}$, namely the agents in $B_{j'}^{add} \cup R_{j'}^{red} \cup B_{j'}^{fill}$, any other agent disapproves of that fraction.
		
		By property 2, all the set agents $r_{i'} \in R^{set}, \tilde{r}_{i'} \in \tilde{R}^{set}$ must be in the room $\{r_i, \tilde{r}_i\}_{i \in A_{j''-3}} \cup R_{j''}^{red} \cup B_{j''}^{fill} \in \pi$ where $i' \in A_{j''-3}$ and $A_{j''-3} \in C'$. Thus there are exactly $2(5j'-2)$ remaining red agents that approve of fraction $\frac{2(5 \cdot j' + 1)}{s}$. Therefore room $\pi(r_{j'}^p)$ cannot have fraction $\frac{2(5 \cdot j' + 1)}{s}$, as otherwise we would have at least 3 agents in $D_\pi^-$. Therefore room $\pi(r_{j'}^p)$ must be of fraction $\frac{2(5 \cdot j' - 2)}{s}$.
		
		Since $|D_\pi^-| = 1$, the room $\pi(r_{j'}^p)$ contains at least $s-1$ agents from $R_{j'}^{red} \cup B_{j'}^{fill} \cup B_{j'}^{add}$. Assume that there exists an agent $a \in R_{j'}^{red} \cup B_{j'}^{fill} \cup B_{j'}^{add}$ such that $a \notin \pi(r_{j'}^p)$. Note that under this assumption, the agent $r^- \in D_\pi^-$ must be contained in room $\pi(r_{j'}^p)$. Therefore we have that $a \in D_\pi^+$. Additionally, since $D_\pi^n \cup D_\pi^- \subseteq \hat{R}^{set} \cup R_3^{red}$, we have that $a$ is a red agent. Thus we have that $a\in R_{j'}^{red}$. Since $a\in R_{j'}^{red}$ and $a \in D_\pi^+$, we have that $\theta(\pi(a)) = \frac{2(5 \cdot j' + 1)}{s}$ or $\theta(\pi(a)) = \frac{2(5 \cdot j' - 2)}{s}$.
		
		There are exactly $s$ agents that approve of fraction $\frac{2(5 \cdot j' + 1)}{s}$, namely the agents in $\{r_i, \tilde{r}_i\}_{i \in A_{j'-3}} \cup R_{j'}^{red} \cup B_{j'}^{fill}$, any other agent disapproves of that fraction.

		There are exactly $s$ agents that approve of fraction $\frac{2(5 \cdot j' - 2)}{s}$, namely the agents in $B_{j'}^{add} \cup R_{j'}^{red} \cup B_{j'}^{fill}$, any other agent disapproves of that fraction.
		
		Since $R_{j'}^{red} \cup B_{j'}^{fill} \setminus \{a\} \subseteq \pi(r_{j'}^p)$, the room $\pi(a)$ must contain at least $|R_{j'}^{red} \cup B_{j'}^{fill} \setminus \{a\}| = s-7 > 2$ agents that disapprove of the fraction $\theta(\pi(a))$. This contradicts $|D_\pi^-| = 1$, thus there cannot exist an agent $a \in R_{j'}^{red} \cup B_{j'}^{fill} \cup B_{j'}^{add}$ such that $a \notin \pi(r_{j'}^p)$. Therefore we have that
		$$R_{j'}^{red} \cup B_{j'}^{fill} \cup B_{j'}^{add} \in \pi.$$
		
		The above holds for any redundant agent $r_j^p \in R_j^{red}$ where $A_{j-3} \notin C'$. Thus we have that for each $A_{j-3} \notin C'$
		$$B_j^{add} \cup R_j^{red} \cup B_j^{fill} \in \pi.$$
		% If not, then we contradict $|D_\pi^n|=1$ and $|D_\pi^-|=1$
		
		\item $\{r^n\} \cup R_2^{red} \cup B_2^{fill} \in \pi$.\\
		Let us define the room $S^n = \pi(r^n)$. We have that $\theta(S^n) = \frac{5\cdot 2 - 1}{s}$ as it is the only fraction contained in any $D_a^n$. Since $r^- \in D_\pi^-$ and $r^- \in \hat{R}^{set} \cup R_3^{red}$, we have that $r^- \notin S^n$. If $r^- \in S^n$, we would have $r^- \in D_\pi^n$. Thus the remaining agents in $S^n$ must be contained in $D_\pi^+$. 
		
		The red agents that approve of fraction $\frac{5\cdot 2 - 1}{s}$ are exactly the agents in $R_2^{red}$ of which there are $5\cdot 2 -2$, any other red agent does not approve of that fraction. Thus the red agents in $S^n$ are exactly $r^n$ and $R_2^{red}$. 
		
		The blue agents that approve of fraction $\frac{5\cdot 2 - 1}{s}$ are exactly $B_2^{fill} \cup B_2^{add}$ of which there are $s-(5\cdot 2 - 2)$, any other blue agent disapproves of that fraction.
		
		Assume that $\{\tilde{b}_2\} = B_2^{add} \subseteq S^n$. This means that there exists a blue agent $b \in B_2^{fill}$ such that $b \notin S^n$. Since $D_\pi^n \cup D_\pi^- \subseteq \hat{R}^{set} \cup R_3^{red}$, we have that $b \in D_\pi^+$. Therefore we have $\theta(\pi(b)) = \frac{5\cdot 2 - 1}{s}$ or $\theta(\pi(b)) = \frac{5\cdot 2 - 2}{s}$.
		
		There are exactly $s-1$ agents that approve of fraction $\frac{5 \cdot 2 - 1}{s}$, namely the agents in $R_{2}^{red} \cup B_{2}^{fill}$, any other agent does not approve of that fraction.
		
		There are exactly $s$ agents that approve of fraction $\frac{5 \cdot 2 - 2}{s}$, namely the agents in $B_{2}^{add} \cup R_{2}^{red} \cup B_{2}^{fill}$, any other agent disapproves of that fraction.
		
		Since $R_{2}^{red} \cup B_{2}^{fill} \setminus \{b\} \subseteq S^n$, the room $\pi(b)$ must contain at least $|R_{2}^{red} \cup B_{2}^{fill} \setminus \{b\}| = s-2 >2$ agents that do not approve of the fraction $\theta(\pi(b))$. This contradicts $|D_\pi^-|=1$ and $|D_\pi^n|=1$, thus $\{\tilde{b}_2\} = B_a^{add} \nsubseteq S^n$.
		
		Therefore the blue agents in room $S^n$ must be exactly the agents in $B_2^{fill}$ and we have
		$$\{a^n\} \cup R_2^{red} \cup B_2^{fill} \in \pi.$$
		% If not, then we contradict $|D_\pi^n|=1$ and $|D_\pi^-|=1$
		
		\item $(\hat{R}^{set} \cup R_3^{red}) \setminus \{r^n, r^-\} \cup B_3^{fill} \in \pi$.\\
		Let $\hat{r}_j \in \hat{R}^{set}$ where $1 \leq j \leq 3$ such that $\hat{r}_j \in D_\pi^+$ and let $S^j$ denote the room in $\pi$ that contains $\hat{r}_j$. Since $\hat{r}_j \in D_\pi^+$, we have that $\theta(S^j) = \frac{5\cdot 3 - 1}{s}$. Note that $S^j$ cannot contain the agents $r^n$ and $r^-$ as they approve of fraction $\frac{5\cdot 3 - 1}{s}$. Thus every agent in $S^j$ must approve of fraction $\frac{5\cdot 3 - 1}{s}$.
		
		The red agents that approve of fraction $\frac{5\cdot 3 - 1}{s}$ that may be contained in $S^j$ are exactly $(\hat{R}^{set} \cup R_3^{red}) \setminus \{r^n, r^-\}$. There are $5 \cdot 3 -1$ agents in $(\hat{R}^{set} \cup R_3^{red}) \setminus \{r^n, r^-\}$. Thus the red agents in $S^j$ must be exactly $(\hat{R}^{set} \cup R_3^{red}) \setminus \{r^n, r^-\}$. 
		
		The blue agents that approve of fraction $\frac{5\cdot 3 - 1}{s}$ are exactly $B_3^{fill}\cup B_3^{add}$ of which there are $s-(5\cdot 3-2)$.
		
		Assume that $\{\tilde{b}_3\} = B_3^{add} \subseteq S^j$. This means that there exists a blue agent $b \in B_3^{fill}$ such that $b \notin S^j$. Since $D_\pi^n \cup D_\pi^- \subseteq \hat{R}^{set} \cup R_3^{red}$, we have that $b \in D_\pi^+$. Therefore we have $\theta(\pi(b)) = \frac{5\cdot 3 - 1}{s}$ or $\theta(\pi(b)) = \frac{5\cdot 3 - 2}{s}$.
		
		There are exactly $s+2$ agents that approve of fraction $\frac{5 \cdot 3 - 1}{s}$, namely the agents in $\hat{R}^{set} \cup R_{3}^{red} \cup B_{3}^{fill}$, any other agent disapproves of that fraction.
		
		There are exactly $s$ agents that approve of fraction $\frac{5 \cdot 3 - 2}{s}$, namely the agents in $B_{3}^{add} \cup R_{3}^{red} \cup B_{3}^{fill}$, any other agent disapproves of that fraction.
		
		Since $(\hat{R}^{set} \cup R_3^{red}) \setminus \{r^n, r^-\} \cup B_{3}^{fill} \setminus \{b\} \subseteq S^j$, the room $\pi(b)$ must have at least $|(\hat{R}^{set} \cup R_3^{red}) \setminus \{r^n, r^-\} \cup B_{3}^{fill} \setminus \{b\}| - 2 = s - 5 > 2$ agents that disapprove of the fraction $\theta(\pi(b))$. This contradicts $|D_\pi^-| = 1$, thus $\{\tilde{b}_3\} = B_3^{add} \nsubseteq S^j$.
		
		Therefore the blue agents in room $S^n$ must be exactly $B_3^{fill}$ and we have
		$$(\hat{R}^{set} \cup R_3^{red}) \setminus \{a^n, a^-\} \cup B_3^{fill} \in \pi.$$
		% If not, then we contradict $|D_\pi^n|=1$ and $|D_\pi^-|=1$
		
		\item\label{lem25prop6} $B^{even} \cup \bigcup\limits_{j \in [3]}B_j^{add} \cup \bigcup\limits_{A_{j-3}\in C'}B_j^{add} \in \pi$.\\
		Since $D_\pi^n \cup D_\pi^- \subseteq \hat{R}^{set} \cup R_3^{red}$, we have that $B^{even}\subseteq D_\pi^+$. Thus any evening agent $b \in B^{even}$ must be in a room of fraction $0$.
		
		There are exactly $s + 6q + 3$ agents that approve of fraction $0$, namely the agents in $B^{even} \cup B^{mon} \cup B^{add}$, any other agent disapproves of that fraction. Since $s < s + 6q + 3 < 2s$, we have at most 1 room with fraction $0$. Otherwise we have at least $2s - (s + 6q + 3) = 4q + 42 + 2m$ agents in $D_\pi^-$ which would contradict $|D_\pi^-| = 1$. Thus every evening agent must be assigned to the same room, i.e., for any $b', b'' \in B^{even}$ we have $\pi(b') = \pi(b'')$.
		
		By Property \ref{lem25prop1} the room of evening agent $b$ cannot contain agents from $B^{mon}$. By Property \ref{lem25prop3} the room of evening agent $b$ cannot contain agents from $\bigcup\limits_{A_j \notin C'}B_j^{add}$. Since $D_\pi^n \cup D_\pi^- \subseteq \hat{R}^{set} \cup R_3^{red}$, no blue agent may be assigned to a room by $\pi$ with a fraction that it disapproves of. Therefore we have $\pi(b) \subseteq B^{even} \cup \bigcup\limits_{j \in [3]}B_j^{add} \cup \bigcup\limits_{A_{j-3}\in C'}B_j^{add}$. Since there are $s-2m-3$ evening agents, $\pi(b)$ must contain exactly $2m+3$ agents from $\bigcup\limits_{j \in [3]}B_j^{add} \cup \bigcup\limits_{A_{j-3}\in C'}B_j^{add}$. Note that $|\bigcup\limits_{j \in [3]}B_j^{add}| = 3$ and $|\bigcup\limits_{A_j \in C'}B_j^{add}|= \frac{m}{3}\cdot 6 = 2m$. Thus we have that
		$$B^{even} \cup \bigcup\limits_{j \in [3]}B_j^{add} \cup \bigcup\limits_{A_{j-3}\in C'}B_j^{add} \in \pi.$$
		
		\item\label{lem25prop7} $\{r^-\} \cup R_1^{red} \cup B_1^{fill} \in \pi$.\\
		By Properties \ref{lem25prop1}-\ref{lem25prop6}, we have exactly $s$ remaining agents of which their assigned room has not been specified yet. These agents are $\{r^-\} \cup R_1^{red} \cup B_1^{fill}$, where $r^- \in \hat{R}^{set} \cup R_3^{red}$. Thus we have that $$\{r^-\} \cup R_1^{red} \cup B_1^{fill} \in \pi.$$
\end{enumerate}

By Properties \ref{lem25prop1}-\ref{lem25prop7}, we have that $\pi$ must be a reduced-type outcome. \qedhere
\end{proof}

\begin{lemma}\label{lem3.9}
	For any non-reduced-type outcome $\pi$, there exists some reduced-type outcome $\pi'$ that is strictly more popular than $\pi$.
\end{lemma}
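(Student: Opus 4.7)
The plan is, given a non-reduced-type outcome $\pi$, to build a reduced-type outcome $\pi'$ by choosing the pair $(a_1,a_2)\in \hat{R}^{set}\cup R_3^{red}$ (i.e., the agents that will be sad and neutral in $\pi'$) so that the switch $\pi\to\pi'$ worsens fewer agents than it improves. A useful preliminary observation is that $D_a^n$ is non-empty only for agents in $\hat{R}^{set}\cup R_3^{red}$, so $D_\pi^n\subseteq \hat{R}^{set}\cup R_3^{red}$ for every outcome $\pi$. Combining this with \cref{lemma3.1}, \cref{lemma3.2}, and \cref{lemma3.7}, the non-reduced-type outcomes split into two regimes: either $|D_\pi^-|\geq 2$, or $|D_\pi^-|=1$ with $|D_\pi^n|\geq 2$.

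First I would dispatch the generous sub-cases in which both $a_1$ and $a_2$ can be drawn from $D_\pi^-\cup D_\pi^n$. Take $a_1\in D_\pi^-\cap(\hat{R}^{set}\cup R_3^{red})$ when possible, and $a_2$ from $(D_\pi^-\cup D_\pi^n)\cap(\hat{R}^{set}\cup R_3^{red})\setminus\{a_1\}$. Neither agent is worsened under the switch (their status in $\pi$ is already $-$ or $n$), so $N(\pi,\pi')=\emptyset$; meanwhile $a_2$ and every remaining agent of $D_\pi^-\cup D_\pi^n$ strictly improves under $\pi'$ (from $-$ to $n$ or $+$, or from $n$ to $+$), so $|N(\pi',\pi)|\geq 1>0$. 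This already handles, for instance, the scenario $|D_\pi^-\cap(\hat{R}^{set}\cup R_3^{red})|\geq 2$ and, using $D_\pi^n\subseteq \hat{R}^{set}\cup R_3^{red}$, the scenario where $D_\pi^-$ and $D_\pi^n$ each contribute at least one usable agent.

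The awkward sub-cases are those in which some $a_i$ must be drawn from $D_\pi^+$, costing one downgrade. Here \cref{lemma3.6} is the decisive tool: if $|D_\pi^-\cap(\hat{R}^{set}\cup R_3^{red})|<2$ and $D_\pi^n=\emptyset$, then the contrapositive of \cref{lemma3.6} forces $|D_\pi^n\cup D_\pi^-|\geq 3$, hence $|D_\pi^-|\geq 3$. Weighing at most $|\{a_1,a_2\}\cap D_\pi^+|\leq 2$ downgrades against $|D_\pi^-|-1\geq 2$ upgrades (every $D_\pi^-$ agent other than $a_1$ jumps to $D_{\pi'}^+$) still yields a strict margin; the parallel regime $|D_\pi^-|=1,\,|D_\pi^n|\geq 2$ with the sole $D_\pi^-$ agent outside $\hat{R}^{set}\cup R_3^{red}$ is handled by taking $a_2\in D_\pi^n$ and $a_1\in D_\pi^+\cap(\hat{R}^{set}\cup R_3^{red})$, trading one downgrade for at least two upgrades (the outside $D_\pi^-$ agent and a $D_\pi^n$ agent distinct from $a_2$). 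The main obstacle is precisely the boundary instance $|D_\pi^-|=2$ with exactly one sad agent inside the circular set and no neutral agents; \cref{lemma3.6} is the tool that excludes this scenario, and the same lemma, coupled with the observation $D_\pi^n\subseteq \hat{R}^{set}\cup R_3^{red}$, lets the uniform counting close in every remaining sub-case.
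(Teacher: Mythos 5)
Your proposal takes essentially the same route as the paper's proof: it uses \cref{lemma3.1}, \cref{lemma3.2}, \cref{lemma3.6}, and \cref{lemma3.7} to split the non-reduced-type outcomes into the regimes $|D_\pi^-|\geq 2$ and $|D_\pi^-|=1,\ |D_\pi^n|\geq 2$, and in each regime designates two agents of $\hat{R}^{set}\cup R_3^{red}$ as the sad/neutral pair of a reduced-type outcome $\pi'$ so that upgrades outnumber downgrades, which is precisely the paper's construction and counting. The one step you should tighten is the claim that ``at most $|\{a_1,a_2\}\cap D_\pi^+|\leq 2$ downgrades against $|D_\pi^-|-1\geq 2$ upgrades still yields a strict margin'' --- two versus two is not strict; you need the linked observation that two downgrades occur only when $a_1,a_2\in D_\pi^+$, in which case $a_1\notin D_\pi^-$ and all $|D_\pi^-|\geq 3$ sad agents upgrade, whereas if $a_1\in D_\pi^-$ there is at most one downgrade (this is exactly how the paper's case $|D_\pi^-|>2$ closes).
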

\begin{proof}
	Let $\pi$ be an arbitrary non-reduced-type outcome.
	By \cref{lemma3.7} we have that $|D_{\pi}^n| \neq 1 \vee |D_{\pi}^-| \neq 1$. Consider the following cases.
	\begin{enumerate}
		\item $|D_{\pi}^-| \neq 1$.\\\
		By \cref{lemma3.2} and case assumption, we have that $|D_{\pi}^-| \geq 2$. Consider the following cases.
		\begin{enumerate}[\theenumi.1.]
			\item $|D_{\pi}^-| = 2$.\\
			Consider the following cases.
			\begin{enumerate}[\theenumi.\theenumii.1.]
				\item $|D_{\pi}^n| = 0$.\\
				By \cref{lemma3.6}, we have that $D_{\pi}^- \subseteq \hat{R}^{set} \cup R_3^{red}$. Let us write $D_\pi^- = \{a_1, a_2\}$. Since $a_1, a_2 \in \hat{R}^{set} \cup R_3^{red}$, there exists a reduced-type outcome $\pi'$ such that $D_{\pi'}^- = \{a_1\}$ and $D_{\pi'}^n = \{a_2\}$. Thus we have that $\phi(\pi', \pi) = 1$.
				
				\item $|D_{\pi}^n| \geq 1$.\\
				Let us write $D_\pi^- = \{a_1, a_2\}$ and $a_3 \in D_\pi^n$. We have that $a_3 \in \hat{R}^{set} \cup R_3^{red}$, since the agents in $\hat{R}^{set} \cup R_3^{red}$ are the only agents with a corresponding non-empty $D_a^n$. 
				
				Let $\pi'$ be a reduced-type outcome such that $\{a_3\} = D_{\pi'}^n$ and $\{a^-\} = D_{\pi'}^-$. Note that $N(\pi, \pi') \subseteq \{a^-\}$. Consider the following cases.
				\begin{enumerate}[\theenumi.\theenumii.\theenumiii.1.]
					\item $N(\pi, \pi') =  \emptyset$.\\
					Then we have that $a^- \in D_\pi^-$. W.l.o.g. assume that $a_1 = a^-$. We have that $a_2 \in D_{\pi'}^+$, thus $a_2 \in N(\pi', \pi)$. Thus we have that $\phi(\pi',\pi) > 0$, i.e., $\pi'$ is more popular than $\pi$.
					
					\item $N(\pi, \pi') = \{a^-\}$.\\
					Then we have that $a^- \in D_\pi^n \cup D_\pi^+$ and $a_1,a_2 \in D_{\pi'}^+$. Thus $a_1,a_2 \in N(\pi',\pi)$. Therefore we have that $\phi(\pi',\pi) > 0$, i.e., $\pi'$ is more popular than $\pi$.
				\end{enumerate}
			\end{enumerate}
			
			\item $|D_{\pi}^-| > 2$.\\
			Let us write $a_1, a_2, a_3 \in D_\pi^-$. Let $\pi'$ be an arbitrary reduced-type outcome with $\{a^n\} = D_{\pi'}^n$ and $\{a^-\} = D_{\pi'}^-$. Note that $N(\pi, \pi') \subseteq \{a^n, a^-\}$. Consider the following cases.
			\begin{enumerate}[\theenumi.\theenumii.1.]
				\item $a^- \in  D_\pi^-$.\\
				W.l.o.g. assume that $a_1 = a^-$. In this case, we have that $N(\pi, \pi') \subseteq \{a^n\}$. We have that $a_2, a_3 \in N(\pi', \pi)$. Thus $\phi(\pi', \pi) > 0$. 
				
				\item $a^- \notin  D_\pi^-$.\\
				In this case, we have that $N(\pi, \pi') \subseteq \{a^-, a^n\}$. We have that $a_1, a_2, a_3 \in N(\pi', \pi)$. Thus $\phi(\pi', \pi) > 0$. 
				
			\end{enumerate}
		\end{enumerate}
		
		\item $|D_{\pi}^n| \neq 1$.\\\
		Consider the following cases.
		\begin{enumerate}[\theenumi.1.]
			\item $|D_{\pi}^n| = 0$.\\
			By \cref{lemma3.1}, we have that $|D_{\pi}^-| \geq 2$. This case is proven similarly to case 1.1.1. as we have the exact same case assumption, i.e., $|D_{\pi}^n| = 0$ and $|D_{\pi}^-| \geq 2$.
			
			\item $|D_{\pi}^n| \geq 2$.\\
			Let us write $a_1, a_2 \in D_\pi^n$. We have that $\{a_1, a_2\} \subseteq \hat{R}^{set} \cup R_3^{red}$, since the agents in $\hat{R}^{set} \cup R_3^{red}$ are the only agents with a corresponding non-empty $D_a^n$.
			By \cref{lemma3.2}, we have that $|D_{\pi}^-| \geq 1$. 
			Let us write $a_3 \in D_\pi^-$. Let us consider the following cases regarding $a_3$.
			\begin{enumerate}[\theenumi.\theenumii.1.]
				\item $a_3 \in \hat{R}^{set} \cup R_3^{red}$.\\
				By the definition of reduced-type outcome, we can construct a reduced-type outcome $\pi'$ such that $D_{\pi'}^- = \{a_3\}$, $D_{\pi'}^n = \{a_2\}$, and $a_1 \in D_{\pi'}^+$. 
				
				We have that $N(\pi, \pi') = \emptyset$ and $|N(\pi',\pi)| \geq 1$. Therefore $\phi(\pi', \pi) > 0$.
				
				\item $a_3 \notin \hat{R}^{set} \cup R_3^{red}$.\\
				By the definition of reduced-type outcome, we can construct a reduced-type outcome $\pi'$ such that $D_{\pi'}^- = \{a^-\}$, $D_{\pi'}^n = \{a_1\}$, and $a_2, a_3 \in D_{\pi'}^+$. 
				
				We have that $N(\pi,\pi') \subseteq \{a^-\}$. Additionally we have that $a_2, a_3 \in N(\pi',\pi)$, therefore $\phi(\pi', \pi) > 0$. \qedhere
			\end{enumerate}
		\end{enumerate}
	\end{enumerate}
\end{proof}

\begin{lemma}\label{lemma4}
	An reduced-type outcome $\pi$ is not popular.
\end{lemma}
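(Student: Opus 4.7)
The plan is to mimic the three-room rotation argument used to prove \cref{popnotgartopnonpop} for the auxiliary game $\overline{G}$. The rooms $P_1,P_2,P_3$ of a reduced-type outcome $\pi=\pi_{C'}$ have fractions $\tfrac{5\cdot 1 - 1}{s}$, $\tfrac{5\cdot 2 - 1}{s}$ and $\tfrac{5\cdot 3 - 1}{s}$ respectively, and they contain the distinguished agents $a_1 \in P_1$, $a_2 \in P_2$, and $a_3 \in P_3$, all drawn from $\hat{R}^{set} \cup R_3^{red}$. By construction $D_{\pi}^{-}=\{a_1\}$ and $D_{\pi}^{n}=\{a_2\}$, while every other agent (including $a_3,a_4,a_5$) lies in $D_{\pi}^{+}$.

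My first step will be to define the candidate outcome $\pi'$ obtained from $\pi$ by the cyclic rotation
\[
P_1' = (P_1 \setminus \{a_1\}) \cup \{a_3\}, \quad P_2' = (P_2 \setminus \{a_2\}) \cup \{a_1\}, \quad P_3' = (P_3 \setminus \{a_3\}) \cup \{a_2\},
\]
leaving every other room of $\pi$ untouched. Since $a_1,a_2,a_3$ are all red agents, each of the three rooms loses exactly one red agent and gains exactly one red agent, so the red-fractions of $P_1',P_2',P_3'$ equal those of $P_1,P_2,P_3$ respectively. Consequently, every agent outside $\{a_1,a_2,a_3\}$ sees the same fraction in $\pi'$ as in $\pi$, and in particular contributes $0$ to $\phi(\pi',\pi)$.

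Next I will read off the preferences of $a_1,a_2,a_3$ from the preference table: every agent of $\hat{R}^{set} \cup R_3^{red}$ approves of $\tfrac{5\cdot 3 - 1}{s}$, is neutral towards $\tfrac{5\cdot 2 - 1}{s}$, and disapproves of $\tfrac{5\cdot 1 - 1}{s}$. Therefore $a_1$ moves from a disapproved room to a neutral one, $a_2$ moves from a neutral room to an approved one, while $a_3$ moves from an approved room to a disapproved one. Hence $N(\pi',\pi)=\{a_1,a_2\}$ and $N(\pi,\pi')=\{a_3\}$, giving $\phi(\pi',\pi)=2-1=1>0$, so $\pi'$ is strictly more popular than $\pi$ and $\pi$ cannot be popular.

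The only points requiring care are book-keeping: checking that $\{a_1,a_2,a_3\}$ are indeed distinct (guaranteed since $\{a_1,\dots,a_5\}\subseteq \hat{R}^{set}\cup R_3^{red}$ is a $5$-element set), verifying that the swap preserves the room-size constraint (each of $P_1',P_2',P_3'$ still has size $s$), and confirming from the definition of $\pi_{C'}$ that $a_3$ lies in $P_3$ and that $a_1,a_2,a_3$ each map to the stated fraction in their new room. None of these should pose any real obstacle; the argument is essentially the same three-way cycle that powers \cref{popnotgartopnonpop}, which is precisely why the circular agents were introduced into the construction.
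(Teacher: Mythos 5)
Your proposal is correct and is essentially identical to the paper's own proof: the paper also rotates $a_1,a_2,a_3$ cyclically through the three circular rooms, obtains $N(\pi',\pi)=\{a_1,a_2\}$ and $N(\pi,\pi')=\{a_3\}$, and concludes $\phi(\pi',\pi)>0$. Your explicit remarks that the rotation preserves each room's red-fraction and that the resulting outcome is well-formed are sound book-keeping that the paper leaves implicit.
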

\begin{proof}
	Let $\pi$ be some reduced-type outcome. Since $\pi$ is a reduced-type outcome, we can write 
	$$\{a_j\} \cup R^{red}_j \cup B_j^{fill} \in \pi \text{ for $j \in [2]$,}$$ 
	and 
	$$\{a_3, a_4, a_5\} \cup (R^{red}_3 \setminus \{a_1, \dots, a_5\}) \cup B_3^{fill} \in \pi,$$
	where $\{a_1, \dots, a_5\} \subseteq \hat{R}^{set} \cup R_3^{red}$.
	
	Let us construct an outcome $\pi'$ as follows:
	\begin{align*}
		& 	& \pi'= \pi &\setminus \{\{a_1\} \cup R^{red}_1 \cup B_1^{fill}\} \\
		& 	& 		& \setminus \{\{a_2\} \cup R^{red}_2 \cup B_2^{fill}\} \\
		& 	& 		& \setminus \{\{a_3, a_4, a_5\} \cup (R^{red}_3 \setminus \{a_1, \dots, a_5\}) \cup B_3^{fill}\} \\
		\\
		&	&		& \cup \{\{a_3\} \cup R^{red}_1 \cup B_1^{fill}\} \\
		& 	& 		& \cup \{\{a_1\} \cup R^{red}_2 \cup B_2^{fill}\} \\
		& 	& 		& \cup \{\{a_2, a_4, a_5\} \cup (R^{red}_3 \setminus \{a_1, \dots, a_5\}) \cup B_3^{fill}\}.
	\end{align*}
	That is, we ``rotate'' the agents $a_1, a_2, a_3$ in $\pi$ to obtain $\pi'$. We have that $N(\pi', \pi) = \{a_1, a_2\}$ and $N(\pi, \pi') = \{a_3\}$. Thus $\pi'$ is more popular than $\pi$. Note that $\pi'$ itself is also an reduced-type outcome.
\end{proof}

\begin{theorem}
	Determining whether a popular outcome exists in a roommate diversity game is co-NP-hard, even if the preferences are trichotomous.	
\end{theorem}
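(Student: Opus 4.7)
The plan is to combine the three key lemmas already established for this reduction into a single equivalence showing that the constructed roommate diversity game $G$ admits a popular outcome if and only if the X3C instance $(X,C)$ has no solution. Since the reduction is polynomial in the size of $(X,C)$ and X3C is NP-complete, the co-NP-hardness of popularity existence follows immediately.

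For the first direction, I would invoke \cref{lemma0}: if $(X,C)$ has no solution, then $\pi_{mon}$ is a popular outcome of $G$, so the instance is a YES-instance of the popularity problem. This direction requires no additional argument beyond citing \cref{lemma0}.

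For the converse direction, I would assume $(X,C)$ has a solution and argue that no outcome of $G$ can be popular. Let $\pi$ be any outcome of $G$ and split into two cases according to whether $\pi$ is a reduced-type outcome. If $\pi$ is not a reduced-type outcome, then by \cref{lem3.9} there exists a reduced-type outcome $\pi'$ with $\phi(\pi',\pi) > 0$, so $\pi$ is not popular. If $\pi$ is a reduced-type outcome, then by \cref{lemma4} it is dominated by the ``rotated'' reduced-type outcome constructed in that lemma, so again $\pi$ is not popular. Either way, no outcome is popular, so the instance is a NO-instance of the popularity problem.

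The two directions together yield the equivalence ``$G$ has a popular outcome iff $(X,C)$ has no solution''. Since the construction of $G$ from $(X,C)$ is clearly polynomial in $|X|+|C|$, and since X3C is NP-complete \cite{10.5555/574848}, this reduction witnesses co-NP-hardness. The trichotomous-preferences qualifier follows by inspection of the preference tables used in the reduction, where only the circular set agents $\hat{r}_1,\hat{r}_2$, the agent $\hat{r}_3$, and the redundant agents $r_3^p \in R_3^{red}$ have nonempty $D_a^n$. There is no main obstacle remaining, as all the structural and popularity-margin calculations have already been carried out in \cref{lemma0,lem3.9,lemma4}; the theorem is essentially a bookkeeping step that assembles these lemmas into the final reduction statement.
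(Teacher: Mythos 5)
Your proposal is correct and follows essentially the same route as the paper: \cref{lemma0} gives popularity of $\pi_{mon}$ when $(X,C)$ has no solution, while \cref{lem3.9} and \cref{lemma4} together rule out any popular outcome (non-reduced-type and reduced-type, respectively) when a solution exists. The additional remarks on polynomiality of the reduction and on which agents make the preferences trichotomous are consistent with the construction and add nothing that conflicts with the paper's argument.
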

\begin{proof}
	From \cref{lemma0}, if $(X,C)$ has no solution, then we have a popular outcome, namely $\pi_{mon}$.
	From \cref{lem3.9} and \ref{lemma4}, if $(X,C)$ has a solution, then we have no popular outcome. Therefore determining the existence of a popular outcome for a Roommate diversity Game is co-NP-hard.
\end{proof}
\section{Conclusion}
We have demonstrated that determining the existence of a (strictly) popular outcome for a roommate diversity game is co-NP-hard and computing a mixed popular outcome is not possible in polynomial time, unless P=NP. Even when the preferences are tri- or dichotomous, the problem remains intractable. As we have only demonstrated hardness, a potential avenue for future research would be to demonstrate completeness for a certain complexity class. We conjecture that the problem is $\Pi_2^p$-complete.

A popular outcome is guaranteed to exist in a roommate diversity game, when the room size is fixed to 2. Additionally, it is possible to compute a popular outcome in polynomial time. As the problem becomes tractable when fixing the room size to 2, it may be possible to construct a fixed-parameter tractable algorithm with the room size as parameter that determines the existence of a (strictly) popular outcome in a roommate diversity game. This is also left for future research.

\bibliography{lipics-v2021-sample-article}

\end{document}